\documentclass[manuscript,natbib=false,nonacm]{acmart}
\AtBeginDocument{%
  }

\newtheorem{remark}{Remark}
\usepackage[bibliography=common]{apxproof}

\AtEndPreamble{%
\theoremstyle{acmplain}
\newtheoremrep{theorem}{Theorem}[section]
\newtheoremrep{proposition}[theorem]{Proposition}
\newtheoremrep{lemma}[theorem]{Lemma}
\newtheoremrep{corollary}[theorem]{Corollary}
\newtheoremrep{remark}[theorem]{Remark}
\newtheoremrep{notation}[theorem]{Notation}
\theoremstyle{acmdefinition}
\newtheoremrep{definition}[theorem]{Definition}
\newtheoremrep{example}[theorem]{Example}
}

\usepackage{macros}

\usepackage{booktabs}
\usepackage{threeparttable}

\usepackage{lineno}
\usepackage{amsfonts}
\usepackage{bussproofs}
\usepackage{amsmath}
\usepackage{xcolor}
\usepackage{multirow}
\usepackage{makecell}
\usepackage{lscape}
\usepackage{stmaryrd}
\usepackage{graphicx}
\usepackage{comment}
\usepackage{mdframed}
\usepackage{enumerate}
\usepackage{float}

\usepackage{tikz}

\newcommand{\UniMultiset}[1]{[#1]}
\newcommand{\AntOmSetA}{\Omega}
\newcommand{\AntVecA}{\mathbf{x}}
\newcommand{\AntVecB}{\mathbf{y}}

\usepackage{hyperref}
\usepackage{enumitem}

\RequirePackage[
  datamodel=acmdatamodel,
  style=acmnumeric
  ]{biblatex}

\begin{document}

\title{Hypersequent Calculi Have Ackermannian Complexity}


\author{A. R. Balasubramanian}
\email{bayikudi@mpi-sws.org}
\orcid{0000-0002-7258-5445}
\affiliation{%
  \institution{Max Planck Institute for Software Systems (MPI-SWS)}
  \city{Saarbrücken}
  \country{Germany}
}

\author{Vitor Greati}
\email{v.rodrigues.greati@rug.nl}
\orcid{1234-5678-9012}
\author{Revantha Ramanayake}
\email{d.r.s.ramanayke@rug.nl}
\orcid{0000-0002-7940-9065}
\affiliation{%
  \institution{Bernoulli Institute, University of Groningen}
  \city{Groningen}
  \country{The Netherlands}
}

\renewcommand{\shortauthors}{Balasubramanian, Greati and Ramanayake}

\begin{abstract}
For substructural logics with contraction or weakening admitting cut-free sequent calculi, proof search was analyzed using well-quasi-orders on $\mathbb{N}^d$ (Dickson’s lemma), yielding Ackermannian upper bounds via controlled bad-sequence arguments. For hypersequent calculi, that argument lifted the ordering to the powerset, since a hypersequent is a (multi)set of sequents. This induces a jump from Ackermannian to hyper-Ackermannian complexity in the fast-growing hierarchy, suggesting that cut-free hypersequent calculi for extensions of the commutative Full Lambek calculus with contraction or weakening ($\mathbf{FL_{ec}}$/$\mathbf{FL_{ew}}$) inherently entail hyper-Ackermannian upper bounds.
We show that this intuition does not hold: every extension of $\mathbf{FL_{ec}}$ and $\mathbf{FL_{ew}}$ admitting a cut-free hypersequent calculus has an Ackermannian upper bound on provability.

To avoid the powerset, we exploit novel dependencies between individual sequents within any hypersequent in backward proof search. The weakening case, in particular, introduces a Karp-Miller style acceleration, and it improves the upper bound for the fundamental fuzzy logic $\mathbf{MTL}$. Our Ackermannian upper bound is optimal
for the contraction case (realized by the logic $\mathbf{FL_{ec}}$).
\end{abstract}

\begin{CCSXML}
<ccs2012>
   <concept>
       <concept_id>10003752.10003777.10003787</concept_id>
       <concept_desc>Theory of computation~Complexity theory and logic</concept_desc>
       <concept_significance>500</concept_significance>
       </concept>
   <concept>
       <concept_id>10003752.10003790.10003792</concept_id>
       <concept_desc>Theory of computation~Proof theory</concept_desc>
       <concept_significance>500</concept_significance>
       </concept>
   <concept>
       <concept_id>10003752.10003790.10003801</concept_id>
       <concept_desc>Theory of computation~Linear logic</concept_desc>
       <concept_significance>300</concept_significance>
       </concept>
 </ccs2012>
\end{CCSXML}

\ccsdesc[500]{Theory of computation~Proof theory}
\ccsdesc[500]{Theory of computation~Complexity theory and logic}
\ccsdesc[300]{Theory of computation~Linear logic}

\keywords{Hypersequent calculi,
substructural logics,
Ackermannian complexity,
well quasi orders}

\received{22 February 2026}

\maketitle


\section{Introduction}
\label{sec:introduction}
Substructural logics arise by omitting or modifying structural rules such as contraction, weakening, exchange, or associativity from classical or intuitionistic proof calculi. This yields a broad family of systems---including linear logic, relevance logics, mathematical fuzzy logics, and the full Lambek calculus~\cite{Pa, Restall, GalJipKowOno07, MPT}. These systems are able to reasonable about resources in a broad sense, since assumptions can no longer be duplicated or discarded freely. 

The basic substructural logics are Full Lambek calculus $\mathbf{FL}$ extended with the structural rules of exchange, weakening, or contraction, denoted $\mathbf{FL_S}$ ($\mathrm{S}\subseteq\{e,w,c\}$). These have a cut-free sequent calculus, which implies a significant restriction on the proof search space, and this enables the proof-theoretic study of computational complexity for the logic. For provability---the topic of interest in this work---all the basic substructural logics excepting $\mathbf{FL}_{\mathbf{ec}}$ are PSPACE-complete. Meanwhile, $\mathbf{FL}_{\mathbf{ec}}$ was famously shown by Kripke~\cite{Kri59} in 1959 to be decidable, and Urquhart~\cite{urquhart1999} established its non-primitive complexity; indeed, it is complete for the class $\mathbf{F}_\omega$ in Schmitz's~\cite{schmitz2016hierar} ordinal-indexed fast-growing complexity hierarchy. Informally, a problem in $\mathbf{F}_\omega$ has complexity bounded by a single Ackermannian-type function composed with primitive recursive functions.

To understand the source of the $\mathbf{F}_\omega$ upper bound, observe that proof search~$P$ in $\mathbf{FL}_{\mathbf{ec}}$ can be restricted to branches that are \emph{bad sequences} under what is essentially the product ordering on $\mathbb{N}^d$. Dickson's lemma~\cite{dickson1913} states that this is a well-quasi-ordering (wqo). I.e., every branch in~$P$ is a sequence $(s_i)_{i\in\mathbb{N}}$ of sequents with the property that there is no $i<j$ such that $s_i\leq s_j$ (read as $s_j$ is contractible to $s_i$). The point of the wqo is that every bad sequence wrt it is finite. Hence we can do proof search in $\mathbf{FL}_{\mathbf{ec}}$ and termination is guaranteed, and decidability follows. Successive jumps in this bad sequence are controlled in the sense that there is a primitive recursive function~$g$ (depending only on proof calculus) and a norm $\UniNorm{\cdot}{}$ on sequents such that $\UniNorm{s_i}{}<g^i(\UniNorm{s_0}{})$. Every such controlled bad sequence has a maximum length that is given by a \emph{length theorem}; in the case of $\leq$, this is an Ackermanian-type function in $\UniNorm{s_0}{}$ (Figueira et al~\cite{figueira2011}). The claimed upper bound follows. 

Most extensions of $\mathbf{FL}_{\mathbf{ec}}$ do not admit cut-free sequent calculi, but infinitely many do in the generalized setting of cut-free hypersequent calculi~\cite{CiaGalTer08,CiaGalTer17}; specifically, axioms from every level of the substructural hierarchy~\cite{CiaGalTer08,Jer16} excepting the final level $\mathcal{N}_3$ which contains axioms like distributivity that often lead to undecidability in the substructural setting. A \emph{hypersequent}~\cite{Min68,Pot83,Avr87} is a multiset of sequents, written as $s_0 \VL \ldots \VL s_k$. Hypersequent rules properly generalize sequent rules by allowing multiple components in the conclusion to be utilized simultaneously. Ciabattoni et al~\cite{CiaGalTer08,CiaGalTer17} introduced a general theory to compute analytic hypersequent calculi for non-classical and substructural logics, including intermediate, many-valued, and fuzzy logics.
Balasubramanian et al.~\cite{BalLanRam21LICS} extended the complexity analysis of $\mathbf{FL_{ec}}$ to these hypersequent calculi, by lifting the wqo on $\mathbb{N}^d$ to the minoring orderings on $\mathcal{P}_f(\mathbb{N}^d)$. Since the length theorem on such powerset orderings is hyper-Ackermannian (i.e., in $\mathbf{F}_{\omega^\omega}$) (Balasubramanian~\cite{bala2020}), this showed that every axiomatic extension of $\mathbf{FL_{ec}}$ that has a cut-free hypersequent calculus has hyper-Ackermannian complexity. The same upper bound is obtained in \cite{BalLanRam21LICS} for extensions of $\mathbf{FL_{ew}}$; this time, the majoring wqo was coupled with forward proof search. 
That yielded the first upper bound for the fundamental fuzzy logic 
$\mathbf{MTL}$~\cite{EstGod01}---a longstanding open question~\cite{Han2017}---which is axiomatized over $\mathbf{FL_{ew}}$ by the prelinearity axiom $(p\imp q)\lor (q\imp p)$.

The above line of reasoning suggested that the jump from sequents to hypersequents (necessitated by the need for a cut-free proof calculus) also entails a jump to~$\mathbf{F}_{\omega^\omega}$. In this work, we show that this intuition does \emph{not} hold: we prove that every extension of $\mathbf{FL_{ec}}$ and $\mathbf{FL_{ew}}$ admitting a cut-free hypersequent calculus has an Ackermannian upper bound on provability. The key technical insight is avoiding a lift to a powerset ordering. Our solution is to exploit novel dependencies between components in a hypersequent, so every hypersequent---now viewed as a sequence with the sequent components placed in the order of being added---in the backward proof search is itself a controlled bad sequence (in the sequent ordering!). 
For the case of contraction, our bound is optimal in the sense that $\mathbf{FL_{ec}}$ already realizes it.

The weakening case is considerably more complicated. 
Vanilla backward proof search will not terminate so we extend it with a Karp–Miller style acceleration~\cite{KarpM69}.
Due to the interaction of multiple sequents within a hypersequent, the precise definition here is much more sophisticated than the accelerations used for other problems in automata theory, such as Petri nets coverability or in counter automata~\cite{KarpM69,LerouxS05,BardinFLP08}. We then couple this acceleration with the identification of a proof-theoretic gadget to implement a very specific form of contraction.
An immediate corollary lowers the upper bound for $\mathbf{MTL}$ to Ackermannian. That also applies to other (standard complete) fuzzy logics in the literature e.g., $n$-contractive extensions~\cite{CiaEstGod02,HorNogPet07} and the wmn-extensions~\cite{BalCiaSpe12} of $\mathbf{MTL}$.

Adding structure to bad sequences via proof calculi is of independent relevance to the infinite-state system community since complexity based on wqos may be improved by a finer analysis on the bad sequences (see e.g., Lazic and Schmitz~\cite{LazicS16,LazicS21}). Specifically, the proof calculi are variations of alternating branching vector addition systems with states (BVASS), where a configuration along a run can interact also with configurations visited before.

\section{Preliminaries}
\label{sec:preliminaries}

\subsection{Proof theory of substructural logics via hypersequents}
\label{sec:prelims-proof-theory}

Let $\UniPropVars$ be a countably infinite set of \emph{propositional variables}. 
The set $L$ of  \emph{formulas} is defined using the following grammar.
\[
\UniFmA:=\UniPropA\in \UniPropVars 
\VL 1 \VL 0 \VL \UniFmA\land \UniFmA \VL \UniFmA\lor \UniFmA \VL \UniFmA\fus \UniFmA \VL \UniFmA\imp \UniFmA
\]
A \emph{logic} (over $L$)
is a consequence relation $\vdash$ over $L$.
Given a logic $\vdash$,
the \emph{provability}
problem refers to
deciding
whether, given $\UniFmA \in L$, it is the case that
$\varnothing \vdash \UniFmA$.
A \emph{sequent} is an expression of the form $\UniSequent{\UniMSetFmA}{\UniMSetSucA}$, where~$\UniMSetFmA$ (the \emph{antecedent}) is a finite multiset of formulas and~$\UniMSetSucA$ (the \emph{succedent}) is a list that is either empty or consists of a single formula (we call it a \emph{succedent-list} or \emph{stoup}). We use $\UniMSetFmA, \UniMSetFmB, \UniMSetFmC, \UniMSetSucA$ to denote finite multisets of formulas.
The \emph{multiplicity} of a formula $\UniFmA$ 
in $\UniMSetFmA$ is the number of copies of $\UniFmA$ in $\UniMSetFmA$.
We will denote an enumeration
of the elements in $\UniMSetFmA$ (including repetitions) using $\UniMultiset{\UniFmA_1,\ldots,\UniFmA_k}$.
The \emph{support} of any multiset
is the set of its elements with non-zero multiplicity.



A \textit{hypersequent} is a (possibly empty) finite multiset of sequents. It is written explicitly as follows, for $k\geq 0$,
\begin{equation}
\label{hypersequent}
\UniSequent{\UniMSetFmA_{1}}{  \UniMSetSucA_{1}} \VL \ldots \VL \UniSequent{\UniMSetFmA_{k}}{  \UniMSetSucA_{k}}
\end{equation}
Each sequent~$\UniSequent{\UniMSetFmA_{i}}{\UniMSetSucA_{i}}$ is called a \textit{component} of the hypersequent.
For a hypersequent~$h$ and sequent~$s$, we write $s \in h$ to mean that $s$ is a component of $h$.
Given a finite
    set  $\UniSubfmlaHyperseqSet$ of formulas, an \emph{$\UniSubfmlaHyperseqSet$-(hyper)sequent}
    is a (hyper)sequent whose  formulas are all in $\UniSubfmlaHyperseqSet$.
A hypersequent containing no repetitions is called \emph{irredundant}.

A \emph{hypersequent calculus} is a set of \textit{hypersequent rules}.
Each rule has a \textit{conclusion} hypersequent and some number of \textit{premise} hypersequent(s).
All the calculi in this paper have the property of having only finitely many rules with the same conclusion.

Commonly, rules are presented
as instances of  \emph{rule schemas} (see Example~\ref{ex:rule-instances}).
 Rule schemas and rules are usually presented in fraction notation with the premises listed above the fraction line and the conclusion below the line (see Figure~\ref{figure-HFLec}).
A rule (schema) with no premises is called an \textit{initial hypersequent}
or \textit{axiomatic} rule.
We will often see a hypersequent calculus as a collection of
rule schemas, meaning
that its rules are all the possibles instantiations of these schemas, also referred to as \emph{rule instances}. 

In order to properly define rule schemas we need to consider variables of different types, depending on whether they will be instantiated by formulas, multisets of formulas, sequents, or hypersequents.
The conclusion and every premise of a rule schema has the form $\UniHyperSequentCompA_{1}|\ldots|\UniHyperSequentCompA_{k}$, where each~$\UniHyperSequentCompA_{i}$ is (i)~a hypersequent-variable (denoted by~$\UniHyperMSetA$), or (ii)~a sequent-variable (denoted by $s$), or (iii)~$\UniSequent{\UniMSetVariableListA}{\UniMSetSucA}$, or (iv)~$\UniSequent{\UniMSetVariableListA}{\phantom{\UniMSetSucA}}$.
Here $\UniMSetVariableListA$ is a finite list of multiset-variables (we use $\UniMSetFmA,\UniMSetFmB,\UniMSetFmC$ to denote multiset-variables) and formulas built from formula-variables $\UniFmA,\UniFmB,\UniFmC$ and proposition-variables $\UniSchPropA,\UniSchPropB,\UniSchPropC$; also, $\UniMSetSucA$ is a  succedent-list-variable.
The variables in the rule schema are collectively referred to as \textit{schematic variables}.
We refer to $\UniHyperSequentCompA_{1}|\ldots|\UniHyperSequentCompA_{k}$ as a \emph{hypersequent} and to each~$\UniHyperSequentCompA_{i}$ as a \textit{component} even though they are built from schematic variables, unlike in (\ref{hypersequent}) where they are built from formulas; this overloading of terminology is standard.

A \textit{rule instance} is obtained from a rule schema by uniformly instantiating the schematic variables with concrete objects of the corresponding type: a hypersequent-variable with a hypersequent, a sequent-variable with a sequent, a list-variable with a list of formulas, and so on. A succedent-list-variable is instantiated by a \emph{stoup} (i.e., a list that is either empty or consists of a single formula).

A component in the premise (resp., conclusion) of a rule schema that is not a hypersequent-variable is called an \textit{active component} (resp., \emph{principal component}).
We use the same terminology for the corresponding component in a rule instance.

\begin{example}
\label{ex:rule-instances}
We consider the rule schema~($\land$R) given by
\begin{center}
\begin{small}
\AxiomC{$\UniHyperMSetA \VL  \UniActiveCompHighlight{\UniSequent{\UniMSetFmA}{ \UniFmA}}$}
\AxiomC{$\UniHyperMSetA \VL  \UniActiveCompHighlight{\UniSequent{\UniMSetFmA}{ \UniFmB}}$}
\RightLabel{($\land$R)}
\BinaryInfC{$\UniHyperMSetA \VL  \UniActiveCompHighlight{\UniSequent{\UniMSetFmA}{ \UniFmA\land\UniFmB}}$}
\DisplayProof
\end{small}
\end{center}
where the active 
components (in the premises) and principal components (in the conclusion) have been highlighted with a surrounding box. Its premises are $\UniHyperMSetA \VL  \UniSequent{\UniMSetFmA}{ \UniFmA}$ and $\UniHyperMSetA \VL \UniSequent{\UniMSetFmA}{ \UniFmB}$, and its conclusion is~$\UniHyperMSetA \VL  \UniSequent{\UniMSetFmA}{ \UniFmA\land\UniFmB}$.
Below we give some rule instances of ($\land$R), where their active components have been highlighted.

\begin{center}
\begin{footnotesize}
\begin{tabular}{c@{\hspace{1em}}c}
\AxiomC{$\UniActiveCompHighlight{\UniSequent{}{\UniPropA}}$}
\AxiomC{$\UniActiveCompHighlight{\UniSequent{}{\UniPropB}}$}
\BinaryInfC{$\UniActiveCompHighlight{\UniSequent{}{\UniPropA\land\UniPropB}}$}
\DisplayProof
&
\AxiomC{$\UniSequent{\UniPropA}{\UniPropA} \VL \UniActiveCompHighlight{
\UniSequent{\UniPropC,\UniPropC}{\UniPropA\land \UniPropB}
}$}
\AxiomC{$\UniSequent{\UniPropA}{\UniPropA} \VL \UniActiveCompHighlight{\UniSequent{\UniPropC,\UniPropC}{\UniPropB}}$}
\BinaryInfC{$\UniSequent{\UniPropA}{\UniPropA} \VL \UniActiveCompHighlight{
    \UniSequent{\UniPropC,\UniPropC}{(\UniPropA\land \UniPropB)\land \UniPropB}
}$}
\DisplayProof
\\[3em]
\multicolumn{2}{c}{
\AxiomC{$\UniSequent{}{\UniPropA\fus\UniPropA} \VL \UniSequent{\UniPropB\imp \UniPropA}{\UniPropA} \VL \UniActiveCompHighlight{\UniSequent{\UniPropC}{ \UniPropA\land \UniPropB}}$}
\AxiomC{$\UniSequent{}{\UniPropA\fus\UniPropA} \VL \UniSequent{\UniPropB\imp \UniPropA}{\UniPropA} \VL \UniActiveCompHighlight{\UniSequent{\UniPropC}{\UniPropB}}$}
\BinaryInfC{$\UniSequent{}{\UniPropA\fus\UniPropA} \VL \UniSequent{\UniPropB\imp \UniPropA}{\UniPropA} \VL \UniActiveCompHighlight{\UniSequent{\UniPropC}{(\UniPropA\land \UniPropB)\land \UniPropB}}$}
\DisplayProof
}
\end{tabular}
\end{footnotesize}
\end{center}
\end{example}

The basic hypersequent calculus in this paper is $\UniFLeExtHCalc{}$ (see Fig~\ref{figure-HFLec}).
We omitted the (cut) rule, as cut elimination holds for this calculus.
$\UniFLeExtHCalc{\UniCProp}$ and $\UniFLeExtHCalc{\UniWProp}$ are obtained by the addition, respectively, of the rules $\{(\UniCRule)\}$ and $\{(\UniIRule),(\UniORule)\}$, listed below.
We follow the standard convention here of writing $(\UniWRule)$ to refer to the two rules $(\UniIRule)$ and $(\UniORule)$.
\begin{center}
\begin{small}
                \AxiomC{$H \VL \UniSequent{\UniMSetFmA,\UniMSetFmB}{\UniMSetSucA}$}
                \RightLabel{($\UniIRule$)}
                \UnaryInfC{$
                H \VL \UniSequent{\UniMSetFmA,\UniFmA,\UniMSetFmB}{\UniMSetSucA}$}
                \DisplayProof
                \; 
    \AxiomC{$H \VL \UniSequent{\UniMSetFmA}{}$}
                \RightLabel{($\UniORule$)}
                \UnaryInfC{$
                H\VL\UniSequent{\UniMSetFmA}{0}$}
                \DisplayProof  
                \;
                \AxiomC{$H \VL\UniSequent{\UniMSetFmA,
    \UniFmA,\UniFmA,\UniMSetFmB}{\UniMSetSucA}$}
                \RightLabel{($\UniCRule$)}
                \UnaryInfC{$H \VL \UniSequent{\UniMSetFmA,\UniFmA,\UniMSetFmB}{\UniMSetSucA}$}
                \DisplayProof
\end{small}
\end{center}

A calculus $\UniHyperCalcA_1$ is said to \emph{extend} (or be an \emph{extension of}) the calculus $\UniHyperCalcA_2$ if it contains all the rules of $\UniHyperCalcA_2$.
If $\UniHyperCalcA$
is an extension of
$\UniFLeExtHCalc{}$,
the \textit{extension of $\UniHyperCalcA$} by a set~$\UniSetRuleSchA$ of rule schemas is the hypersequent calculus~$\UniHyperCalcA \cup \UniSetRuleSchA$, which we write as~$\UniCalcExt{\UniHyperCalcA}{\UniSetRuleSchA}$ following standard practice.

Let $\UniHyperCalcA$ be an extension of $\UniFLeExtHCalc{}$. A \emph{derivation} in $\UniHyperCalcA$ is a finite labelled rooted tree such that the labels of a non-leaf node and its child node(s) are respectively the conclusion and premise(s) of some rule in the calculus.
A \textit{proof} of a hypersequent~$\UniHypersequentA$
in $\UniHyperCalcA$
is a derivation
in $\UniHyperCalcA$
whose root is labelled with $h$
and all leaves
are instances of axiomatic rules.
We write $\vdash_{\UniHyperCalcA} h$
to mean that there is a proof of $h$ in $\UniHyperCalcA$ (we say that $h$ is \emph{provable} in $\UniHyperCalcA$).
A \textit{branch} in a derivation is a path from the root to a leaf.
The \textit{height} of a derivation is the maximum number of nodes on a branch. We write $\vdash_{\UniHyperCalcA}^\alpha h$ (resp., $\vdash_{\UniHyperCalcA}^{\leq\alpha} h$)
to mean that there is a proof of $h$ in $\UniHyperCalcA$ of height $\alpha$ (resp., $\leq\alpha$).
We say that \emph{$\UniHyperCalcA$
is a calculus for a logic
$\vdash$} whenever
$\vdash \UniFmA$
iff $\vdash_{\UniHyperCalcA} (\UniSequent{}{\UniFmA})$.
In that case we also say that \emph{$\vdash$ is determined by $\UniHyperCalcA$},
and by \emph{provability in $\UniHyperCalcA$},
we mean the provability problem for $\vdash$.
We say that $\UniHyperCalcA$ is \emph{analytic} 
when for any hypersequent $h$ it proves, there is a proof of $h$ in $\UniHyperCalcA$ using only subformulas of $h$.

\begin{figure*}
\scriptsize
    \textbf{Initial hypersequents}
    \hspace{20em}
    \textbf{Structural rules}
    
    \vspace{.5em}
    {
    \begin{tabular}{ccc}
        \AxiomC{}
        \UnaryInfC{
        $\UniHyperMSetA 
        \VL 
        \UniSequent{\UniSchPropA}{\UniSchPropA}$
        }
        \DisplayProof
        &
        \AxiomC{}
        \UnaryInfC{
        $\UniHyperMSetA 
        \VL   
        \UniSequent{0}{}$
        }
        \DisplayProof
        &
        \AxiomC{}
        \UnaryInfC{
        $\UniHyperMSetA 
        \VL   
        \UniSequent{}{1}$
        }
        \DisplayProof
        \end{tabular}
    }\hspace{5em}
    {
            \begin{tabular}{rr}
                \AxiomC{$\UniHyperMSetA \VL   
                \UniSequent{\UniMSetFmA}{\UniFmA} \VL \UniSequent{\UniMSetFmA}{\UniFmA}$}
                \RightLabel{(EC)}
                \UnaryInfC{$\UniHyperMSetA \VL  \UniSequent{\UniMSetFmA}{\UniFmA}$}
                \DisplayProof
                \quad
                \AxiomC{$\UniHyperMSetA$}
                \RightLabel{(EW)}
                \UnaryInfC{$\UniHyperMSetA \VL   
                \UniSequent{\UniMSetFmA}{\UniFmA}$}
                \DisplayProof\\[1em]
            \end{tabular}
        }
        
        \textbf{Logical rules}
        \vspace{.5em}
        
        \begin{center}
            \begin{tabular}{cccccc}
            \AxiomC{$\UniHyperMSetA 
            \VL   
            \UniSequent{\UniMSetFmA,\UniFmA_i,\UniMSetFmB}{\UniMSetSucA}$}
            \RightLabel{($\land$L)}
            \UnaryInfC{$\UniHyperMSetA \VL   
            \UniSequent{\UniMSetFmA,\UniFmA_1\land \UniFmA_2,
            \UniMSetFmB}{\UniMSetSucA}$}
            \DisplayProof
            &
            \AxiomC{$\UniHyperMSetA \VL   
            \UniSequent{\UniFmA,\UniMSetFmA}{\UniFmB}$}
            \RightLabel{(${\imp}$R)}
            \UnaryInfC{$\UniHyperMSetA \VL   
            \UniSequent{\UniMSetFmA}{\UniFmA \imp \UniFmB}$}
            \DisplayProof
            &
            \AxiomC{$
            \UniHyperMSetA \VL \UniSequent{\UniMSetFmA}{}
            $}
            \RightLabel{($0$R)}
            \UnaryInfC{$\UniHyperMSetA 
            \VL
            \UniSequent{\UniMSetFmA}{0}$}
            \DisplayProof
            &
             \AxiomC{$
            \UniHyperMSetA 
            \VL   
            \UniSequent{\UniMSetFmA,\UniMSetFmB}{\UniMSetSucA}$}
            \RightLabel{($1$L)}
            \UnaryInfC{$\UniHyperMSetA 
            \VL   
            \UniSequent{\UniMSetFmA,1,\UniMSetFmB}{\UniMSetSucA}$}
            \DisplayProof 
            &
            \AxiomC{$\UniHyperMSetA \VL   
            \UniSequent{\UniMSetFmA}{\UniFmA}$}
            \AxiomC{$\UniHyperMSetA \VL   
            \UniSequent{\UniMSetFmB,\UniFmB,\UniMSetFmC}{\UniMSetSucA}$}
            \RightLabel{(${\imp}$L)}
            \BinaryInfC{$\UniHyperMSetA \VL   
            \UniSequent{\UniMSetFmB,\UniMSetFmA,\UniFmA\imp\UniFmB,\UniMSetFmC}{\UniMSetSucA}$}
            \DisplayProof
            &
            \AxiomC{$\UniHyperMSetA \VL 
            \UniSequent{\UniMSetFmA}{\UniFmA}$}
            \AxiomC{$\UniHyperMSetA \VL 
            \UniSequent{\UniMSetFmB}{\UniFmB}$}
            \RightLabel{($\fus$R)}
            \BinaryInfC{$\UniHyperMSetA \VL   
            \UniSequent{\UniMSetFmA,\UniMSetFmB}{\UniFmA\fus \UniFmB}$}
            \DisplayProof
            \\[0.4cm]
            \multicolumn{6}{c}{
            \begin{tabular}{ccccc}
            \AxiomC{$\UniHyperMSetA \VL  
            \UniSequent{\UniMSetFmA,\UniFmA,\UniMSetFmB}{\UniMSetSucA}$
            }
            \AxiomC{$\UniHyperMSetA \VL   
            \UniSequent{\UniMSetFmA,\UniFmB,\UniMSetFmB}{\UniMSetSucA}
            $
            }
            \RightLabel{($\lor$L)}
            \BinaryInfC{$\UniHyperMSetA \VL  
            \UniSequent{\UniMSetFmA,\UniFmA\lor\UniFmB,\UniMSetFmB}{\UniMSetSucA}$}
            \DisplayProof
            &
            \AxiomC{$\UniHyperMSetA \VL   
            \UniSequent{\UniMSetFmA}{\UniFmA_i}
            $}
            \RightLabel{($\lor$R)}
            \UnaryInfC{$\UniHyperMSetA 
            \VL   
            \UniSequent{\UniMSetFmA}{\UniFmA_1\lor\UniFmA_2}$}
            \DisplayProof
            &
            \AxiomC{$\UniHyperMSetA 
            \VL   
            \UniSequent{\UniMSetFmA,\UniFmA_i,\UniMSetFmB}{\UniMSetSucA}$}
            \RightLabel{($\land$L)}
            \UnaryInfC{$\UniHyperMSetA \VL   
            \UniSequent{\UniMSetFmA,\UniFmA_1\land \UniFmA_2,
            \UniMSetFmB}{\UniMSetSucA}$}
            \DisplayProof
            &
            \AxiomC{$\UniHyperMSetA \VL   
            \UniSequent{\UniMSetFmA}{\UniFmA}$}
            \AxiomC{$\UniHyperMSetA \VL   
            \UniSequent{\UniMSetFmA}{\UniFmB}$}
            \RightLabel{($\land$R)}
            \BinaryInfC{$\UniHyperMSetA \VL   
            \UniSequent{\UniMSetFmA}{\UniFmA\land \UniFmB}$}
            \DisplayProof
            &
             \AxiomC{$\UniHyperMSetA 
             \VL   
             \UniSequent{\UniMSetFmA,\UniFmA,\UniFmB,\UniMSetFmB}{\UniMSetSucA}$}
             \RightLabel{($\fus$L)}
             \UnaryInfC{$\UniHyperMSetA 
             \VL   
             \UniSequent{\UniMSetFmA,\UniFmA\fus \UniFmB,
             \UniMSetFmB}{ \UniMSetSucA}$}
             \DisplayProof
            \end{tabular}
            }
        \end{tabular}
    \end{center}
    \caption{The hypersequent calculus~$\UniFLeExtHCalc{}$
    for~$\UniFLeExtLogic{}$.}
    \label{figure-HFLec}
\end{figure*}

Ciabattoni et al.~\cite{CiaGalTer08} provided analytic hypersequent calculi for infinitely many axiomatic extensions of the logic~$\UniFLeExtLogic{}$ 
by adding \emph{analytic structural hypersequent rules} to $\UniFLeExtHCalc{}$.
In terms of the substructural hierarchy,
these logics correspond to acyclic axioms in level $\mathcal{P}_3'$ for extensions of $\UniFLeExtLogic{\UniCProp}$,
and in level $\mathcal{P}_3$
for extensions of $\UniFLeExtLogic{\UniWProp}$.
For an index set~$J \UniSymbDef \{j_{1},\ldots,j_{k}\}$, denote the hypersequent $\UniHyperMSetA \VL \UniSequent{\UniMSetFmA_{j_{1}}}{\UniMSetSucA_{j_{1}}} \VL \ldots \VL \UniSequent{\UniMSetFmA_{j_{k}}}{\UniMSetSucA_{j_{k}}}$ by $\UniHyperMSetA\VL \UniSequent{\UniMSetFmA_{j}}{ \UniMSetSucA_{j}} (j\in J)$.

\begin{definition}\label{def:ana-struct-rule}
An \emph{analytic structural rule schema} (henceforth, \emph{analytic rule}) has the following form,
where 
all symbols in antecedents are multiset-variables,
$I, J$ and $L$ are sets of indices,
$K_i$ is a set of indices for each $i \in I$, $\alpha_i \in \mathbb{N}$
for each $i \in I$,
$\beta_j \in \mathbb{N}$ for each $j \in J$,
and each $\vec{\UniMSetFmA}_{ik}$
and $\vec{\UniMSetFmB}_l$
is a list of multiset-variables:
\begin{center}
    \AxiomC{
    $\{ \UniHyperMSetA \VL \UniSequent{Y_i,\vec{\Gamma}_{ik}}{\UniMSetSucA_i}\}_{i \in I, k \in K_i}$
    }
    \AxiomC{$\{ 
    \UniHyperMSetA \VL 
    \UniSequent{\vec{\Delta}_l}{}
    \}_{l \in L}$}
    \BinaryInfC{
    $H 
    \VL 
    \UniSequent{Y_i, X_{i1},\ldots,X_{i\alpha_i}}{\UniMSetSucA_i} \, (i \in I)
    \VL
    \UniSequent{Z_{j1},\ldots,Z_{j\beta_j}}{} (j \in J)
    $}
    \DisplayProof
\end{center}
Moreover, it satisfies, among others~\cite{revantha2020}, the following properties:
(i) \emph{linear conclusion} (the multiset-variables that occur in the conclusion occur exactly once) and
(ii) \emph{strong subformula property} (each multiset-variable in the premise occurs in the conclusion).
\end{definition}

\begin{example}[Communication]\
\begin{center}
            \AxiomC{$\UniHyperMSetA
            \VL 
            \UniSequent{\UniMSetFmB_1,\UniMSetFmA_1}{\UniMSetSucA_1}
            $}
            \AxiomC{$\UniHyperMSetA
            \VL 
            \UniSequent{\UniMSetFmB_2,\UniMSetFmA_2}{\UniMSetSucA_2}
            $}
            \RightLabel{$(com)$}
            \BinaryInfC{$\UniHyperMSetA \VL \UniSequent{\UniMSetFmB_2,\UniMSetFmA_1}{\UniMSetSucA_1}
            \VL
            \UniSequent{\UniMSetFmB_1,\UniMSetFmA_2}{\UniMSetSucA_2}
            $}
            \DisplayProof
\end{center}
The fuzzy logic $\mathbf{MTL}$
is determined by the calculus $\UniFLeExtHCalc{\UniWProp}(\{com\})$.
For more examples of such schemas, see~\cite{CiaGalTer08}.
\end{example}

We will always use $\UniAnaRuleSet$
to denote a finite set of analytic structural rule schemas.
In this way, logics determined by
calculi of the form
$\UniHFLecR$ and $\UniHFLewR$
will be our main focus here.
Observe that in such calculi every
rule schema
except for $\UniEC$ and $\UniEW$ 
have the following form, for some $n,m\geq 1$,
where $T_1,\ldots,T_m,S_1,\ldots,S_n$ are components,
$H$ is a hypersequent-variable. 
\begin{center}
\begin{small}
\AxiomC{$\UniHyperMSetA \VL T_1$}
\AxiomC{$\cdots$}
\AxiomC{$\UniHyperMSetA \VL T_m$}
\TrinaryInfC{$\UniHyperMSetA \VL S_1\VL\cdots \VL S_n$}
\DisplayProof
\end{small}
\end{center}

We will design algorithms that accept hypersequents as input.
For the size of the input, 
we employ the following norm, which is related to the length $\UniSizeHyper{h}$ of representation on irredundant hypersequents by a polynomial factor.

\begin{definition}[(Hyper)sequent norms]\label{def:norms-hyper-seq}
For any sequent $s$, we define $\UniNorm{s}{}$
as the maximum multiplicity of a formula in the antecedent of $s$.
For any hypersequent $h$,
we define
$\UniNorm{h}{}\UniSymbDef\max_{s \in h} \UniNorm{s}{}$ (i.e., the max of the norms of its components). As we shall later see, for our purposes, it would be helpful to think of $h$ as a sequence of sequents, with no repetitions. In that context, $\UniNorm{h}{}$ tracks the largest size of any sequent that we have seen so far. Note that for hypersequents with no repetitions of sequents, this is a proper norm. 
\end{definition}

\subsubsection{Invertible calculi}
\label{sec:inv-calc-hp-admiss}
In order to develop our proof search procedures, we modify
the usual hypersequent calculi
to allow components in the conclusion of a rule
to persist upwards. 

Fix a
finite set $\UniAnaRuleSet$ of analytic structural rule schemas
and let $\mathbf{H} \in \{ \UniHFLecR,\UniHFLewR \}$.
Given a rule schema~$\UniRuleSchemaA$
that is not
in $\mathcal{S} \UniSymbDef \{ \UniEC, \UniEW\}$, the \emph{invertible form $\inv{\UniRuleSchemaA}$ of $\UniRuleSchemaA$} is obtained by including the principal components in every premise.
The \emph{principal}
and \emph{active}
components
of
$\inv{\UniRuleSchemaA}$
are the principal and active components
in $\UniRuleSchemaA$.

\begin{example}
    The invertible
    form of $(com)$
    is
\begin{center}
\begin{small}
            \AxiomC{$h_c
            \VL 
            \UniSequent{\UniMSetFmB_1,\UniMSetFmA_1}{\UniMSetSucA_1}
            $}
            \AxiomC{$h_c
            \VL 
            \UniSequent{\UniMSetFmB_2,\UniMSetFmA_2}{\UniMSetSucA_2}
            $}
            \RightLabel{$\inv{com}$}
            \BinaryInfC{$h_c
            $}
            \DisplayProof
\end{small}
\end{center}
where $h_c = (\UniHyperMSetA \VL \UniSequent{\UniMSetFmB_2,\UniMSetFmA_1}{\UniMSetSucA_1}
            \VL
            \UniSequent{\UniMSetFmB_1,\UniMSetFmA_2}{\UniMSetSucA_2})$.
The principal components
of
$\inv{com}$ are 
$\UniSequent{\UniMSetFmB_2,\UniMSetFmA_1}{\UniMSetSucA_1}
            $ and $\UniSequent{\UniMSetFmB_1,\UniMSetFmA_2}{\UniMSetSucA_2}$,
            and the active components are 
            $\UniSequent{\UniMSetFmB_1,\UniMSetFmA_1}{\UniMSetSucA_1}$
            and 
            $\UniSequent{\UniMSetFmB_2,\UniMSetFmA_2}{\UniMSetSucA_2}$.
\end{example}

\begin{definition}[Invertible calculi]
    The calculus $\UniHFLecRinv$ (resp., $\UniHFLewRinv$)
    is obtained by replacing
    in $\UniHFLecR$
    (resp., in $\UniHFLewR$)
    every
    rule schema $\UniRuleSchemaA$
    not in $\mathcal{S}$
    by $\inv{\UniRuleSchemaA}$.
\end{definition}

The following is straightforward to prove; use $\UniEC$ and $\UniEW$.


\begin{lemma}\label{lem:original-to-invert}
$\UniHFLecRinv$
and $\UniHFLewRinv$
prove the same hypersequents as
$\UniHFLecR$
and $\UniHFLewR$,
respectively.
\end{lemma}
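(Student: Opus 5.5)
We prove both inclusions by induction on the height of proofs, treating the contraction and weakening cases uniformly, since both calculi contain $\UniEC$ and $\UniEW$ and these are left unchanged when passing to the invertible versions.

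\emph{From the original to the invertible calculus.} Take a proof in $\UniHFLecR$ (resp.\ $\UniHFLewR$) and transform it bottom-up, arguing by induction on height. Rules in $\mathcal{S}$ and axiomatic rules are kept as they are. For axiomatic rules this is harmless: including principal components in a premise-free rule changes nothing, so $\inv{\UniRuleSchemaA}=\UniRuleSchemaA$ in that case.

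Now consider an instance of a rule $\UniRuleSchemaA\notin\mathcal{S}$ with premises $h\VL t_1,\ldots,h\VL t_m$ and conclusion $h\VL s_1\VL\cdots\VL s_n$. The induction hypothesis gives invertible-calculus proofs of each $h\VL t_i$. Apply $\UniEW$ $n$ times to each of them to obtain $h\VL s_1\VL\cdots\VL s_n\VL t_i$. These hypersequents are exactly the premises of the corresponding instance of $\inv{\UniRuleSchemaA}$, whose conclusion is $h\VL s_1\VL\cdots\VL s_n$.

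The only point to check is the side condition on $\UniEW$. In Fig.~\ref{figure-HFLec} the added component is written $\UniSequent{\UniMSetFmA}{\UniFmA}$, and we read this as an arbitrary sequent, so that an empty stoup is also allowed. If that reading fails, components with empty succedent need a separate treatment, but this is only a notational matter.

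\emph{From the invertible to the original calculus.} Again argue by induction. Given an instance of $\inv{\UniRuleSchemaA}$ with conclusion $h_c=h\VL s_1\VL\cdots\VL s_n$ and premises $h_c\VL t_i$, instantiate the hypersequent-variable of $\UniRuleSchemaA$ with $h_c$ instead of $h$. The premises of this instance are exactly $h_c\VL t_i$, which are provable in the original calculus by the induction hypothesis. Its conclusion is $h_c\VL s_1\VL\cdots\VL s_n = h\VL s_1\VL s_1\VL\cdots\VL s_n\VL s_n$. Applying $\UniEC$ $n$ times then removes the duplicated components and yields $h_c$.

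The main obstacle is verifying that $\UniEC$ applies to every kind of component: $\UniEC$ is stated with stoup $\UniFmA$, but principal components may have empty succedent, as in the $J$-components of analytic rules and in $(0R)$. We will read the schema with a stoup variable. Under this reading, all the remaining steps are routine substitutions into the rule schemas.
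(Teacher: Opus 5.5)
Your proof is correct and takes the same approach as the paper, which only remarks that the lemma is straightforward using (EC) and (EW). Your two directions spell that out: for original-to-invertible, weaken the principal components into each premise and then apply the inverted rule; for invertible-to-original, instantiate the hypersequent variable with the whole conclusion and contract away the duplicated principal components. Reading (EC) and (EW) as applying to arbitrary components, empty stoup included, is the intended reading.
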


It is also clear that $\UniHFLecRinv$
and $\UniHFLewRinv$
satisfy
the subformula property,
because $\UniHFLecR$
and $\UniHFLewR$ do.

\subsection{Well-quasi-orders}


A \emph{well-quasi-order} (or wqo) is a structure $\mathbf{A} \UniSymbDef \langle A,\le_\mathbf{A} \rangle$ where $A$ is a set and $\le_\mathbf{A} \subseteq A \times A$ is a reflexive and transitive relation s.t. for every infinite sequence of elements $a_0,a_1,\dots,$ of $A$ there exists $i < j$
s.t. $a_i \le_\mathbf{A} a_j$. A \emph{proper norm} on a set $A$ is a function $\UniNorm{\cdot}{\mathbf{A}}: A \to \mathbb{N}$ such that
for every $n \in \mathbb{N}$, the set $\{a : \UniNorm{a}{\mathbf{A}} \le n\}$ is finite. A \emph{normed wqo} (or nwqo) is a structure
$\mathbf{A} \UniSymbDef \langle A,\le_\mathbf{A}, \UniNorm{\cdot}{\mathbf{A}} \rangle$ such that $\langle A,\le_\mathbf{A} \rangle$ is a wqo and $\UniNorm{\cdot}{\mathbf{A}}$ is a proper norm.

It is easy to see that the set of natural numbers $\mathbb{N}$ with the usual ordering and the norm being the identity function is a nwqo. It turns out that this observation can be generalized as follows. For any $d \ge 1$, let $\mathbb{N}^d$
denote the set of all $d$-tuples of natural numbers. Given $\AntVecA = (x_1,\dots,x_d)$ and $\AntVecB = (y_1,\dots,y_d)$ in $\mathbb{N}^d$ we say that $\mathbf{x} \le_{\mathbb{N}^d} \mathbf{y}$ if $x_i \le y_i$ for each $i$. 
Finally, we set the norm of an element $\mathbf{x} = (x_1,\dots,x_d)$ as $\UniNorm{x}{\mathbb{N}^d} = \max_{1 \le i \le d} x_i$.
By Dickson's lemma~\cite{dickson1913}, it is known that 
$\UniWqoModNatural^d \UniSymbDef \langle \mathbb{N}^d,\le_{\mathbb{N}^d},
\UniNorm{\cdot}{\mathbb{N}^d} \rangle$ is an nwqo 
for every $d$. This nwqo is called the product nwqo over $\mathbb{N}^d$.
In the sequel, when the dimension $d$ is clear from context, we will drop the subscript $\mathbb{N}^d$
from $\le_{\mathbb{N}^d}$ and $\UniNorm{\cdot}{\mathbb{N}^d}$. 
We will also make use of \emph{disjoint sums} of nwqos, in particular of nwqos
of the form $k \cdot \UniWqoModNatural^d$ ($k,d \in \mathbb{N}$), where
$(i, \AntVecA) \leq_{k \cdot \mathbb{N}^d} (j, \AntVecB)$
iff $i = j$ (i.e., $\AntVecA$ and $\AntVecB$ are in the same copy of 
$\mathbb{N}^d$)
and $\AntVecA \leq_{\mathbb{N}^d} \AntVecB$;
here,
the norm is defined by
$\UniNorm{(i,\AntVecA)}{k \cdot \mathbb{N}^d} \UniSymbDef \UniNorm{\AntVecA}{\mathbb{N}^d}$.

\subsubsection*{Controlled bad sequences} A \emph{control function} is any function $\UniControlFunctionA: \mathbb{N} \to \mathbb{N}$ that is monotone and satisfies $\UniControlFunctionA(x) \ge x$ 
for all $x \in \mathbb{N}$. Now, let $\UniControlFunctionA$ be any control function, let $n \in \mathbb{N}$ and let $\mathbf{A} \UniSymbDef \langle A,\le_\mathbf{A}, \UniNorm{\cdot}{\mathbf{A}} \rangle$ be any nwqo.
A sequence of elements $a_0,a_1,\dots$ over $A$ is called a \emph{$(\UniControlFunctionA,n)$-controlled bad sequence} iff
\begin{itemize}
    \item There is no $i < j$ such that $a_i \le_\mathbf{A} a_j$.
    \item For every $i$, $\UniNorm{a_i}{A} \le \UniControlFunctionA^i(n)$ where $\UniControlFunctionA^i$ denotes the $i$-fold composition of $\UniControlFunctionA$ with itself.
\end{itemize}

By definition of a wqo and by definition of the first property of a $(\UniControlFunctionA,n)$-controlled bad sequence, it follows that no such sequence can be infinite. It turns out that, by an application of K\"onig's lemma, for every $\UniControlFunctionA$, every $n$ and every nwqo $\mathbf{A}$, there
is a maximum length for $(\UniControlFunctionA,n)$-controlled bad sequences of $\mathbf{A}$. Hence, we can define 
a \emph{length function} $L_{\mathbf{A},\UniControlFunctionA}$ for $\mathbf{A}$ and $\UniControlFunctionA$ which maps a number $n$ to the maximum length of $(\UniControlFunctionA,n)$-controlled bad sequences of $\mathbf{A}$.

\subsubsection*{Length function bounds for $k \cdot \mathbb{N}^d$} 
The complexity upper bounds that we prove in this paper
will ultimately depend on length functions for nwoqs $k \cdot \UniWqoModNatural^d$ (for some $d,k \in \mathbb{N}$). Hence, 
we recall existing results on upper bounds for them. To state these upper bounds,
we first need to introduce the so-called \emph{fast-growing complexity classes}, which we shall do now.
We will only state those notions and facts that are needed here.

First, we define a hierarchy of functions $\{F_i\}_{i\leq\omega}$ as follows. Let $F_1(n) = 2n$ 
and let $F_{i+1}(n) = F_i^n(n)$. We then set $F_\omega(n) = F_n(n)$, where we can think of $\omega$ as an element that is larger than any natural number.
Using these functions, we can define a family of functions $\mathscr{F}_i$ for each $i \in \mathbb{N} \cup \{\omega\}$
by closing the $F_i$ function (and a bunch of other simple functions like the successor function) under composition
and limited primitive recursion. We let $\UniFGHOneAppLevel{i}$ denote the class of  functions given by 
$\bigcup_{j < i} \bigcup_{p \in \mathscr{F}_j} \{F_i(p(n))\}$. All that we require for our purposes regarding $\UniFGHOneAppLevel{i}$
are the following facts.

\begin{theorem}[{\cite{schmitz2016hierar}}]\label{thm:composition-fomega}
    If $f$ is a function in $\UniFGHOneAppLevel{\omega}$ and $g_1, g_2$ are primitive recursive functions, then the function $g_1 \circ f \circ g_2$ is also in $\UniFGHOneAppLevel{\omega}$.
\end{theorem}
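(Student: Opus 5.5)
The plan is to unfold the definition: $f \in \UniFGHOneAppLevel{\omega}$ means that $f$ is computed by a deterministic Turing machine $M_f$ running in time at most $F_\omega(p(n))$ on inputs of size $n$, for some primitive recursive $p$. Since $g_1,g_2$ are primitive recursive, each is computed by a machine $M_{g_1}$, $M_{g_2}$ whose running time is bounded by a primitive recursive function, say $t_1$, $t_2$. We may assume all of $p,t_1,t_2$ are monotone and satisfy $x \le p(x)$, etc., by replacing each with $x \mapsto \max_{y\le x}(\cdot)(y)+x$, which is again primitive recursive. The machine for $g_1\circ f\circ g_2$ runs $M_{g_2}$, then $M_f$ on the result, then $M_{g_1}$ on that result, and we bound the total running time by a function of the form $F_\omega(r(n))$ with $r$ primitive recursive.

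The sizes are tracked as follows. On input of size $n$, stage one takes time $\le t_2(n)$, so its output has size $\le n+t_2(n) =: q(n)$, which is primitive recursive and monotone. Stage two takes time $\le F_\omega(p(q(n)))$, so its output has size $\le q(n)+F_\omega(p(q(n)))\le 2F_\omega(p(q(n)))$ (using $F_\omega(x)\ge x$). Stage three takes time $\le t_1(2F_\omega(p(q(n))))$. The total time is thus at most $3\,h(F_\omega(m))$ where $m:=p(q(n))$ and $h(x):=t_1(2x)+x$ is primitive recursive and monotone. The bookkeeping overhead of chaining machines is at most linear and is absorbed into $h$.

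The main step, and the only nontrivial one, is to absorb a primitive recursive $h$ applied after $F_\omega$. I will use the standard facts that every primitive recursive function is eventually dominated by some $F_k$, so $3h(x)\le F_k(x)$ for all $x\ge c$ (enlarging $k$ handles small $x$, since $F_k(x)\ge 2x$ and the finitely many exceptions are bounded by a constant), and that $F_i(x)$ is monotone in both $i$ and $x$ with $F_i(x)\ge x$. Then, for $m\ge k$,
\[
F_\omega(m+1)=F_{m+1}(m+1)=F_m^{m+1}(m+1)\ge F_m\bigl(F_m^{m}(m)\bigr)=F_m(F_\omega(m))\ge F_k(F_\omega(m)).
\]
Hence the total time is at most $F_\omega(\max(m,k)+1+c')$, i.e.\ at most $F_\omega(r(n))$ with $r(n):=p(q(n))+k+c'+1$ primitive recursive. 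Therefore $g_1\circ f\circ g_2\in\UniFGHOneAppLevel{\omega}$.

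The obstacle I anticipate is the precise form of the definition of $\UniFGHOneAppLevel{\omega}$, namely $\bigcup_{p}\mathrm{DTIME}(F_\omega(p(n)))$ with $p$ ranging over $\mathscr{F}_j$ for $j<\omega$, i.e.\ over the primitive recursive functions. I will check that $r$ lies in some $\mathscr{F}_j$; this holds because the finitely many primitive recursive functions involved all lie in a common level of the Grzegorczyk-style hierarchy $\mathscr{F}_j$.
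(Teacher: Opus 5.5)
\textbf{Verdict: correct, by a self-contained route; one equality in your display is wrong but the inequality you need survives.}

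The paper gives no proof of this statement. It simply imports it from Schmitz, where it follows from the general theory of the classes $\mathbf{F}_\alpha$ and $\mathbf{FF}_\alpha$, proved uniformly for every ordinal level. Your argument specializes this to $\alpha=\omega$ and to the paper's concrete $F_i$. Its core is absorbing a primitive-recursive post-composition into a constant shift of the argument, via $F_k(F_\omega(m))\le F_\omega(m+1)$ for $m\ge k$. That is exactly what the paper relies on later, for instance when a primitive-recursive function of a length bound is again declared to lie in \UniFGHOneAppLevel{\omega}. The citation buys generality over all levels. Your route buys transparency, using nothing beyond domination of primitive-recursive functions by some $F_k$.

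Three repairs are needed.
\begin{itemize}
\item In your displayed chain, $F_m^m(m)=F_{m+1}(m)$, not $F_m(m)=F_\omega(m)$. Replace that equality by $F_m(F_m^m(m))\ge F_m(F_m(m))=F_m(F_\omega(m))$. This holds for $m\ge 1$ because $F_m$ is monotone and inflationary, so your conclusion is unaffected.
\item With the paper's definitions, $F_k(0)=0$ and $F_k(1)=2$ for every $k$, so ``enlarging $k$'' cannot handle arguments $0$ and $1$. This is harmless: $F_\omega(m)\ge 2$ for $m\ge 1$, and your additive constant $c'$ absorbs the finitely many remaining cases.
\item The paper defines \UniFGHOneAppLevel{\omega} as the set of bounds $F_\omega\circ p$ with $p$ primitive recursive, not as a time-complexity class. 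It uses the theorem to mean that $g_1\circ f\circ g_2$ is pointwise bounded by some $F_\omega\circ r$. Your Turing-machine wrapper instead proves the statement for functions computable in time $F_\omega\circ p$ (Schmitz's $\mathbf{FF}_\omega$). For the paper's reading the machines are unnecessary. Take monotone primitive-recursive majorants $\hat g_1\ge g_1$ and $\hat g_2\ge g_2$, and a monotone $p$. Your main step then gives directly $g_1(F_\omega(p(g_2(n))))\le \hat g_1(F_\omega(p(\hat g_2(n))))\le F_\omega(r(n))$ for some primitive-recursive $r$.
\end{itemize}
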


\begin{theorem}[Length theorem for Dickson's lemma~\cite{figueira2011}]\label{the:length-theorem-dickson}
    For any control function $\UniControlFunctionA$,
    there is
    $U : \mathbb{N}^3 \to \mathbb{N}$
    in $\UniFGHOneAppLevel{\omega}$
    s.t. for any $k,d \geq 0$, $U(k,d,x)$ upper bounds the length function $L_{k \cdot \mathbb{N}^d,\UniControlFunctionA}(x)$.
\end{theorem}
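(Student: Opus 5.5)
The plan is to reduce the disjoint sum $k \cdot \UniWqoModNatural^d$ to a single product nwqo $\UniWqoModNatural^{d'}$ and then apply the known Ackermannian bound for Dickson's lemma (Figueira et al.), which says that for a fixed primitive recursive control $\UniControlFunctionA$ the function $(d,x)\mapsto L_{\mathbb{N}^d,\UniControlFunctionA}(x)$ is bounded by a function in $\UniFGHOneAppLevel{\omega}$. Concretely, embed copy $i$ of $\mathbb{N}^d$ into $\mathbb{N}^{d+k}$ by $(i,\AntVecA)\mapsto (\AntVecA, e_i)$, where $e_i\in\{0,1\}^k$ is the $i$-th unit vector. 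This map reflects the order: if $(\AntVecA,e_i)\le(\AntVecB,e_j)$, then $e_i\le e_j$ forces $i=j$ and $\AntVecA\le\AntVecB$. Hence a bad sequence in $k\cdot\mathbb{N}^d$ maps to a bad sequence in $\mathbb{N}^{d+k}$.

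For the control, the norm of the image is $\max(\UniNorm{\AntVecA}{},1)$. So a $(\UniControlFunctionA,x)$-controlled bad sequence maps to a $(\UniControlFunctionA,\max(x,1))$-controlled one, using monotonicity and $\UniControlFunctionA(y)\ge y$ to keep $\UniControlFunctionA^i(\max(x,1))\ge 1$. This gives $L_{k\cdot\mathbb{N}^d,\UniControlFunctionA}(x)\le L_{\mathbb{N}^{d+k},\UniControlFunctionA}(x+1)$.

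Next I would take $U(k,d,x)\UniSymbDef B(\langle k,d,x\rangle)$, where $B$ is a single-argument $\UniFGHOneAppLevel{\omega}$ bound on $L_{\mathbb{N}^{d'},\UniControlFunctionA}(x')$ valid whenever $d',x'\le m$. Such a $B$ exists because length functions are monotone in $d'$ (pad with zeros) and in $x'$ (a bound $\UniControlFunctionA^i(x')$ is also a bound for larger starting values). The pairing $\langle k,d,x\rangle$ can be taken as $k+d+x+1$, which is primitive recursive and dominates $d+k$ and $x+1$. Theorem~\ref{thm:composition-fomega} then keeps $U$ inside $\UniFGHOneAppLevel{\omega}$.

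The main obstacle is matching the exact hypothesis on $\UniControlFunctionA$. The Figueira et al. bound is stated for primitive recursive controls, where the level depends on the control's own level. The statement as written says ``any control function'', so I would read it as intended for primitive recursive $\UniControlFunctionA$, which is the case arising in the paper. With that reading, the bound is $F_\omega$ composed with primitive recursive functions, as required.
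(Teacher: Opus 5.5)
Your proposal is correct. The paper itself gives no argument here: it imports the bound for disjoint sums $k\cdot\mathbb{N}^d$ directly from Figueira et al. Your route is different in that you cite only the pure-product case (uniformly in the dimension) and derive the disjoint-sum version from it. Three steps do the work:
\begin{itemize}
\item The tagging map $(i,\mathbf{x})\mapsto(\mathbf{x},e_i)$ is an order embedding of $k\cdot\mathbb{N}^d$ into $\mathbb{N}^{d+k}$, and it changes norms only to $\max(\|\mathbf{x}\|,1)$.
\item Length functions are monotone in the dimension and in the initial norm.
\item The primitive recursive repackaging $m=k+d+x+1$ then gives $U$ of the form $F_\omega\circ p$.
\end{itemize}

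The gain is that you only need the most commonly stated form of the length theorem. The cost is that $k$ moves from the argument into the dimension, so the fast-growing level of your bound grows with $d+k$ rather than with $d$ alone. The cited framework instead handles finite disjoint sums of products directly, since residuals of $\mathbb{N}^d$ are themselves such sums, and so keeps $k$ in the argument. For the uniform $\mathbf{F}_\omega$ bound stated here this loss does not matter. That holds even in Lemma~\ref{lem:nwqo-weakening}, where $k=2^d(d+1)$, because the inner function may be any primitive recursive function.

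Your caveat about the hypothesis is necessary, not cosmetic. Take $k=1$, $d=2$ and $x\ge 1$. Then $(1,0),(0,g(x)),(0,g(x)-1),\ldots,(0,0)$ is a $(g,x)$-controlled bad sequence for a control $g$, so $L_{\mathbb{N}^2,g}(x)\ge g(x)+2$. The statement therefore fails for a control such as $g=F_\omega\circ F_\omega$, which is not bounded by any $F_\omega\circ p$ with $p$ primitive recursive. It must be read for primitive recursive $g$, which is exactly how it is used in Lemma~\ref{lem:nwqo-weakening}.
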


Let $\UniFGHProbOneAppLevel{i}$ be the complexity class, defined as the set of decision problems that can be solved by a deterministic Turing machine in time
$f \in \UniFGHOneAppLevel{i}$. The distinction between determinism and non-determinism and time and space becomes irrelevant
for $\UniFGHProbOneAppLevel{i}$ with $i > 2$, because such classes are closed under exponential-time reductions. Of primary interest
to us is the class $\UniFGHProbOneAppLevel{\omega}$, 
whose members are said to have \emph{Ackermannian} complexity.
The hierarchy is, actually, more commonly defined for ordinals up 
to $\epsilon_0$~\cite{schmitz2016hierar}.
In particular, $\UniFGHProbOneAppLevel{\omega^\omega}$ is the class of \emph{hyper-Ackermannian} problems.


\section{Ackermannian upper bounds for analytic hypersequent extensions of $\UniFLeExtLogic{\UniCProp}$}
\label{sec:ub-flec}

\subsection{Refined calculi and their key properties}
\label{sec:refined-calculi-contraction}

Ramanayake~\cite{revantha2020} constructed
a refined version of
$\UniHFLecR$
in which $\UniEW$, $\UniEC$
and $(\UniCProp)$ 
are \emph{height-preserving admissible} (or \emph{hp-admissible}, for short), meaning that
whenever a premise of one of these rules has a proof with height $\alpha \in \mathbb{N}$,
 its conclusion has a proof of height at most $\alpha$. 
We begin by recalling such construction, and a key consequence of it: \emph{Curry's lemma}.

In order to facilitate the formulation of the refined calculus and the proof that it satisfies the desired hp-admissibility results, we define the binary relations
$\UniIntCtrRel{k}{\UniCProp}$,
$\UniExtCtrRel{k}{\text{(EC)}}$
and
$\UniIntExtCtrRel{k}{l}$ on hypersequents (for each $k, l \geq 0$)
in the following way.

\begin{itemize}
\item $\UniHypersequentA_1 \UniIntCtrRel{k}{\UniCProp} \UniHypersequentA_{2}$ iff $\UniHypersequentA_{2}$ can be obtained from~$\UniHypersequentA_1$ by applying some number of instances of $\UniCProp$ such that every formula in a component in~$\UniHypersequentA_{2}$ occurs   up to~$k$ times less in that component than in the corresponding component in~$\UniHypersequentA_1$.

\item $\UniHypersequentA_1 \UniExtCtrRel{l}{\text{(EC)}} \UniHypersequentA_{2}$ iff $\UniHypersequentA_{2}$ can be obtained from~$\UniHypersequentA_1$ by applying some number of instances of~$\text{(EC)}$ such that every component of $\UniHypersequentA_{2}$ occurs up to~$l$ times less in~$\UniHypersequentA_{2}$ than in~$\UniHypersequentA_1$.

\item $\UniHypersequentA_1\UniIntExtCtrRel{k}{l}\UniHypersequentA_{2}$ iff there exists $\UniHypersequentA'$ s.t.~$\UniHypersequentA_1 \UniIntCtrRel{k}{\UniCProp} \UniHypersequentA'$ and $\UniHypersequentA' \UniExtCtrRel{l}{\text{(EC)}} \UniHypersequentA_{2}$.
\end{itemize}

The following is a fact that comes easily from these definitions.

\begin{propositionrep}
\label{prop:descendands-in-c}
    If
    $\UniHypersequentA_1\UniIntExtCtrRel{k}{l}
    \UniHypersequentA_2$,
    then, for each 
    $\UniSequentA_1 \in \UniHypersequentA_1$,
    there is
    $\UniSequentA_2 \in \UniHypersequentA_2$
    such that
    $\UniSequentA_1 \UniIntCtrRel{k}{\UniCProp} \UniSequentA_{2}$.
    We call $\UniSequentA_2$ a \emph{descendant} of $\UniSequentA_1$
    in $\UniHypersequentA_2$.
\end{propositionrep}
\begin{proof}
By definition,
$\UniHypersequentA_1\UniIntExtCtrRel{k}{l}
    \UniHypersequentA_2$
means that there exists
$\UniHypersequentA'$ such that~$\UniHypersequentA_1 \UniIntCtrRel{k}{\UniCProp} \UniHypersequentA'$ and $\UniHypersequentA' \UniExtCtrRel{l}{\text{(EC)}} \UniHypersequentA_{2}$.
We fix a $\UniSequentA_1 \in \UniHypersequentA_1$ and note that it is enough to show that there is a descendant of $\UniSequentA_1$ in $\UniHypersequentA'$, since $\UniEC$ does not make any component disappear in derivations (it only reduces multiplicities).
We pick a derivation witnessing
$\UniHypersequentA_1 \UniIntCtrRel{k}{\UniCProp} \UniHypersequentA'$ and that contains only applications of $\UniCProp$.
Looking at it top-down, a component at node $i$ has an \emph{immediate descendant} at node $i+1$, which is either the same component (in case of being in the instantiation of the hypersequent variable) or a component coming from it via an application $\UniCProp$ (that is, the component is active in the rule application).
In any case, a component and its immediate descendant are related under $\UniIntCtrRel{k}{\UniCProp}$.
Starting from $s_1$, then, we can get a trace of descendants ending in a component
$\UniSequentA' \in \UniHypersequentA_1$ and the transitivity of $\UniIntCtrRel{k}{\UniCProp}$ guarantees that $\UniSequentA_1 \UniIntCtrRel{k}{\UniCProp} \UniSequentA'$.
\end{proof}

The idea 
of the refined calculus
is that we introduce to each rule instance a fixed amount of contractions (internal and external). This amount is determined by the syntactic form of the calculus, using the values $\UniFmlaMultFixed$
and $\UniActiveCompFixed$
defined below.

\begin{definition}[{\cite{revantha2020}}] 
\label{def:acn-fm}
The \emph{formula multiplicity} of a rule schema is the maximum of the number of elements
over all of the antecedents of the principal components.
The \emph{active component number} of a rule schema is the number of components in its conclusion (including the hypersequent variable).

The formula multiplicity $\UniFmlaMultFixed$ of a calculus $\UniHFLecR$ is the maximum
of the formula multiplicities of its rule schemas. Similarly, the active component number 
$\UniActiveCompFixed$ is the maximum of the active component numbers of its rule schemas.
\end{definition}

\begin{definition}[Refined contraction-calculus]\label{def:hp-admissible-calculi}
    Let $\UniHyperCalcA$ be a hypersequent calculus
    containing $\UniEC$
    and $(\UniCProp)$.
    Define
    $\UniHCalcAbsorb{\UniHyperCalcA}$
    as the set of
    rules
    $\UniHRuleAbsorb{\UniRuleSchemaA}$
    of the form
    $(h_1,\ldots,h_n)/g$,
    where
    $(h_1,\ldots,h_n)/h_0$
    is a rule instance of
    $\UniRuleSchemaA \in \UniHyperCalcA$
    and
    $\UniHypersequentA_{0} \UniIntExtCtrRel{\UniFmlaMultFixed-1}{\UniActiveCompFixed} \UniHypersequentB$.
    The instance with
    conclusion $h_0$
    is called
    the \emph{base instance}
    of $\UniHRuleAbsorb{\UniRuleSchemaA}$.
\end{definition}



Clearly, the conclusion~$\UniHypersequentB$ of a non-base instance is obtained from the conclusion~$\UniHypersequentA_{0}$ of the base instance by performing a restricted number of applications of $(\UniCProp)$ (via $\UniIntCtrRel{\UniFmlaMultFixed-1}{\UniCProp}$) and then a restricted number of applications of (EC) (via $\UniExtCtrRel{\UniActiveCompFixed}{\text{(EC)}}$).

\begin{lemma}[\cite{revantha2020}]
\label{lem-hp-admissible}
Let $\UniAnaRuleSet$ be a finite set of analytic structural rules.
\begin{enumerate}
    \item 
The calculi
$\UniHFLecR$ and~$\UniHFLecRAbsorb$ prove the same hypersequents.
    \item $\UniEW$, $\UniEC$
and
$(\UniCProp)$
are hp-admissible in~$\UniHFLecRAbsorb$.
\end{enumerate}
\end{lemma}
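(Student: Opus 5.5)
The plan is to prove the two parts in order, with the second part carrying most of the work.

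For part (1), one direction is immediate. Every rule instance of $\UniHFLecR$ is the base instance of the corresponding refined rule, since $\UniHypersequentA_0 \UniIntExtCtrRel{\UniFmlaMultFixed-1}{\UniActiveCompFixed} \UniHypersequentA_0$ holds trivially. Hence any proof in $\UniHFLecR$ is already a proof in $\UniHFLecRAbsorb$. For the converse, I replace each refined rule instance $(h_1,\ldots,h_n)/g$ by its base instance $(h_1,\ldots,h_n)/h_0$. I then follow it with finitely many applications of $(\UniCProp)$ and $\UniEC$ witnessing $h_0 \UniIntExtCtrRel{\UniFmlaMultFixed-1}{\UniActiveCompFixed} g$. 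This yields a proof in $\UniHFLecR$, and a straightforward induction on proof height finishes part (1).

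For part (2), I prove all three hp-admissibility statements simultaneously by induction on the height $\alpha$ of the given proof of the premise.

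\textbf{$\UniEW$.} This case is separate and easy. I add the weakened component to the hypersequent variable of every rule instance in the proof. Initial hypersequents stay initial because they carry a hypersequent context $\UniHyperMSetA$, and every refined rule keeps a hypersequent variable in its conclusion. Height is unchanged.

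\textbf{$(\UniCProp)$ and $\UniEC$.} I treat these two together and generalize the statement: if $\vdash^{\alpha} h$ and $h \UniIntExtCtrRel{k}{l} h'$ for arbitrary $k,l$, then $\vdash^{\leq\alpha} h'$. Let $r$ be the last rule of the proof, with base conclusion $h_0$, so that $h_0 \UniIntExtCtrRel{\UniFmlaMultFixed-1}{\UniActiveCompFixed} h \UniIntExtCtrRel{k}{l} h'$.

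The case split is on whether the contracted material touches the principal components of $r$.

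In the \emph{context} case, it lies entirely in the instantiation of the hypersequent variable or in multiset-variable instances. I push the contraction into the premises, which have smaller height, and apply the induction hypothesis there. Then I reapply $r$ with the same schema; for logical rules I use the fact that the context is shared by the premises.

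In the \emph{principal} case, the contraction acts on formulas or components introduced by $r$. I use Proposition~\ref{prop:descendands-in-c} to track descendants, and show that the composite of the built-in contraction with the new one can be split. The split is into a part of size at most $\UniFmlaMultFixed-1$ internal and $\UniActiveCompFixed$ external contractions, which is absorbed into a new refined instance. The remainder acts on the premises and is handled by the induction hypothesis.

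For analytic structural rules I use linearity of the conclusion and the strong subformula property. Every multiset-variable occurs once in the conclusion and also in the premises, so extra copies of a formula in a multiset-variable instance can be contracted in each premise where that variable appears. The same reasoning applies to two identical principal components, which can be handled by instantiating variables consistently.

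The main obstacle is this principal case for analytic structural rules. A contracted formula may be split across several schematic variables, and two identical conclusion components may arise from different premises. The key combinatorial claim is that any contraction on a principal component can be reduced to contractions on variable instances, at the cost of at most $\UniFmlaMultFixed-1$ internal contractions per formula, since a formula occurrence is spread across at most $\UniFmlaMultFixed$ elements of an antecedent. Likewise it costs at most $\UniActiveCompFixed$ external contractions, which is where $\UniFmlaMultFixed$ and $\UniActiveCompFixed$ are chosen. To handle this I would first verify carefully that the refined-rule closure absorbs exactly this residue.
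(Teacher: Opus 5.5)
The paper does not prove this lemma itself; it is imported from \cite{revantha2020}. Part (1) is fine, and so is $\UniEW$ by enlarging the hypersequent context everywhere. So is treating $(\UniCProp)$ and $\UniEC$ together via the generalized statement for $\UniIntExtCtrRel{k}{l}$, by induction on height, pushing what you can into the premises and absorbing a bounded residue into a fresh refined instance. However, your context case contains a step that fails. You say that for logical rules the context is shared by the premises, so a contraction on formulas lying in multiset-variable instances can simply be pushed up. In $\UniFLeExtHCalc{}$ this is false for the multiplicative rules (${\imp}$L) and ($\fus$R), whose antecedent context is split between the two premises. For instance, from $H \VL \UniSequent{D,\Gamma}{A}$ and $H \VL \UniSequent{D,\Delta}{B}$, ($\fus$R) yields $H \VL \UniSequent{D,D,\Gamma,\Delta}{A\fus B}$. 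The two copies of $D$ cannot be contracted in either premise, since each premise contains only one. This is exactly the configuration the built-in contractions of the refined calculus exist to absorb (the same issue already arises for the sequent calculus of $\UniFLeExtLogic{\UniCProp}$). It cannot be dispatched by the induction hypothesis alone.

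The repair is to classify by where the copies sit rather than by rule type. Copies of a formula inside one multiset-variable instance are contracted, by the induction hypothesis, in every premise containing that variable. Duplicate components inside the instance of the hypersequent variable are contracted in every premise. This leaves the conclusion correctly changed because the variable occurs only once in the conclusion. That is linearity, which is what you need here; the strong subformula property only says premise variables occur in the conclusion, not conversely. After this, a formula occurs in a principal component at most once per antecedent element containing it, so at most $\UniFmlaMultFixed$ times. A component occurs at most once per principal component plus once in the context, so at most $\UniActiveCompFixed$ times. Choose the reduced variable instances so that each retains at least one copy (otherwise you would be weakening, not contracting, in the premises), with total equal to the larger of the target multiplicity and this count. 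The remaining work then fits within the $\UniFmlaMultFixed-1$ internal and $\UniActiveCompFixed$ external contractions of a new refined instance. This is the argument you sketch for analytic structural rules; it must be applied to (${\imp}$L) and ($\fus$R) as well, and it also covers contracting a principal formula against a context copy. Finally, since the refinement is applied to all of $\UniHFLecR$, the refined versions of $(\UniCProp)$, $\UniEC$ and $\UniEW$ can themselves be the last rule. These cases follow directly from the induction hypothesis but should be listed.
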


Now we obtain the so-called \emph{Curry's lemma} for hypersequents.

\begin{definition}
    Given hypersequents
    $g$ and $h$,
    define
    $g \UniHyperseqStrWqo h$
    iff
    $g$ is obtained from
    $h$ by applications of
    $\{\UniEW, \UniEC, (\UniCProp)\}$.
\end{definition}

\begin{lemma}[hypersequent Curry's lemma]\label{lem:curry-hyper-ext}
    Let $h'$ and $h$ be
    hypersequents
    s.t.
    $h' \UniHyperseqStrWqo h$.
    If $\vdash^\alpha_{\UniHFLecRAbsorb} h$,
    then
    $\vdash^{\leq\alpha}_{\UniHFLecRAbsorb} h'$.
\end{lemma}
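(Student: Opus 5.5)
This follows from Lemma~\ref{lem-hp-admissible}(2) by induction on the number of rule applications from $h$ to $h'$.

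By definition of $h' \UniHyperseqStrWqo h$, there is a finite sequence
\[
h = g_0, g_1, \ldots, g_m = h'
\]
in which each $g_{i+1}$ is the conclusion of an instance of $\UniEW$, $\UniEC$ or $(\UniCProp)$ with premise $g_i$. (The rules are read downwards: $g_{i+1}$ arises from $g_i$ by adding a component, merging two identical components, or contracting a duplicate formula.)

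We prove by induction on $i$ that $\vdash^{\leq\alpha}_{\UniHFLecRAbsorb} g_i$.

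\emph{Base case.} For $i=0$, we have $\vdash^{\alpha}_{\UniHFLecRAbsorb} h$ by hypothesis.

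\emph{Inductive step.} Suppose $g_i$ has a proof of height $\beta \leq \alpha$. Since $g_i$ is the premise of an instance of $\UniEW$, $\UniEC$ or $(\UniCProp)$ with conclusion $g_{i+1}$, Lemma~\ref{lem-hp-admissible}(2) applies: these rules are hp-admissible in $\UniHFLecRAbsorb$. Hence $g_{i+1}$ has a proof of height at most $\beta \leq \alpha$. Taking $i=m$ gives the claim.

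The only point needing care is that the relation is assembled from single rule instances, each with exactly one premise, so that hp-admissibility applies at each step. Each of the three structural rules has one premise, so this holds. The substantive work is already contained in Lemma~\ref{lem-hp-admissible}, taken from~\cite{revantha2020}, and no further obstacle remains.
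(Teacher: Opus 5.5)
Your proof is correct and takes the same approach as the paper. The paper gives no separate proof: it states Curry's lemma as an immediate consequence of the hp-admissibility of $\UniEW$, $\UniEC$ and $(\UniCProp)$ in $\UniHFLecRAbsorb$ (Lemma~\ref{lem-hp-admissible}(2)), and your induction along the chain of single-premise rule applications is the routine iteration that makes this precise.
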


From~\cite{BalLanRam21LICS},
we know that $\UniHyperseqStrWqo$ is
a wqo when restricted to
hypersequents over the same
support.
Then the whole point of Curry's lemma is to prove existence of \emph{$\UniHyperseqStrWqo$-minimal proofs} 
in $\UniHFLecRAbsorb$, that is, every provable hypersequent is witnessed by a proof in which every branch is a bad sequence with respect to 
$\UniHyperseqStrWqo$.
 Then a proof search procedure for 
$\UniHFLecRAbsorb$ looks
for minimal proofs with respect to this wqo.
Our new algorithm will not follow this path, because
$\UniHyperseqStrWqo$ is too complex and leads to $\UniFGHProbOneAppLevel{\omega^\omega}$ upper bounds, instead of $\UniFGHProbOneAppLevel{\omega}$. In order to design this algorithm, we need to combine 
the two transformations on hypersequent calculi described above, namely $\UniHFLecRinv$ and $\UniHFLecRAbsorb$
into one.

\subsection{Invertible refined calculi for contraction}
Our focus from now on
will be on calculi of the form
$\UniHFLecRAbsorbinv$,
meaning
the result of
applying the $\UniHCalcAbsorb{(\cdot)}$
transformation 
to $\UniHFLecRinv$.
The following result shows that provability is height-preserved via such transformation
with respect to 
$\UniHFLecRAbsorb$.

\begin{lemmarep}\label{lem-inv-calc-same-seq}
Let
$h$ be a hypersequent and $\alpha \in \mathbb{N}$. If
$\vdash_{\UniHFLecRAbsorb}^{\alpha} h$,
then
$\vdash_{\UniHFLecRAbsorbinv}^{\leq\alpha} h$,
and vice versa.
\end{lemmarep}
\begin{proof}
Both direction are proved by induction on the proof height. The base cases are immediate in both because axiomatic rules are in practice not affected by the inversion of the calculus.

\emph{(Left-to-right)}
Given a proof $\pi$ of $h$ in $\UniHFLecRAbsorb$ of height $\alpha$, let
$\UniHRuleAbsorb{\UniRuleSchemaA}$
be 
the last rule applied in $\pi$, say with premises $h_1, \ldots, h_n$, whose form is
 $h_i = g \VL s_i$.
Thus $h_i$ has a proof in $\UniHFLecRAbsorb$ of height $\alpha_i < \alpha$ for each $1 \leq i \leq n$. Take $h_0$ to be the conclusion of the base instance of $\UniHRuleAbsorb{\UniRuleSchemaA}$, which has the form $h_0 = g \VL C$, where $C$ is the multiset of principal components of $\UniRuleSchemaA$
(see the picture below, on the left).
Define $h'_i := h_i \VL C
= g \VL s_i \VL C$, and note that $h_i' \UniHyperseqStrWqo h_i$ by $\UniEW$.
By Lem~\ref{lem:curry-hyper-ext},
$h'_i$ has a proof of height
$\leq \alpha_i < \alpha$ for each $1 \leq i \leq n$.
By the IH, each such 
$h_i'$ has a proof in 
$\UniHFLecRAbsorbinv$ of height $\leq \alpha_i < \alpha$.
Now, applying the appropriate instance of $\UniHRuleAbsorb{\inv{\UniRuleSchemaA}}$, as shown on the right of the picture below, we obtain a proof of $h$ in $\UniHFLecRAbsorbinv$
with height $\leq \alpha$. Here $g \VL C\UniIntExtCtrRel{\UniFmlaMultFixed-1}{\UniActiveCompFixed}h$.

\begin{center}
\AxiomC{$g \VL s_1 \;\; \cdots \;\; g \VL s_n$}
\RightLabel{$\UniHRuleAbsorb{\UniRuleSchemaA}$}
\UnaryInfC{$g \VL C$}
\DisplayProof
\qquad
\AxiomC{$g \VL C \VL s_1  \;\; \cdots \;\; g \VL C \VL s_n$}
\RightLabel{$\UniHRuleAbsorb{\inv{\UniRuleSchemaA}}$}
\UnaryInfC{$g \VL C$}
\DisplayProof
\end{center}

\emph{(Right-to-left)}
Given a proof 
$\pi$
of $h$ in $\UniHFLecRAbsorbinv$ of height $\alpha$, let 
$\UniHRuleAbsorb{\inv{\UniRuleSchemaA}}$ be the last rule instance applied in $\pi$, say with premises $h_1=g \VL C \VL s_1 ,\ldots, h_n = g \VL C \VL s_n$, where $C$ is the multiset of principal components of the conclusion $h_0$ of the corresponding base instance,
i.e., $h_0 = g \VL C$, and recall that
$h_0 
\UniIntExtCtrRel{\UniFmlaMultFixed-1}{\UniActiveCompFixed}
h
$.
By the IH,
there are proofs in
$\UniHFLecRAbsorb$
of $h_i$ with height $\leq \alpha_i < \alpha$
for each $1 \leq i \leq n$.
We know that there is
a base rule instance 
of $\UniHRuleAbsorb{\UniRuleSchemaA}$
in $\UniHFLecRAbsorb$
with same premises and conclusion
$h'_0 = g \VL C \VL C$
(the hypersequent variable being instantiated with $g \VL C$ now).
This proof of $h'_0$ has height $\leq \alpha$.
But then
$h \UniHyperseqStrWqo 
h_0
\UniHyperseqStrWqo
h'_0$,
and then $h$ has a proof of
height $\leq \alpha$
in
$\UniHFLecRAbsorb$
by Lem~\ref{lem:curry-hyper-ext}.
If the last rule applied
was $\UniEC$ or $\UniEW$,
the result directly follows by IH and 
Lem~\ref{lem:curry-hyper-ext}.
\end{proof}

Then, by Lem~\ref{lem:curry-hyper-ext}
and
Lem~\ref{lem-inv-calc-same-seq}:
\begin{lemmarep}\label{lem:curry-for-invertible}
Let $h'$ and $h$ be
    hypersequents
    such that
    $h' \UniHyperseqStrWqo h$.
If
$\vdash_{\UniHFLecRAbsorbinv}^{\alpha} h$,
then
$\vdash_{\UniHFLecRAbsorbinv}^{\leq\alpha} h'$.
\end{lemmarep}
\begin{proof}
Assume that
$h' \UniHyperseqStrWqo h$
and 
$\vdash_{\UniHFLecRAbsorbinv}^{\alpha} h$.
By Lem~\ref{lem-inv-calc-same-seq},
we have
$\vdash_{\UniHFLecRAbsorb}^{\beta} h$ with $\beta \leq \alpha$,
and by Lem~\ref{lem:curry-hyper-ext}
we have
$\vdash_{\UniHFLecRAbsorb}^{\gamma} h'$ with $\gamma \leq \beta$,
and again by Lem~\ref{lem-inv-calc-same-seq},
we obtain
$\vdash_{\UniHFLecRAbsorbinv}^{\leq\gamma} h'$.
Since $\gamma \leq \alpha$, we are done.
\qedhere

\end{proof}

\subsection{Proof search procedure}

Throughout this section, 
we denote by
$\UniHyperCalcA$ 
a calculus of the form
$\UniHFLecRAbsorbinv$ for some finite set $\UniAnaRuleSet$ of analytic structural rules.
The core idea is to focus only on hypersequents that are themselves bad sequences in the sequent wqo. 

A \emph{linearisation} of a hypersequent is any sequence of its components
(preserving the multiplicities).
For example,
if $h = s_1 \VL s_2 \VL s_3$ is a hypersequent,
then a linearization of $h$ is any
sequence $(s_i,s_j,s_k)$, 
where $\{i,j,k\} = \{1,2,3\}$.

Next, we define a nwqo on $S$-sequents, where $S$ is a finite set of formulas. Fix an enumeration $S= \{\varphi_1,\dots,\varphi_d\}$. A multiset $\Gamma$ of such formulas can be viewed as a vector in $v(\Gamma) \in \mathbb{N}^d$, where the $i^{th}$ component is the number of occurrences of $\varphi_i$ in $\Gamma$. We say that $\Gamma_1 \le \Gamma_2$ iff 
$v(\Gamma_1) \le_{\mathbb{N}^d} v(\Gamma_2)$ according to the product wqo over $\mathbb{N}^d$.
Hence, the defined order among multisets of formulas is also a wqo.

Now, for two $S$-sequents $s_1, s_2$ we say that $s_1 \le_S s_2$ iff $s_1 := \Gamma_1 \Rightarrow \Pi_1$, $s_2 = \Gamma_2 \Rightarrow \Pi_2$, $\Pi_1 = \Pi_2$ and $\Gamma_1 \le \Gamma_2$. It is easily seen that 
this order on the $S$-sequents is also a wqo. Hence, if we now consider the norm for sequents defined in Def~\ref{def:norms-hyper-seq}, we get a nwqo for $S$-sequents, which we denote by $\mathbf{W}(S)$. 

We ultimately want to consider only those hypersequents in our proof search whose linearisation yields a controlled bad sequence of sequents (for some suitable control function). To that end, we say that a function $f$ is an \emph{$\UniHyperCalcA$-control} if $\UniNorm{h'}{} \leq f(\UniNorm{h}{})$ for every rule instance in $\UniHyperCalcA$ having~$h$ as conclusion and $h'$ as a premise (Here we are using the norm for hypersequents in Def~\ref{def:norms-hyper-seq}). It is easy to see that a primitive recursive $\UniHyperCalcA$-control exists---indeed, even a linear function---for every hypersequent calculus~$\UniHyperCalcA$ considered here.

\begin{lemmarep}\label{lem:control-hyper-calc}
The function
$f(x) = (\UniFmlaMultFixed^2\UniActiveCompFixed^2 \cdot x) + 1$ is an $\UniHyperCalcA$-control.
\end{lemmarep}
\begin{proof}
Let $\UniHRuleAbsorb{\inv{\UniRuleSchemaA}}$ be a rule of $\UniHyperCalcA$
with conclusion $h$. Let $h'$ be a premise of it. Hence, there must exist $g$ such that $h'=g|s$ and  $g \UniIntExtCtrRel{\UniFmlaMultFixed-1}{\UniActiveCompFixed} h$. By definition of 
$\UniIntExtCtrRel{\UniFmlaMultFixed-1}{\UniActiveCompFixed}$, it follows that
$\UniNorm{g}{} \leq (\UniFmlaMultFixed-1)(\UniActiveCompFixed) \UniNorm{h}{}$.
Furthermore, by definition of the rules from $\UniHyperCalcA$, we have that every multiset variable that was instantiated in  $\UniHRuleAbsorb{\inv{\UniRuleSchemaA}}$ for the sequent $s$ must also occur in the principal components of the conclusion. This implies then that $\UniNorm{s}{} \le (\UniActiveCompFixed-1)\UniNorm{g}{} + 1$, (where the addition of a 1 comes from the fact that logical rules can increase the multiplicity of a formula by 1). Combining this with 
$\UniNorm{g}{} \leq (\UniFmlaMultFixed-1)(\UniActiveCompFixed) \UniNorm{h}{}$ proves
that $\UniNorm{h'}{} \leq \UniActiveCompFixed^2 \UniFmlaMultFixed^2 \UniNorm{h}{} + 1$.
\end{proof}

Now, let~$\pi$ be a derivation of some hypersequent~$h_0$ in~$\UniHyperCalcA$, and let~$h_0,\ldots,h_n$ be the sequence of hypersequents along any branch~$\mathsf{B}$ in~$\pi$. Let~$r_k$ ($0<k\leq n$) be the corresponding rule instance in~$\pi$ with conclusion~$h_{k-1}$ and premise~$h_{k}$, and let $s_k$ be the active component in that premise.
For any linearisation~$\lin{h}_0$ of $h_0$, the sequence $\lin{h}_0,s_1,\ldots,s_k$ is called the \emph{$\lin{h}_0$-sequence of~$\mathsf{B}$}. 

Let $\lin{h}$ be any linearisation of a hypersequent~$h$
and let $S$ be the set of subformulas of $h$; let $\lin{h}(0)$ denote the first element in the sequence. Given a control function $f$,
a \emph{$f$-minimal proof of $\lin{h}$} in $\UniHyperCalcA$ is a proof of $h$ such that
the $\lin{h}$-sequence of every branch is a $(f,\UniNorm{\lin{h}(0)}{})$-controlled bad sequence
over $\mathbf{W}(S)$.
This is well-defined as proofs in 
$\UniHyperCalcA$
are cut-free and thus all  elements of the $\lin{h}$-sequences are $S$-sequents.


\begin{lemma}[Existence of minimal proofs]\label{lem-minimal-proofs-complete}
Let $f$ be any $\UniHyperCalcA$-control.
Let $\lin{h}$ be a linearization of a hypersequent $h$ that is a $(f,\UniNorm{\lin{h}(0)}{})$-controlled bad sequence over $\mathbf{W}(S)$, where $S$ is the set of subformulas of $h$.
Then, $h$ is provable in $\UniHyperCalcA$ iff $\lin{h}$ has an $f$-minimal proof in $\UniHyperCalcA$.
\end{lemma}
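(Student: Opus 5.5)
The right-to-left direction is immediate: an $f$-minimal proof of $\lin{h}$ is in particular a proof of $h$ in $\UniHyperCalcA$. For the left-to-right direction I would take a proof of $h$ of minimal height and show, by induction on height, that any proof of minimal height can be transformed into an $f$-minimal one. The argument runs along each branch, using Lem~\ref{lem:curry-for-invertible} to shortcut any branch whose $\lin{h}$-sequence fails to be bad.

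\textbf{Key invariant.} I would prove a slightly stronger statement by induction on the height $\alpha$ of a shortest proof. For every hypersequent $h$ and every linearisation $\lin{h}$ that is an $(f,n)$-controlled bad sequence, there is a proof of $h$ in which the $\lin{h}$-sequence of every branch is $(f,n)$-controlled and bad.

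Here a bad sequence $\lin{h}$ starting at index $0$ is continued by the active components $s_1,s_2,\ldots$. The control index must be counted by position in the sequence, not by step along the branch. This needs care. Since $\lin{h}$ already has length $|h|$, the element $s_k$ sits at position $|h|+k-1$, and we need $\UniNorm{s_k}{}\le f^{|h|+k-1}(n)$. Because $\lin{h}$ is controlled, $\UniNorm{h}{}\le f^{|h|-1}(n)$, using that the hypersequent norm is a maximum over components and $f$ is monotone and inflationary. Applying the $\UniHyperCalcA$-control property step by step then gives $\UniNorm{s_k}{}\le\UniNorm{h_k}{}\le f^{k}(\UniNorm{h}{})\le f^{|h|+k-1}(n)$. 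So control comes for free, by Lem~\ref{lem:control-hyper-calc} and the fact that premises in the invertible calculus contain the conclusion, which keeps $h_{k-1}\subseteq h_k$ up to contraction.

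\textbf{Badness.} Take a shortest proof of $h$ and let $r$ be its last rule, with premises $h\,|\,s^{(i)}$ (modulo the bounded contractions in $\UniHRuleAbsorb{\cdot}$). Suppose the new active component $s^{(i)}$ dominates some earlier element of the sequence: $t\le_S s^{(i)}$ for some $t$ in $\lin{h}$. Then $s^{(i)}$ can be contracted internally, via $(\UniCProp)$, down to $t$, and the resulting copy of $t$ can be removed by $\UniEC$. This yields $h' \UniHyperseqStrWqo h\,|\,s^{(i)}$, where $h'$ is essentially the part of the premise other than $s^{(i)}$. By Lem~\ref{lem:curry-for-invertible}, $h'$ has a proof of height strictly less than that of the premise.

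Now $h'$ is, up to the absorbed contractions, contained in $h$. Another application of $\UniEW$ and Lem~\ref{lem:curry-for-invertible} then gives a proof of $h$ of height smaller than $\alpha$, contradicting minimality. Hence in a proof of minimal height every new active component avoids dominating earlier sequence elements, so the extended sequence $\lin{h},s^{(i)}$ is again bad and controlled.

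Applying the induction hypothesis to each premise, with the linearisation $\lin{h}$ followed by $s^{(i)}$, and reassembling the subproofs yields the invariant.

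\textbf{Main obstacle.} The delicate step is the shortcut argument when the rule is a refined instance. There the premise is $g\,|\,s$ with $g \UniIntExtCtrRel{\UniFmlaMultFixed-1}{\UniActiveCompFixed} h$ rather than literally $h\,|\,s$. The linearisation of the premise must then be taken as the linearisation of $g$ induced by descendants (Prop~\ref{prop:descendands-in-c}), and I need to check two things:
\begin{itemize}
\item the induced sequence is still bad, since contracting components only shrinks them, though care is needed that shrinking does not create new dominations among old components;
\item the sequence positions still match the control bound.
\end{itemize}
I would handle the first by observing that domination between descendants would allow an $\UniEC$/$(\UniCProp)$ shortcut of the same kind, again contradicting height-minimality. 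Failing that, I would strengthen the invariant to allow the prefix to be replaced by any $\UniHyperseqStrWqo$-smaller bad sequence.
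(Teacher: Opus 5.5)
The gap is the step you flag as the main obstacle, and the repair you sketch does not work. In $\UniHyperCalcA$ the premises of the last rule are $g\VL C\VL s_i$ (your $g$ includes $C$), with $g\VL C \UniIntExtCtrRel{\UniFmlaMultFixed-1}{\UniActiveCompFixed} h$. So ``$\lin{h}$ followed by $s_i$'' is not a linearisation of any premise. Applying the induction hypothesis to the premise with some other linearisation produces sequences that do not begin with $\lin{h}$.

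Linearising $g\VL C$ through descendants goes the wrong way. Each component of $g\VL C$ is a \emph{pre}-contraction of its descendant in $h$, hence larger, not smaller. Several components can also share one descendant, for instance a principal component and an identical copy already in the context. The induced sequence may therefore contain comparable or equal entries, so it need not be bad; it also need not be $(f,n)$-controlled at position $0$.

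More fundamentally, even an $f$-minimal proof relative to such a sequence would not give what you need. A later active component can dominate $t\in h$ without dominating the larger pre-image of $t$. Badness with respect to $g\VL C$ therefore does not yield badness with respect to $\lin{h}$, which is what $f$-minimality of $\lin{h}$ requires. Your fallback of strengthening the invariant is too unspecified to close this.

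The missing idea, which is what the paper does, is to shrink the premise rather than linearise it, and then rebuild the rule around $h$.
\begin{itemize}
\item Since $h\VL s_i \UniHyperseqStrWqo g\VL C\VL s_i$, Lem.~\ref{lem:curry-for-invertible} gives a proof of $h\VL s_i$ of height $\le\alpha_i<\alpha$. Now $\lin{h},s_i$ literally linearises this hypersequent. Either it is bad, and then (automatically controlled, as your global control argument correctly shows) the induction hypothesis applies to it. Or it is not bad, and $h\UniHyperseqStrWqo h\VL s_i$ gives a shorter proof of $h$. That second branch is your minimal-height contradiction, equivalent to the paper's second case, and your Badness paragraph already tacitly relies on this proof of $h\VL s_i$.
\item For reassembly you must check that the same rule can be instantiated with context $h$ and conclusion $h$, i.e.\ that $h\VL C \UniIntExtCtrRel{\UniFmlaMultFixed-1}{\UniActiveCompFixed} h$. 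By Prop.~\ref{prop:descendands-in-c}, each component of $C$ contracts within the budget $\UniFmlaMultFixed-1$ to a component of $h$. Since $|C|\le\UniActiveCompFixed$, the external-contraction budget suffices to remove the resulting copies.
\item The new proof then has exactly the $s_i$ as its first active components. Its $\lin{h}$-sequences are the $(\lin{h},s_i)$-sequences of the subproofs. The extra $C$ in the premises $h\VL C\VL s_i$ can be threaded through the hypersequent variable of every rule above without changing any active component.
\end{itemize}

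Separately, and shared with the paper: $t\le_S s_i$ in the product order does not let $(\UniCProp)$ turn $s_i$ into $t$, because contraction cannot remove the last copy of a formula. For example, take $\lin{h}=(\Rightarrow p),(\Rightarrow p\imp p)$. Every proof must create an active component $p,\ldots,p\Rightarrow p$, which product-dominates $\Rightarrow p$, so no $f$-minimal proof exists even though $h$ is provable. The order should compare only antecedents of equal support. That is still a finite disjoint sum of copies of $\mathbb{N}^d$, so Thm.~\ref{the:length-theorem-dickson} still applies.
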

\begin{proof}
The right-to-left direction is trivial since an $f$-minimal proof of $\lin{h}$ is a proof of $h$.
For the left-to-right direction, suppose that $\pi$ is a proof of $h$
in $\UniHyperCalcA$. Induction on the height $\alpha$ of~$\pi$.

(\textit{Base case})
If $\pi$ has height~$1$, then $h$ must be an initial hypersequent. The latter is a proof containing a singleton branch. Consequently, the $\lin{h}$-sequence of this branch is simply $\lin{h}$. The latter is a $(f,\UniNorm{\lin{h}(0)}{})$-bad sequence by hypothesis, so the claim holds.

(\emph{Inductive step})
Suppose $\pi$ has height $\alpha>1$. Consider the last rule $\UniHRuleAbsorb{\UniRuleSchemaA}$ applied in $\pi$. Let~$\pi_i$ be the subproof of $i$-th premise of~$\UniHRuleAbsorb{\UniRuleSchemaA}$, with height $\alpha_i < \alpha$. Note that the root of $\pi_i$ has the form $g \VL C \VL s_i$,  where $s_i$ is the active component of the rule and $C$
is the multiset of principal components of the base instance of $\UniHRuleAbsorb{\UniRuleSchemaA}$. Here $g \VL C \UniIntExtCtrRel{\UniFmlaMultFixed-1}{\UniActiveCompFixed}h$.
\begin{center}
\begin{small}
    \AxiomC{$\pi_1$}
    \noLine
    \UnaryInfC{$g \VL C \VL s_1$}
    \AxiomC{$\ldots$}
    \AxiomC{$\pi_n$}
    \noLine
    \UnaryInfC{$g \VL C \VL s_n$}
    \RightLabel{$\UniHRuleAbsorb{\UniRuleSchemaA}$}
    \TrinaryInfC{$g \VL C$}
    \DisplayProof
\end{small}    
\end{center}

We then have that
$h \VL s_i \UniHyperseqStrWqo
g \VL C \VL s_i$
for each $1 \leq i \leq n$, and thus, by Lem~\ref{lem:curry-for-invertible},
there are proofs
$\pi'_i$
of each $h \VL s_i$
with height $\leq \alpha_i < \alpha$.
We now reason by cases on
the sequences
$\lin{h},s_i$:

\begin{enumerate}[leftmargin=*]
    \item $\lin{h},s_i$ is a $(f,\UniNorm{\lin{h}(0)}{})$-controlled bad sequence for every $i$: Applying the IH to $\pi'_i$, we obtain a $f$-minimal proof~$\pi_i''$ of $\lin{h},s_i$ for each~$i$. By definition, the $(\lin{h},s_i)$-sequence of every branch is a $(f,\UniNorm{\lin{h}(0)}{})$-controlled bad sequence.
We now claim that the following is a proof in $\UniHyperCalcA$, with 
$h \VL C \UniIntExtCtrRel{\UniFmlaMultFixed-1}{\UniActiveCompFixed}h$.
\begin{center}
\begin{small}
    \AxiomC{$\pi''_1$}
    \noLine
    \UnaryInfC{$h \VL C \VL s_1$}
    \AxiomC{$\ldots$}
    \AxiomC{$\pi''_n$}
    \noLine
    \UnaryInfC{$h \VL C \VL s_n$}
    \RightLabel{$\UniHRuleAbsorb{\UniRuleSchemaA}$}
    \TrinaryInfC{$h \VL C$}
    \DisplayProof
\end{small}    
\end{center}

For that, it is enough to show that 
$
h \VL C \UniIntExtCtrRel{\UniFmlaMultFixed-1}{\UniActiveCompFixed}
h
$.
We know that 
$
g \VL C \UniIntExtCtrRel{\UniFmlaMultFixed-1}{\UniActiveCompFixed}
h
$. Therefore, by Prop~\ref{prop:descendands-in-c},
for each $s' \in C$,
there is $s'' \in h$
such that $s' \UniIntCtrRel{\UniFmlaMultFixed-1}{\UniCProp}
s''$.
Thus, $h \VL C \UniIntCtrRel{\UniFmlaMultFixed-1}{\UniCProp} h \VL C'$,
where every component in $C'$ appears in $h$.
But $|C| \leq \UniActiveCompFixed$
because $C$ is the multiset of principal components of the base instance, and so 
$h \VL C' \UniExtCtrRel{\UniActiveCompFixed}{\UniEC} h$. Hence, $h \VL C \UniIntExtCtrRel{\UniFmlaMultFixed-1}{\UniActiveCompFixed} h$, as desired.

It only remains to argue that the above proof, which we call $\pi_0$, is an $f$-minimal proof of $\lin{h}$.
The $\lin{h}$-sequence of any branch in $\pi_0$ has the form $\lin{h},s_i,\ldots$ for some $i$. By inspection, it is the $(\lin{h},s_i)$-sequence of some branch in $\pi_i''$. We already observed that the latter is a $(f,\UniNorm{\lin{h}(0)}{})$-controlled bad sequence. Thus, $\pi_0$ is an $f$-minimal proof of $\lin{h}$.
    \item  $\lin{h},s_i$ is not a $(f,\UniNorm{\lin{h}(0)}{})$-controlled bad sequence for some~$i$: Note that $\lin{h}$ is a $(f,\UniNorm{\lin{h}(0)}{})$-controlled bad 
    sequence by assumption. 
    We now argue that this sequence must be $(f,\UniNorm{\lin{h}(0)}{})$-controlled
    (so that it must fail to be bad).
    Because $\UniNorm{g \VL C \VL s_i}{} \leq f(\UniNorm{h}{})$
    by Lem~\ref{lem:control-hyper-calc},
    and
    by definition of
    $\UniNorm{\cdot}{}$ we have $\UniNorm{s_i}{} \leq \UniNorm{g \VL C \VL s_i}{}$,
    we obtain
    $\UniNorm{s_i}{}
    \leq f(\UniNorm{h}{})$.
     By definition of the norms for hypersequents we get that 
     $\UniNorm{h}{} = \UniNorm{\lin{h}(j)}{}$ for some $0\leq j < |h|$.
    By the control property, we have
    $\UniNorm{\lin{h}(j)}{} \leq 
    f^j(\UniNorm{\lin{h}(0)}{})
    \leq f^{|h|-1}(\UniNorm{\lin{h}(0)}{})$.
    Thus $\UniNorm{s_i}{} \leq f(f^{|h|-1}(\UniNorm{\lin{h}(0)}{})) = f^{|h|}(\UniNorm{\lin{h}(0)}{})$,
    and $\lin{h},s_i$
    is controlled.
    So
    the sequence must fail to be bad, hence there
    exists $t\in\lin{h}$ s.t. $t$ can be obtained from $s_i$ by (zero or more) contractions. In either case, it follows that
$h \UniHyperseqStrWqo h|s_i$. By Lem~\ref{lem:curry-for-invertible} we obtain a proof~$\pi_0$ of $h$ of height $<\alpha$. Applying the IH to $\pi_0$, we obtain an $f$-minimal proof of $\lin{h}$.\qedhere
\end{enumerate}
\end{proof}


\begin{definition}[Proof-search tree]\label{def:minimal-proof-search-flec}
The \emph{minimal proof search tree} in $\UniHyperCalcA$ for the sequent $s$ is
defined as the limit tree of the following.
\begin{description}[{labelindent=\parindent},leftmargin=6em,style=nextline]
\item[$T_0(s)$] A single node labelled with $s$.
\item[$T_{n+1}(s)$] For each leaf~$\ell$ (labelled by the hypersequent~$h$, say) in $T_n(s)$, and each rule instance that has $h$ as conclusion,
add the premises as children of~$\ell$, provided tree so obtained is $f$-minimal.
\end{description}
\end{definition}

\begin{lemma}\label{lem:termination-flec}
    Given a sequent $s$,
    there is a least
    $N \in \mathbb{N}$ such that $T_N(s)=T_{N+1}(s)$.
    Moreover, 
    there is
    $U \in \UniFGHOneAppLevel{\omega}$
    such that
    $N$
    is upper bounded by 
$U(\UniNorm{s}{})$
for any sequent $s$.
\end{lemma}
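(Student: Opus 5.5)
The plan is to bound the height of every tree $T_n(s)$ by the maximal length of a controlled bad sequence over a disjoint sum $k\cdot\mathbb{N}^d$, and then invoke Theorem~\ref{the:length-theorem-dickson} together with Theorem~\ref{thm:composition-fomega}.

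First I fix the data. Let $S$ be the set of subformulas of $s$ and $d=|S|$. By the subformula property, every hypersequent in the tree is an $S$-hypersequent. An $S$-sequent $\Gamma\Rightarrow\Pi$ is determined by the pair $(\Pi,v(\Gamma))$, where the stoup $\Pi$ ranges over $k=|S|+1$ values. Hence $\mathbf{W}(S)$ is isomorphic, as a normed wqo, to $k\cdot\mathbb{N}^d$: $s_1\le_S s_2$ iff equal stoups and $v(\Gamma_1)\le v(\Gamma_2)$, and the norm is the maximal coordinate. Both $d$ and $k$ are bounded by the size of $s$, which is polynomially related to the input.

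Next I relate depth to bad-sequence length. Every node at depth $m$ lies on a branch $h_0=s,h_1,\dots,h_m$. By construction of $T_{n+1}(s)$, which only adds premises while the tree remains $f$-minimal, the $\lin{h}_0$-sequence $s,s_1,\dots,s_m$ of this branch is an $(f,\UniNorm{s}{})$-controlled bad sequence over $\mathbf{W}(S)$. Here $f$ is the linear $\UniHyperCalcA$-control of Lemma~\ref{lem:control-hyper-calc}, and the initial linearisation is just $(s)$, which is trivially controlled and bad. Thus $m+1\le L_{k\cdot\mathbb{N}^d,f}(\UniNorm{s}{})$, so every tree has depth at most $L:=L_{k\cdot\mathbb{N}^d,f}(\UniNorm{s}{})$.

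Finally I turn the depth bound into a bound on $N$. Each step $T_n\to T_{n+1}$ that changes the tree extends some leaf at depth exactly $n$, because only leaves are expanded and a leaf at depth $<n$ that was not expanded before cannot be expanded later; the $f$-minimality test for a given leaf is the same at every stage. So if $T_N\neq T_{N+1}$ then the tree has depth $>N$, which gives $N\le L$. This also shows that $N$ exists, even though the trees may branch over finitely many rule instances. By Theorem~\ref{the:length-theorem-dickson}, $L\le U'(k,d,\UniNorm{s}{})$ for some $U'\in\UniFGHOneAppLevel{\omega}$.

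The main obstacle is that $k$ and $d$ depend on $s$ and not only on $\UniNorm{s}{}$. The statement should therefore be read with $U$ taking the size of $s$ (or $\UniNorm{s}{}$ together with $|s|$) as argument. I would handle this by setting $U(x)=U'(x,x,x)$ with $x=\max(|s|,\UniNorm{s}{})$, and using monotonicity of length functions in $k$, $d$ and the initial norm, since a bad sequence in a smaller nwqo embeds in a larger one. Theorem~\ref{thm:composition-fomega} then guarantees that composing with the primitive recursive diagonal keeps $U$ in $\UniFGHOneAppLevel{\omega}$.
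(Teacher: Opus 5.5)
Your proof is correct and follows essentially the paper's route: branches of $T_n(s)$ give $(f,\|s\|)$-controlled bad sequences over $\mathbf{W}(S)\cong k\cdot\mathbb{N}^d$, and the Dickson length theorem bounds their length. The only difference is that you get the existence of $N$ directly from the observation that stage $n+1$ can only expand leaves at depth $n$, whereas the paper argues existence separately via K\"onig's lemma. Your caveat that $U$ must also depend on the size of $s$ (through $k$ and $d$) is justified and is glossed over in the paper's statement and proof. For example, $\Rightarrow p_1\to(p_2\to\cdots\to(p_n\to q))$ has norm $0$ but forces $N\geq n$.
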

\begin{proof}
    Let $S$ be the set of subformulas of $s$.
    By construction, every branch in $T_n(s)$ corresponds to a $(f,\UniNorm{s}{})$-controlled bad sequence over $\mathbf{W}(S)$. Suppose there is no  $N \in \mathbb{N}$ such that $T_N(s)=T_{N+1}(s)$. Then, consider the infinite limit tree $\bigcup_{n \geq 0}  T_n(s)$. This tree must be finitely branching,
    since every rule schema of $\UniHyperCalcA$ can only be instantiated in finitely many ways
    for every hypersequent. By K\"onig's Lemma there would exist an infinite branch, which induces an infinite bad sequence
    over $\mathbf{W}(S)$, what is absurd.
Moreover, this $N$ is uniformly upper bounded by a function in $\UniFGHOneAppLevel{\omega}$
because each step in the construction
adds a new element to a $(f,\UniNorm{s}{})$-controlled bad sequence producing a longer $(f,\UniNorm{s}{})$-controlled bad sequence, and by Thm~\ref{the:length-theorem-dickson} we know that the length of such a sequence is upper bounded by a function in $\UniFGHOneAppLevel{\omega}$
on $\UniNorm{s}{}$.
\end{proof}

From the above, the following is immediate.

\begin{theorem}\label{the:complexity-flec-ref-calc}
Provability in $\UniHyperCalcA$ is in $\UniFGHProbOneAppLevel{\omega}$.
\end{theorem}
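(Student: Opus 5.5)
The plan is to turn the minimal proof search tree of Def~\ref{def:minimal-proof-search-flec} into a decision procedure and bound its running time. Given an input formula $\varphi$, we consider the single-component hypersequent $s = (\Rightarrow \varphi)$. Its unique linearisation is a one-element sequence, which is trivially a $(f,\UniNorm{s}{})$-controlled bad sequence over $\mathbf{W}(S)$ for the linear $\UniHyperCalcA$-control $f$ of Lem~\ref{lem:control-hyper-calc}. Lem~\ref{lem-minimal-proofs-complete} then gives: $s$ is provable in $\UniHyperCalcA$ iff it has an $f$-minimal proof.

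First I will argue that every $f$-minimal proof of $s$ is embedded in the limit tree $T_N(s)$. Every node is expanded by all applicable rule instances whose premises keep the tree $f$-minimal. Hence, by induction on height, any $f$-minimal proof is a subtree obtained by choosing one rule instance at each node. Provability therefore reduces to evaluating the finite tree $T_N(s)$ as an AND/OR tree. An OR-choice is made among rule instances at a node. Each instance is an AND over its premises. A node succeeds if it is an instance of an initial hypersequent.

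Second, I will bound the cost. By Lem~\ref{lem:termination-flec}, the height is at most $N \le U(\UniNorm{s}{})$ with $U\in\UniFGHOneAppLevel{\omega}$. The branching is bounded by the number of rule instances with a given conclusion. That number is finite and primitive-recursively bounded in the size of the conclusion, since the conclusion is an $S$-hypersequent with at most $N+1$ components, each of norm at most $f^{N}(\UniNorm{s}{})$. Consequently, the size of the tree, and the time to build it and evaluate it (including the badness checks along each branch), is bounded by $g_1(U(g_2(|\varphi|)))$ for primitive recursive $g_1,g_2$. Here $g_2$ accounts for relating $\UniNorm{s}{}$ and $|S|$ to the input size, with $d=|S|$ absorbed as in Thm~\ref{the:length-theorem-dickson}. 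Theorem~\ref{thm:composition-fomega} then places this bound in $\UniFGHOneAppLevel{\omega}$, so the problem lies in $\UniFGHProbOneAppLevel{\omega}$.

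The main obstacle is making the dependence on the dimension $d=|S|$ uniform. Lem~\ref{lem:termination-flec} states the bound in terms of $\UniNorm{s}{}$ only. However, the length function of Thm~\ref{the:length-theorem-dickson} is $U(k,d,x)$, and $d$ varies with the input. I would handle this by taking $x=\max(d,\UniNorm{s}{})$, which is linear in $|\varphi|$, and feeding this diagonal into $U$. The resulting single-argument function is a composition of $U$ with primitive recursive maps, and Thm~\ref{thm:composition-fomega} keeps it in the class $\UniFGHOneAppLevel{\omega}$. A secondary check is that the exponential blow-up from branching only adds an outer primitive recursive function, which the same closure theorem absorbs.
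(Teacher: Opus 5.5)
Your proposal is correct and is essentially the paper's proof. Like the paper, you build $T_N(s)$, bound its height via Lem~\ref{lem:termination-flec}, search it for an $f$-minimal proof (which exists iff $s$ is provable, by Lem~\ref{lem-minimal-proofs-complete}), and absorb the primitive recursive overhead with Thm~\ref{thm:composition-fomega}. Your explicit treatment of the branching factor and of the dimension $d$ fills in details the paper leaves implicit; in particular, Lem~\ref{lem:termination-flec} states the bound in terms of $\|s\|$ alone, and $\|s\|$ is $0$ for ${\Rightarrow}\varphi$.

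One minor slip: you say a node at depth $N$ has at most $N+1$ components, but that is false in the refined calculus. There a premise has the form $h_0 \mid s_i$, where the conclusion is obtained from $h_0$ by bounded internal and external contractions, so the component count can grow by more than one per step. It is still bounded by a primitive recursive (exponential) function of $N$, so your estimate is unaffected.
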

\begin{proof}
    Given an input sequent $s$, by Lem~\ref{lem:termination-flec} we have that
    $T_N(s)$ can be constructed in space primitive recursive on
    $U \in \UniFGHOneAppLevel{\omega}$,
    which is thus also a function in $\UniFGHOneAppLevel{\omega}$ by Thm~\ref{thm:composition-fomega}. Furthermore, we can search among its subtrees for a minimal proof of $s$,
    which is again a primitive-recursive procedure by Thm~\ref{thm:composition-fomega}.
    By construction, any $f$-minimal proof of $s$ will be a subtree of $T_N(s)$. Hence, by Lem~\ref{lem-minimal-proofs-complete}, existence of such a subtree is equivalent to provability of $s$. 
\end{proof}

Finally, from 
Lem~\ref{lem:original-to-invert},
Lem~\ref{lem-inv-calc-same-seq}
and
Thm~\ref{the:complexity-flec-ref-calc}, we obtain:

\begin{theorem}[Main theorem I]
Provability in any extension of
$\UniFLeExtLogic{\UniCProp}$ 
by finitely many acyclic $\mathcal{P}'_3$-axioms
(i.e, provability in any analytic structural extension of $\UniFLeExtHCalc{\UniCProp}$)
has Ackermannian complexity.
\end{theorem}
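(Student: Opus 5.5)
The plan is to chain the reductions already established, starting from the logic and ending at a decision procedure for $\UniHFLecRAbsorbinv$. Fix a finite set $\UniAnaRuleSet$ of acyclic $\mathcal{P}'_3$-axioms. By the results of Ciabattoni et al.~\cite{CiaGalTer08}, recalled in the preliminaries, the extension of $\UniFLeExtLogic{\UniCProp}$ by these axioms is determined by an analytic calculus $\UniHFLecR$, where $\UniAnaRuleSet$ now denotes the corresponding analytic structural rule schemas. So a formula $\UniFmA$ is provable iff $\vdash_{\UniHFLecR} (\UniSequent{}{\UniFmA})$, and the input reduces to a single sequent.

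Next, we transfer provability through the chain of calculi. By Lem~\ref{lem-hp-admissible}(1), $\UniHFLecR$ and $\UniHFLecRAbsorb$ prove the same hypersequents. By Lem~\ref{lem-inv-calc-same-seq}, applied in both directions and ignoring heights, $\UniHFLecRAbsorb$ and $\UniHFLecRAbsorbinv$ also prove the same hypersequents. Lem~\ref{lem:original-to-invert} can be cited alongside to justify that passing to invertible forms does not change provability, although the route through Lem~\ref{lem-inv-calc-same-seq} already suffices. Consequently $\vdash \UniFmA$ iff $\vdash_{\UniHFLecRAbsorbinv}(\UniSequent{}{\UniFmA})$.

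We then apply Thm~\ref{the:complexity-flec-ref-calc} with $\UniHyperCalcA = \UniHFLecRAbsorbinv$. This places provability of the sequent $\UniSequent{}{\UniFmA}$ in $\UniFGHProbOneAppLevel{\omega}$. The translation from $\UniFmA$ to this sequent is trivial, and the constants $\UniFmlaMultFixed, \UniActiveCompFixed$ and the control $f$ depend only on the fixed calculus, not on the input. Any primitive-recursive overhead, such as computing the subformula set and its dimension $d$, is absorbed by Thm~\ref{thm:composition-fomega}. This gives the upper bound.

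The main point needing care is what "Ackermannian complexity" claims. If it means membership in $\UniFGHProbOneAppLevel{\omega}$, the argument above is complete. If it means completeness, hardness must come separately. Urquhart's~\cite{urquhart1999} lower bound gives $\mathbf{F}_\omega$-hardness for $\UniFLeExtLogic{\UniCProp}$ itself, as noted in the introduction, but it does not obviously carry over to every extension. I would therefore state the theorem as the upper bound, with optimality witnessed by the case $\UniAnaRuleSet = \varnothing$. The other subtle step is checking that the dimension $d$ of the wqo grows with the input while Thm~\ref{the:length-theorem-dickson} remains uniform in $d$. This holds because $U(k,d,x)$ is a single $\UniFGHOneAppLevel{\omega}$ function, so substituting $d$ and $x$ bounded polynomially by the input size stays within $\UniFGHOneAppLevel{\omega}$ by Thm~\ref{thm:composition-fomega}.
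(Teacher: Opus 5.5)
Your proposal is correct and follows the paper's own argument: you transfer provability from the logic down to $\UniHFLecRAbsorbinv$ via the calculus equivalences, invoke Thm~\ref{the:complexity-flec-ref-calc}, and read ``Ackermannian'' as membership in $\UniFGHProbOneAppLevel{\omega}$, as the paper does. The only difference is in the citations. The paper cites Lem~\ref{lem:original-to-invert} and Lem~\ref{lem-inv-calc-same-seq}, whereas your chain through Lem~\ref{lem-hp-admissible}(1) and Lem~\ref{lem-inv-calc-same-seq} makes explicit the link between $\UniHFLecR$ and $\UniHFLecRAbsorb$, which the paper's one-line derivation leaves implicit.
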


\section{Ackermannian upper bounds for hypersequent extensions of $\UniFLeExtLogic{\UniWProp}$: $\omega$-calculi}
\label{sec:ub-flew-calculus}
\begin{toappendix}
\begin{remark}[A high-level description of the algorithm]\label{rem-high-level}
\emph{
Taking inspiration from the Karp-Miller algorithm for computing coverability sets in vector addition systems~\cite{KarpM69}, we undertake the following algorithm. Here we provide a high-level informal description.}

\emph{
Given an input hypersequent, perform the usual backward proof search, exhaustively applying the rules, until an active component is encountered (in a premise) that is larger than some component that is already present. Instead of writing down this new active component, we write down a revised version, by replacing every $\psi^k$ in it that had strictly larger multiplicity than its counterpart (in the already-present component) with~$\psi^\omega$. So our proof search is built on $\omega$-hypersequents, itself built from $\omega$-components, where the antecedent consists of an $\omega$-set---the set of $\omega$-formulas---and a vector in $\mathbb{N}^d$ which holds the information about the formulas with finite multiplicities. A formula is an $\omega$-formula or has finite multiplicity (but not both).}

\emph{
We need to show three things: it is possible to obtain a usual proof from the $\omega$-proof (soundness), the proof search terminates, and it finds a proof whenever there is one to be found (completeness).}

\emph{
For soundness, the idea is that the path from an $\omega$-hypersequent where its active component strictly increased its $\omega$-set down to its $\omega$-partner below can be iterated as often as required, to replace, in the usual proof we are now building, $\psi^\omega$ with $\psi^k$ for suitably large $k$. We can initially work out the required value by a leaf-to-root analysis of the $\omega$-proof (since it is finite). It is more helpful to see this path from the rule instance introducing the $\omega$-partner to the revised active component with strictly larger $\omega$-set. To make this iteration work, we need to be able to pass the additional copies of formula~$\psi$ that we create, so the multiplicity grows after each iteration. For this to work, we need a stronger relationship between the $\omega$-partner and the revised active component: we need the two components to be related in a proof-theoretic sense; informally, the pair should be in the transitive closure of the relation $R$ of components that are instantiations of components in the rule schema that share a multiset-variable (it is this variable that will be used to pass the additional formulas along).}

\emph{
If the proof search does not terminate, that is because the proof search tree is infinite. We can easily extract an infinite $R$-tree from this. To get a contradiction, we need to extract an infinite branch on this tree, and for that we need to ensure that $R$ is finitely branching. The way we described $R$ in the previous paragraph does not ensure finite branching: after all, even the input hypersequent could be used infinitely often along a branch of proof search. The solution is to be even more stringent on the $R$ relation. Not only should it relate a principal component with an active component, it must relate the ``best'' principal component. The answer to what is best turns out to be the principal component that was created last and has the possibility to change the active component in the question (we call it a \emph{key ancestor}). Now, if this principal component were related to infinitely many components under the transitive closure, there must be a repeat since the best principal component controls how many different components can be created while remaining the best.}

\emph{
For completeness, it suffices to show that whatever is done in a usual proof can be simulated in the $\omega$-proof. Since $\psi^\omega$ has a natural meaning as infinitely many copies of $\psi$, the suitable rules for $\omega$-formulas are quite clear to deduce. In particular, $\omega$-formulas persist upwards.}\qed
\medskip

\end{remark}
\end{toappendix}

Taking inspiration from the Karp-Miller algorithm for computing coverability sets in vector addition systems~\cite{KarpM69}, in backward proof search,
if the active component of a premise is strictly larger than an earlier component, 
we record this increment by replacing every formula~$\UniFmA$ that has strictly larger multiplicity with $\UniFmA^\omega$. 
We refer to it as an \emph{$\omega$-formula}, which is a member of
the extended language
$
L_\omega \UniSymbDef 
L \cup \{ \UniFmA^\omega \mid \UniFmA \in L \}$. NB. In this language we allow e.g., $(\UniFmA \land \UniFmB)^\omega$, but not
$\UniFmA^\omega \land \UniFmB^\omega$.
Once a formula becomes an $\omega$-formula, the proof rules ensure that it remains so from that point on. 
\smallskip

\noindent\fbox{An high-level description appears as Remark~\ref{rem-high-level} in the Appendix.}
\smallskip



Throughout this and the remaining sections, as usual,
we use $\UniAnaRuleSet$ to mean
an arbitrary finite set of
analytic hypersequent structural rule schemas.
Since $\UniHFLewRinv$ satisfies the subformula property, a proof of an input hypersequent can be restricted to the finite set of its subformulas, say $\UniSubfmlaHyperseqSet \UniSymbDef \{ \UniFmA_1,\ldots,\UniFmA_d\}$.
For the latter, we use $\mathbb{N}_{\leq d} \UniSymbDef\{0,1,\ldots, d\}$, where $0$
represents the empty stoup.
I.e., a formula is represented by its index in a fixed enumeration of $\UniSubfmlaHyperseqSet$.
We also let $\mathbb{N}_{\leq d}^+ \UniSymbDef \{1,\ldots, d\}$.
Also, 
as in the previous section,
we read a multiset
over $S$ as a vector
$\AntVecA \in \mathbb{N}^d$,
such that $\AntVecA(a)$ is
the multiplicity of the formula (with index) $a$
in the multiset.
This motivates the following formalization.

\begin{definition}[$\omega$-(hyper)sequents]\label{def:omega-hyper}
Let $d \geq 1$.
An element $(\AntOmSetA;\AntVecA)\in \mathcal{P}(\mathbb{N}_{\leq d}^+)\times \mathbb{N}^d$ such that $\AntOmSetA\cap \{i \in \mathbb{N}_{\leq d}^+ \VL \AntVecA(i)>0\} = \varnothing$ is called an \emph{$(\omega,d)$-antecedent}.
Elements in $\AntOmSetA$ (resp., $\AntVecA$) are called 
\emph{$\omega$-coordinates}.
An \emph{$(\omega,d)$-sequent} is an expression $(\AntOmSetA;\AntVecA)\Ra b$, where $(\AntOmSetA;\AntVecA)$ is an $(\omega,d)$-antecedent and $b\in\mathbb{N}_{\leq d}$. We call $\AntOmSetA$
the \emph{$\omega$-set} of the $\omega$-sequent.
%
%
An \emph{$(\omega,d)$-hypersequent} is a multiset of $(\omega,d)$-sequents.
We may omit the $d$ from the notation when it can be inferred from context.
An $\omega$-sequent is \emph{$\omega$-free}
when its $\omega$-set is empty, and an $\omega$-hypersequent is \emph{$\omega$-free} when all of its components are $\omega$-free.
\end{definition}
Observe that an $(\omega,d)$-antecedent $(\AntOmSetA;\AntVecA)$ has the property that every formula $a\in \mathbb{N}_{\leq d}^+$ is either an $\omega$-formula (i.e., $a \in \AntOmSetA$ and $\AntVecA(i)=0$), or it is not an $\omega$-formula ($a\not\in \AntOmSetA$).

\begin{notation}\label{not:vector-coord-repl} 
Given $\kappa\subseteq\mathbb{N}_{\leq d}^+$, $k \in \mathbb{N}$
and $\AntVecA\in\mathbb{N}^d$,
let $\AntVecA[\kappa\mapsto k]$ denote the vector obtained by setting the coordinates of $\AntVecA$
in $\kappa$
to $k$ and leaving the other coordinates unchanged. 
For $\AntVecA\in\mathbb{N}^d$
and $1 \leq a \leq d$, we write $\AntVecA,a$ to mean $\AntVecA+\mathbf{e}_a$, where $\mathbf{e}_a$ is the $a$-th unit vector.
\end{notation}


\begin{definition}[Adding formulas to $(\AntOmSetA;\AntVecA)$]\label{def:adding-formulas-omega-ant}
Define the $\omega$-antecedent $((\AntOmSetA;\AntVecA),a)$ obtained by the addition of the formula~$a$ to $(\AntOmSetA;\AntVecA)$ to be $(\AntOmSetA;\AntVecA)$ if $a\in \AntOmSetA$, else it is $(\AntOmSetA;\AntVecA,a)$.
For a multiset $I=\UniMultiset{a_1,\ldots,a_k}$ of formulas, define $(\AntOmSetA;\AntVecA),I$ as $(\ldots((\AntOmSetA;\AntVecA),a_1),\ldots),a_k$.
This does not depend on the ordering in the enumeration of $I$ so it is well-defined. 
\end{definition}

In words, the addition of the formula $a$ to $(\AntOmSetA;\AntVecA)$ leaves the latter unchanged if $a$ is already an $\omega$-formula (i.e., $a \in \AntOmSetA$), else its finite multiplicity is incremented (i.e., $\AntVecA$ becomes $\AntVecA,a$).

\subsection{The calculus $\UniHFLewRinv_\omega$}


Our first calculus dealing with $\omega$-formulas
is denoted $\UniHFLewRinv_\omega$, and its rule instances are presented in Fig.~\ref{fig-omega-rules}.
In order to understand it fully, we need to present the notion of \emph{$\omega$-structural rules}.

Every $\UniRuleSchemaA \in \UniAnaRuleSet$
has the form in Def~\ref{def:ana-struct-rule}, so
every principal component
corresponds to an index in $I \cup J$,
and every active component corresponds either to an index $(i,k)$ with $i \in I$
and $k \in K_i$,
or to an index $l \in L$.
For $i \in I$
and $k \in K_i$,
let $C_{i,k}$ be the set of those principal component indices $v \in I \cup J$
that have a multiset metavariable in common with the active component $(i,k)$.
Define $C_{l}$
for each active component $l \in L$ analogously.
Finally, define
$\AntOmSetA[i,k]$
as $\bigcup_{v \in C_{i,k}} \AntOmSetA_v$
for each $(i,k)$,
and $\AntOmSetA[l]$
as $\bigcup_{v \in C_{l}} \AntOmSetA_v$
for each $l \in L$.

\begin{definition}[$\omega$-structural rules, $\UniAnaRuleSet_\omega$]\label{def:omega-struct-rules}
If $\UniRuleSchemaA$ is an 
analytic structural hypersequent rule schema,
then $\UniRuleSchemaA_\omega$ is the $\omega$-version below, where $h_c$ is the conclusion meta-hypersequent.
\begin{center}
    \begin{footnotesize}
    \AxiomC{
    $\{ h_c \VL \UniSequent{(\AntOmSetA[i,k]; \varnothing),Y_i,\vec{\Gamma}_{ik}}{\UniMSetSucA_i}\}_{i \in I, k \in K_i}$
    }
    \AxiomC{$\{ 
    h_c \VL 
    \UniSequent{(\AntOmSetA[l]; \varnothing),\vec{\Delta}_l}{}
    \}_{l \in L}$}
    \RightLabel{$\UniRuleSchemaA_\omega$}
    \BinaryInfC{
    $H 
    \VL 
    \UniSequent{(\AntOmSetA_i;Y_i, X_{i1},\ldots,X_{i\alpha_i})}{\UniMSetSucA_i} \, (i \in I)
    \VL
    \UniSequent{(\AntOmSetA_j;Z_{j1},\ldots,Z_{j\beta_j})}{} (j \in J)
    $}
    \DisplayProof
    \end{footnotesize}
\end{center}
\end{definition}

We remind the reader that an $\omega$-hypersequent is a multiset of $\omega$-sequents of the form $(\AntOmSetA;\AntVecA)\Ra b$.
The proof calculus $\UniHFLewRinv_\omega$ in Fig.~\ref{fig-omega-rules} operates on $(\omega,d)$-hypersequents. For example, in addition to the usual logical rules---in the present setting, these operate on those formulas with finite multiplicities (described by the vector~$\AntVecA$)---there are also logical rules that operate on the $\omega$-coordinates~$\AntOmSetA$. As a concrete example, consider the rule schema
\begin{center}
\begin{footnotesize}
\AxiomC{$h_c \VL \UniSequent{(\AntOmSetA\cup\{\UniFmA\imp\UniFmB\};\UniMSetFmA)}{\UniFmA}$}
  \AxiomC{$h_c \VL \UniSequent{(\AntOmSetA\cup\{\UniFmA\imp\UniFmB\};\UniMSetFmB),\UniFmB}{\UniMSetSucA}$}
  \RightLabel{($\imp$L)}
  \BinaryInfC{$\UniHyperMSetA \VL \UniSequent{(\AntOmSetA\cup\{\UniFmA\imp\UniFmB\};\UniMSetFmB,\UniMSetFmA)}{\UniMSetSucA}$}
  \DisplayProof
\end{footnotesize}
\end{center}
Observe that this rule schema operates on a component that has $\UniFmA\imp\UniFmB$ among its $\omega$-coordinates; we also refer to this by saying that $\UniFmA\imp\UniFmB$ is an $\omega$-formula in that component. In this rule schema, the $\omega$-coordinates of the active components (in the premise) are the same as the $\omega$-coordinates in the principal component (in the conclusion).
Consider the right premise of the above. Since we do not know whether $\UniFmB\in \AntOmSetA$, we use Def~\ref{def:adding-formulas-omega-ant} to ensure that the formula is classified correctly; this also ensures that we maintain the property that, in any component, the $\omega$-coordinates are disjoint from the formulas with positive finite multiplicity.
 In general, $\omega$-coordinates increase from principal to active components, e.g., as seen by the $(\AntOmSetA[i,k]; \varnothing),Y_i,\vec{\Gamma}_{ik}$ term in Def~\ref{def:omega-struct-rules}. 

\begin{figure*}
\centering
\footnotesize
\textbf{Initial hypersequents}\par\medskip
\begin{tabular}{@{}ccccc@{}}
  \AxiomC{}
  \UnaryInfC{$\UniHyperMSetA \VL \UniSequent{(\AntOmSetA;\UniSchPropA)}{\UniSchPropA}$}
  \DisplayProof
  &
  \AxiomC{}
  \UnaryInfC{$\UniHyperMSetA \VL \UniSequent{(\AntOmSetA\cup\{\UniSchPropA\};\varnothing)}{\UniSchPropA}$}
  \DisplayProof
  &
  \AxiomC{}
  \UnaryInfC{$\UniHyperMSetA \VL \UniSequent{(\AntOmSetA;0)}{}$}
  \DisplayProof
  &
  \AxiomC{}
  \UnaryInfC{$\UniHyperMSetA \VL \UniSequent{(\AntOmSetA\cup\{0\};\varnothing)}{}$}
  \DisplayProof
  &
  \AxiomC{}
  \UnaryInfC{$\UniHyperMSetA \VL \UniSequent{(\AntOmSetA;\varnothing)}{1}$}
  \DisplayProof
\end{tabular}

\par\bigskip
\textbf{Structural rules}\par\medskip
\begin{tabular}{@{}cc@{}}
  \AxiomC{$h_c \VL \UniSequent{(\AntOmSetA;\UniMSetFmA)}{\UniMSetSucA}$}
  \RightLabel{w}
  \UnaryInfC{$\UniHyperMSetA \VL \UniSequent{(\AntOmSetA;\UniMSetFmA,\UniFmA)}{\UniMSetSucA}$}
  \DisplayProof
  &
  \AxiomC{$\UniHyperMSetA \VL \UniSequent{(\AntOmSetA;\UniMSetFmA)}{\UniFmA} \VL \UniSequent{(\AntOmSetA;\UniMSetFmA)}{\UniMSetSucA}$}
  \RightLabel{(EC)}
  \UnaryInfC{$\UniHyperMSetA \VL \UniSequent{(\AntOmSetA;\UniMSetFmA)}{\UniMSetSucA}$}
  \DisplayProof
  \qquad
  \AxiomC{$\UniHyperMSetA$}
  \RightLabel{(EW)}
  \UnaryInfC{$\UniHyperMSetA \VL \UniSequent{(\AntOmSetA;\UniMSetFmA)}{\UniMSetSucA}$}
  \DisplayProof
\end{tabular}
\begin{center}
    All rules schemas in
    $\UniAnaRuleSet_\omega$
    (see Def.~\ref{def:omega-struct-rules}).
\end{center}

\par\bigskip
\textbf{Logical rules operating on $\omega$-coordinates}\par\medskip
\begin{tabular}{ccc}
  \AxiomC{$h_c \VL \UniSequent{(\AntOmSetA;\UniMSetFmA)}{\UniMSetSucA}$}
  \RightLabel{($1$L)}
  \UnaryInfC{$\UniHyperMSetA \VL \UniSequent{(\AntOmSetA \cup \{1\};\UniMSetFmA)}{\UniMSetSucA}$}
  \DisplayProof
  &
  \AxiomC{$h_c \VL \UniSequent{(\AntOmSetA\cup\{\UniFmA\fus \UniFmB\};\UniMSetFmA),\UniFmA,\UniFmB}{\UniMSetSucA}$}
  \RightLabel{($\fus$L)}
  \UnaryInfC{$\UniHyperMSetA \VL \UniSequent{(\AntOmSetA\cup\{\UniFmA\fus \UniFmB\};\UniMSetFmA)}{\UniMSetSucA}$}
  \DisplayProof
&
\AxiomC{$h_c \VL \UniSequent{(\AntOmSetA\cup\{\UniFmA_1\land \UniFmA_2\};\UniMSetFmA),\UniFmA_i}{\UniMSetSucA}$}
  \RightLabel{($\land$L)}
  \UnaryInfC{$\UniHyperMSetA \VL \UniSequent{(\AntOmSetA\cup\{\UniFmA_1\land \UniFmA_2\};\UniMSetFmA)}{\UniMSetSucA}$}
  \DisplayProof
\\[1.2em]
\multicolumn{3}{c}{
\begin{tabular}{cc}
  \AxiomC{$h_c \VL \UniSequent{(\AntOmSetA\cup\{\UniFmA\lor\UniFmB\};\UniMSetFmA),\UniFmA}{\UniMSetSucA}$}
  \AxiomC{$h_c \VL \UniSequent{(\AntOmSetA\cup\{\UniFmA\lor\UniFmB\};\UniMSetFmA),\UniFmB}{\UniMSetSucA}$}
  \RightLabel{($\lor$L)}
  \BinaryInfC{$\UniHyperMSetA \VL \UniSequent{(\AntOmSetA\cup\{\UniFmA\lor\UniFmB\};\UniMSetFmA)}{\UniMSetSucA}$}
  \DisplayProof
  &
  \AxiomC{$h_c \VL \UniSequent{(\AntOmSetA\cup\{\UniFmA\imp\UniFmB\};\UniMSetFmA)}{\UniFmA}$}
  \AxiomC{$h_c \VL \UniSequent{(\AntOmSetA\cup\{\UniFmA\imp\UniFmB\};\UniMSetFmB),\UniFmB}{\UniMSetSucA}$}
  \RightLabel{($\imp$L)}
  \BinaryInfC{$\UniHyperMSetA \VL \UniSequent{(\AntOmSetA\cup\{\UniFmA\imp\UniFmB\};\UniMSetFmB,\UniMSetFmA)}{\UniMSetSucA}$}
  \DisplayProof
\end{tabular}
}
\end{tabular}

\par\bigskip
\textbf{Logical rules operating on finite multiplicities}\par\medskip
\begin{tabular}{ccc}
  \AxiomC{$h_c \VL \UniSequent{(\AntOmSetA;\UniMSetFmA)}{\UniMSetSucA}$}
  \RightLabel{($1$L)$_2$}
  \UnaryInfC{$\UniHyperMSetA \VL \UniSequent{(\AntOmSetA;\UniMSetFmA,1)}{\UniMSetSucA}$}
  \DisplayProof
  &
  \AxiomC{$h_c \VL \UniSequent{(\AntOmSetA;\UniMSetFmA),\UniFmA,\UniFmB}{\UniMSetSucA}$}
  \RightLabel{($\fus$L)$_2$}
  \UnaryInfC{$\UniHyperMSetA \VL \UniSequent{(\AntOmSetA;\UniMSetFmA,\UniFmA\fus \UniFmB)}{\UniMSetSucA}$}
  \DisplayProof
&
\AxiomC{$h_c \VL \UniSequent{(\AntOmSetA;\UniMSetFmA),\UniFmA_i}{\UniMSetSucA}$}
  \RightLabel{($\land$L)$_2$}
  \UnaryInfC{$\UniHyperMSetA \VL \UniSequent{(\AntOmSetA;\UniMSetFmA,\UniFmA_1\land\UniFmA_2)}{\UniMSetSucA}$}
  \DisplayProof
\\[1.2em]
\multicolumn{3}{c}{
\begin{tabular}{cc}
  \AxiomC{$h_c \VL \UniSequent{(\AntOmSetA;\UniMSetFmA),\UniFmA}{\UniMSetSucA}$}
  \AxiomC{$h_c \VL \UniSequent{(\AntOmSetA;\UniMSetFmA),\UniFmB}{\UniMSetSucA}$}
  \RightLabel{($\lor$L)$_2$}
  \BinaryInfC{$\UniHyperMSetA \VL \UniSequent{(\AntOmSetA;\UniMSetFmA,\UniFmA\lor\UniFmB)}{\UniMSetSucA}$}
  \DisplayProof
&
  \AxiomC{$h_c \VL \UniSequent{(\AntOmSetA;\UniMSetFmA)}{\UniFmA}$}
  \AxiomC{$h_c \VL \UniSequent{(\AntOmSetA;\UniMSetFmB),\UniFmB}{\UniMSetSucA}$}
  \RightLabel{($\imp$L)$_2$}
  \BinaryInfC{$\UniHyperMSetA \VL \UniSequent{(\AntOmSetA;\UniMSetFmA,\UniMSetFmB,\UniFmA\imp\UniFmB)}{\UniMSetSucA}$}
  \DisplayProof
\end{tabular}
}
\end{tabular}

\par\bigskip
\textbf{Logical rules (succedent)}\par\medskip
\begin{tabular}{ccc}
  \AxiomC{$h_c \VL \UniSequent{(\AntOmSetA,\UniMSetFmA)}{}$}
  \RightLabel{($0$R)}
  \UnaryInfC{$\UniHyperMSetA \VL \UniSequent{(\AntOmSetA,\UniMSetFmA)}{0}$}
  \DisplayProof
  &
  \AxiomC{$h_c \VL \UniSequent{(\AntOmSetA;\UniMSetFmA)}{\UniFmA}$}
  \AxiomC{$h_c \VL \UniSequent{(\AntOmSetA;\UniMSetFmB)}{\UniFmB}$}
  \RightLabel{($\fus$R)}
  \BinaryInfC{$\UniHyperMSetA \VL \UniSequent{(\AntOmSetA;\UniMSetFmA,\UniMSetFmB)}{\UniFmA\fus \UniFmB}$}
  \DisplayProof
  &
  \AxiomC{$h_c \VL \UniSequent{(\AntOmSetA;\UniMSetFmA)}{\UniFmA_i}$}
  \RightLabel{($\lor$R)}
  \UnaryInfC{$\UniHyperMSetA \VL \UniSequent{(\AntOmSetA;\UniMSetFmA)}{\UniFmA_1\lor\UniFmA_2}$}
  \DisplayProof
  \\[1.2em]
  \multicolumn{3}{c}{
  \begin{tabular}{cc}
  \AxiomC{$h_c \VL \UniSequent{(\AntOmSetA;\UniMSetFmA)}{\UniFmA}$}
  \AxiomC{$h_c \VL \UniSequent{(\AntOmSetA;\UniMSetFmA)}{\UniFmB}$}
  \RightLabel{($\land$R)}
  \BinaryInfC{$\UniHyperMSetA \VL \UniSequent{(\AntOmSetA;\UniMSetFmA)}{\UniFmA\land \UniFmB}$}
  \DisplayProof
  &
  \AxiomC{$h_c \VL \UniSequent{(\AntOmSetA;\UniMSetFmA),\UniFmA}{\UniFmB}$}
  \RightLabel{($\imp$R)}
  \UnaryInfC{$\UniHyperMSetA \VL \UniSequent{(\AntOmSetA;\UniMSetFmA)}{\UniFmA \imp \UniFmB}$}
  \DisplayProof
  \end{tabular}
  }
  \hspace{2em}
\end{tabular}

\caption{
The calculus $\UniHFLewRinv_\omega$. 
Here, $h_c$ denotes the meta-hypersequent in the conclusion.}
\label{fig-omega-rules}
\end{figure*}


There is an obvious bijective correspondence between a sequent $\AntVecA \Ra b$ and the $\omega$-sequent $(\varnothing;\AntVecA)\Ra b$. 
This bijection extends from rule instances in $\UniHFLewRinv$ to rule instances in $\UniHFLewRinv_\omega$ that operate only on the finite coordinates of the $\omega$-hypersequent. Since the latter calculus does not contain any rule to strictly increase the $\omega$-coordinates from the empty set (we will subsequently introduce such a mechanism), the following is immediate.

\begin{lemmarep}\label{lem:original-calc-omega-calc}
Let~$h$ be an 
$\omega$-free
$\omega$-hypersequent.
Then, $h$ is provable in $\UniHFLewRinv$
    iff
    it is provable
    in $\UniHFLewRinv_\omega$.
\end{lemmarep}
\begin{proof}
    The left-to-right direction is 
    obvious, since
    $\UniHFLewRinv_\omega$ has the
    rules that operate on the second component.
    The right-to-left follows by an induction that relies on the observation that, in any rule instance, if the conclusion hypersequent is $\omega$-free, then so are all the premises; thus, by the mentioned bijection, there is a matching rule instance in $\UniHFLewRinv$.
\end{proof}

\subsection{Refined $\omega$-calculus}

Next, we modify $\UniHFLewRinv_\omega$ to equip it with height-preserving admissibility of $\UniEC$ and $\UniEW$, as in Sec~\ref{sec:refined-calculi-contraction}, based on~\cite{revantha2020}.

\begin{definition}[Refined $\omega$-calculus]
$\UniHCalcAbsorb{\UniHFLewRinv_\omega}$ consists of every rule $\UniHRuleAbsorb{\UniRuleSchemaA}$ below left s.t. $\UniRuleSchemaA$ (below right) is a rule in $\UniHFLewRinv_\omega$, and 
$g \UniExtCtrRel{\UniActiveCompFixed}{\UniEC} h$.
Call $\UniRuleSchemaA$ the \emph{base instance} of $\UniHRuleAbsorb{\UniRuleSchemaA}$.
\begin{center}
\begin{tabular}{c@{\hspace{2em}}c}
\AxiomC{$h \VL s_1$}
\AxiomC{$\cdots$}
\AxiomC{$h \VL s_n$}
\RightLabel{$\UniHRuleAbsorb{\UniRuleSchemaA}$}
\TrinaryInfC{
$h$
}
\DisplayProof
&
\AxiomC{$g \VL s_1$}
\AxiomC{$\cdots$}
\AxiomC{$g \VL s_n$}
\RightLabel{$\UniRuleSchemaA$}
\TrinaryInfC{$g$}
\DisplayProof
\end{tabular}
\end{center}
Every component in $h$ that is related, via applications of $\UniEC$, to the principal components in $g$ is called a principal component of $\UniHRuleAbsorb{\UniRuleSchemaA}$.
The \emph{schema} for an active/principal component
in $\UniHRuleAbsorb{\UniRuleSchemaA}$
is the corresponding schematic component in $\UniRuleSchemaA$.
\end{definition}

Informally, a rule instance in $\UniHCalcAbsorb{\UniHFLewRinv_\omega}$ (seen from conclusion to premises) is obtained by applying a limited amount---up to $\UniActiveCompFixed$, whose value depends solely on the proof rules---of $\UniEC$ backwards, and then a backward application of a rule instance in $\UniHFLewRinv_\omega$,
replacing $g$ with $h$ in the premises of the base instance.

\begin{lemmarep}\label{lem:omega-calc-omega-refined}
    The calculi
    $\UniHFLewRinv_\omega$
and $\UniHCalcAbsorb{\UniHFLewRinv_\omega}$ prove same $\omega$-hypersequents.
\end{lemmarep}
\begin{proof}
    Every $\omega$-hypersequent 
    provable in
    $\UniHFLewRinv_\omega$
    is provable in 
    $\UniHCalcAbsorb{\UniHFLewRinv_\omega}$
    because every rule instance of the former belongs in the latter.
Reverse direction: suppose that $h$ has a proof $\pi$ in
    $\UniHCalcAbsorb{\UniHFLewRinv_\omega}$.
    Structural induction on $\pi$.
    (\textit{Base case}) Immediate: apply $\UniEC$ to the base initial hypersequent.
    (\textit{Inductive case}) Suppose that $\pi$ concludes as below left. By definition of $\UniHCalcAbsorb{\UniRuleSchemaA}$ in $\UniHCalcAbsorb{\UniHFLewRinv_\omega}$, there is a base instance $\UniRuleSchemaA$ (below right) such that $g \UniExtCtrRel{\UniActiveCompFixed}{\text{(EC)}} h$.
\begin{center}
\begin{tabular}{c@{\hspace{2em}}c}
\AxiomC{$\pi_1$}
    \noLine
    \UnaryInfC{$h \VL s_1$}
    \AxiomC{$\ldots$}
    \AxiomC{$\pi_n$}
    \noLine
    \UnaryInfC{$h \VL s_n$}
    \RightLabel{$\UniHRuleAbsorb{\UniRuleSchemaA}$}
    \TrinaryInfC{$h$}
    \DisplayProof
&    
\AxiomC{$g \VL s_1$}
\AxiomC{$\cdots$}
\AxiomC{$g \VL s_n$}
\RightLabel{$\UniRuleSchemaA$}
\TrinaryInfC{$g$}
\DisplayProof
\end{tabular}
\end{center}
By IH,
each $h|s_i$ is provable in
$\UniHFLewRinv_\omega$. 
By $\UniEW$ we have 
$g|s_i$ provable in $\UniHFLewRinv_\omega$
as well. Applying the base instance instance $\UniRuleSchemaA$, we obtain $g$.
Using $\UniEC$, we reach a proof of $h$ in $\UniHFLewRinv_\omega$.

Final case: last rule is $\UniEC$ or $\UniEW$. It is not covered by above since rule form is slightly different. Nevertheless, the transformation is straightforward (apply same rules in $\UniHFLewRinv_\omega$).
\end{proof}

Similar to~\cite{revantha2020}, we have that the external structural rules are hp-admissible
in $\UniHCalcAbsorb{\UniHFLewRinv_\omega}$.

\begin{lemmarep}\label{lem:hp-admiss-ec-weakening}
 $\UniEW$
 and $\UniEC$
 are hp-admissible in
 $\UniHCalcAbsorb{\UniHFLewRinv_\omega}$.
\end{lemmarep}
\begin{proof}
    The proof is by induction on the height $\alpha$ of a proof $\pi$
    of an $\omega$-hypersequent $h$.

    $\UniEW$:
    The base case is obvious.
    Inductive step,
    assume $\pi$ has the following form, where each $\pi_i$ has height $\alpha_i < \alpha$.
We seek $h|t$.    
    \begin{center}
    \AxiomC{$\pi_1$}
    \noLine
    \UnaryInfC{$h \VL s_1$}
    \AxiomC{$\ldots$}
    \AxiomC{$\pi_n$}
    \noLine
    \UnaryInfC{$h \VL s_n$}
    \RightLabel{$\UniHRuleAbsorb{\UniRuleSchemaA}$}
    \TrinaryInfC{$h$}
    \DisplayProof
\end{center}
By IH,
the hypersequents
$(h|t|s_1), \ldots, (h|t|s_n)$
have proofs $\pi'_i$
of height $\leq \alpha_i < \alpha$.
Let the base instance $\UniRuleSchemaA$ of $\UniHRuleAbsorb{\UniRuleSchemaA}$ be as below left.
Then we have $g \UniExtCtrRel{\UniActiveCompFixed}{\text{(EC)}} h$.
By inspection, below right is a rule instance of $\UniHFLewRinv_\omega$.
\begin{center}
\begin{tabular}{c@{\hspace{1em}}c}
\AxiomC{$g \VL s_1$}
\AxiomC{$\cdots$}
\AxiomC{$g \VL s_n$}
\RightLabel{$\UniRuleSchemaA$}
\TrinaryInfC{$g$}
\DisplayProof
&
\AxiomC{$g \VL t \VL s_1$}
\AxiomC{$\cdots$}
\AxiomC{$g \VL t \VL s_n$}
\RightLabel{$\UniRuleSchemaA$}
\TrinaryInfC{$g \VL t$}
\DisplayProof
\end{tabular}
\end{center}
It suffices now to observe that $g\VL t \UniExtCtrRel{\UniActiveCompFixed}{\text{(EC)}} h\VL t$.

The last cases to consider is if $h$
resulted from $\UniEC$
or $\UniEW$, but that follows easily by the IH.


    $\UniEC$:
    The base case is obvious.
    Induction step:
    assume that $\pi$ has the following form, where each $\pi_i$ has height $\alpha_i < \alpha$.
    We seek $h\VL t$.

    \begin{center}
    \AxiomC{$\pi_1$}
    \noLine
    \UnaryInfC{$h\VL t\VL t \VL s_1$}
    \AxiomC{$\ldots$}
    \AxiomC{$\pi_n$}
    \noLine
    \UnaryInfC{$h\VL t\VL t \VL s_n$}
    \RightLabel{$\UniHRuleAbsorb{\UniRuleSchemaA}$}
    \TrinaryInfC{$h\VL t\VL t$}
    \DisplayProof
\end{center}
By IH,
$(h\VL t\VL s_1), \ldots, (h\VL t\VL s_n)$
have proofs $\pi'_i$
of height $\leq \alpha_i < \alpha$.
Now it is enough to prove that the calculus has the rule
instance $(h\VL t\VL s_1), \ldots, (h\VL t\VL s_n)/(h\VL t)$.
Observe that 
$\UniHFLewRinv_\omega$
contains the base rule instance $\UniRuleSchemaA$ below such that 
$g \UniExtCtrRel{\UniActiveCompFixed}{\text{(EC)}} h\VL t\VL t$.
\begin{center}
\AxiomC{$g \VL s_1$}
\AxiomC{$\cdots$}
\AxiomC{$g \VL s_n$}
\RightLabel{$\UniRuleSchemaA$}
\TrinaryInfC{$g$}
\DisplayProof
\end{center}
There are two possibilities.
\begin{enumerate}
    \item If $|g|_t - |(h \VL t)|_t < \UniActiveCompFixed$ i.e., we have not exhausted the permitted number of $\UniEC$ rules, so we can apply one more i.e., $g \UniExtCtrRel{\UniActiveCompFixed}{\text{(EC)}} h\VL t$.
    
    \item Else $|g|_t - |(h \VL t)|_t = \UniActiveCompFixed$. By choice of $\UniActiveCompFixed$, at least one copy of $t$ must have been instantiated as part of the hypersequent variable $H$ in the rule schema of $\UniRuleSchemaA$ (pigeon-hole principle). Consider variant $\UniRuleSchemaA'$
    of $\UniRuleSchemaA$
    where this copy of $t$
    is omitted from the instantiation of $H$. Using $\UniHRuleAbsorb{\UniRuleSchemaA'}$ in place of $\UniHRuleAbsorb{\UniRuleSchemaA}$ yields $h\VL t$.
\end{enumerate}
The last cases are $\UniEC$, which follows by IH, and $\UniEW$, which follows from hp-admissibility of $\UniEW$ proved above.
    \qedhere
\end{proof}



\begin{lemma}
    If $h$ is provable in $\UniHCalcAbsorb{\UniHFLewRinv_\omega}$, then it is provable without $\UniEW$ and $\UniEC$.
\end{lemma}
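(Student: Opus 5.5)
We eliminate applications of $\UniEW$ and $\UniEC$ one at a time, each time using Lem~\ref{lem:hp-admiss-ec-weakening}. We argue by induction on the number of applications of $\UniEW$ and $\UniEC$ in a given proof $\pi$ of $h$ in $\UniHCalcAbsorb{\UniHFLewRinv_\omega}$. If there are none, we are done.

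Otherwise, choose an \emph{uppermost} application of $\UniEW$ or $\UniEC$ in $\pi$, i.e.\ one whose premise $h'$ has a subproof $\pi'$ containing no application of either rule. Let $h''$ be its conclusion.

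The hp-admissibility statement of Lem~\ref{lem:hp-admiss-ec-weakening} is proved by induction on height. The transformation it yields applies the rules $\UniHRuleAbsorb{\UniRuleSchemaA}$ to modified hypersequents, and invokes $\UniEC$ or $\UniEW$ again only in the final cases, where the last rule of the given proof is itself $\UniEC$ or $\UniEW$. Since $\pi'$ contains neither rule, those cases never arise. In the other cases, the new proof is built solely from instances of $\UniHRuleAbsorb{\UniRuleSchemaA}$ (for $\UniEW$: the rule with $g\VL t$ as base conclusion; for $\UniEC$: either a rule with one more permitted (EC) or the variant $\UniRuleSchemaA'$) and from initial hypersequents. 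For initial hypersequents, note that adding or contracting a component in the context $H$ of an initial hypersequent schema yields an initial hypersequent again. The one exception is contracting the distinguished component itself, which is handled by letting the remaining copy play that role.

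I would therefore first record a strengthened form of Lem~\ref{lem:hp-admiss-ec-weakening}: \emph{if $h'$ has an $\UniEW$/$\UniEC$-free proof, then so does $h''$}. The proof is a direct rereading of the earlier induction restricted to free proofs.

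We then replace $\pi'$ followed by the chosen $\UniEW$ or $\UniEC$ step by the resulting free proof of $h''$. This strictly decreases the number of $\UniEW$/$\UniEC$ applications in $\pi$, and the induction hypothesis finishes the argument.

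The main obstacle is the $\UniEC$ case. We must check that the pigeon-hole argument (case 2 in the earlier proof) and the recursion on premises stay inside the free fragment. Under the IH these premises are $h\VL t\VL s_i$, which have free proofs. The concern is that the rule instance $(h\VL t\VL s_1),\ldots,(h\VL t\VL s_n)/(h\VL t)$ is a genuine instance of $\UniHCalcAbsorb{\UniHFLewRinv_\omega}$; this holds by exactly the same counting on $\UniActiveCompFixed$ as before. So the strengthened induction should go through verbatim, with the induction hypothesis now stated for free proofs.
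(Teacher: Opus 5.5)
Your argument is correct. The paper states this lemma without proof, directly after Lem~\ref{lem:hp-admiss-ec-weakening}, so it is meant as an immediate consequence of that lemma. Your route differs in that it reopens the proof of Lem~\ref{lem:hp-admiss-ec-weakening} to extract a strengthened, free-preserving version. That obliges you to recheck three things: the base case for initial hypersequents, the pigeon-hole case for $\UniEC$, and the counting against $\UniActiveCompFixed$. The paper only sketches these, and its case split for $\UniEC$ is slightly loose in its counting.

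None of this is needed, because the height bound in hp-admissibility lets you use the lemma purely as a black box. Prove, by strong induction on the height $\alpha$ of a proof of $h$, that $h$ has a proof of height at most $\alpha$ using neither $\UniEW$ nor $\UniEC$:
\begin{itemize}
\item If the last rule is a refined rule, apply the induction hypothesis to its premises and reapply the rule.
\item If the last rule is $\UniEW$ or $\UniEC$, its premise has a proof of height $\alpha-1$. Lem~\ref{lem:hp-admiss-ec-weakening} then gives a proof of $h$ itself of height at most $\alpha-1$, possibly still using these rules. The induction hypothesis applies to that proof.
\end{itemize}

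What your version buys is a more local transformation: only the subproof above each uppermost application is rewritten. The price is that you depend on the internal case analysis of the hp-admissibility proof rather than only on its statement.
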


Henceforth, we may assume that proofs do not use these rules.



The rules operating on $\omega$-coordinates behave just like their familiar counterparts operating on the finite multiplicities (except $\omega$-coordinates do not diminish with use).
Hence, treating a finite multiplicity formula as belonging to the $\omega$-set is proof-preserving:

\begin{definition}[$\omega$-extensions]\label{def:omega-extensions}
Given an $(\omega,d)$-sequent $s = \UniSequent{(\AntOmSetA,\AntVecA)}{b}$, we say that
an $(\omega,d)$-sequent $s'=\UniSequent{(\AntOmSetA',\AntVecA')}{b}$ is an \emph{$\omega$-extension} of $s$ if  $\AntOmSetA \subseteq \AntOmSetA'$
and $\AntVecA' = \AntVecA[\AntOmSetA'{\setminus}\AntOmSetA \mapsto 0]$.
An \emph{$\omega$-extension of $h$} is obtained by replacing some components in the $(\omega,d)$-hypersequent $h$ by its $\omega$-extensions.
\end{definition}

\begin{lemma}\label{lem:hp-admin-omega-function}
Let $h$ be an $\omega$-hypersequent provable in $\UniHCalcAbsorb{\UniHFLewRinv_\omega}$ with height $\alpha$.
Then, any of its $\omega$-extensions is provable
in
$\UniHCalcAbsorb{\UniHFLewRinv_\omega}$
with height $\leq \alpha$.
\end{lemma}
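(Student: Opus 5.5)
We argue by induction on the height $\alpha$ of a proof $\pi$ of $h$ in $\UniHCalcAbsorb{\UniHFLewRinv_\omega}$; by the preceding lemma we may assume $\pi$ uses neither $\UniEW$ nor $\UniEC$. It suffices to treat a single component: if $h'$ is an $\omega$-extension of $h$, it arises by replacing components one at a time, and $\omega$-extension is transitive. Since the statement is hp, iterating the single-component case gives the result. So fix $h = g' \VL s$ with $s = \UniSequent{(\AntOmSetA;\AntVecA)}{b}$, and let $s' = \UniSequent{(\AntOmSetA';\AntVecA[\AntOmSetA'\setminus\AntOmSetA\mapsto 0])}{b}$. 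We build a proof of $g'\VL s'$ of height $\leq\alpha$.

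\emph{Base case.} $h$ is an initial hypersequent. If the distinguished component of the axiom lies in $g'$, we reuse the same axiom. Otherwise $s$ is, say, $\UniSequent{(\AntOmSetA;p)}{p}$.
\begin{itemize}
\item If $p\notin\AntOmSetA'$, then $s'=\UniSequent{(\AntOmSetA';p)}{p}$ is an axiom.
\item If $p\in\AntOmSetA'$, then $s'=\UniSequent{(\AntOmSetA'; \varnothing)}{p}$, which is an instance of the second axiom.
\end{itemize}
The cases for $0$ are analogous. The cases $\UniSequent{(\AntOmSetA\cup\{p\};\varnothing)}{p}$ and $\UniSequent{(\AntOmSetA;\varnothing)}{1}$ are immediate, since enlarging $\AntOmSetA$ preserves their form.

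\emph{Inductive step.} The last rule is $\UniHRuleAbsorb{\UniRuleSchemaA}$, with premises $h\VL s_i$ and base instance $g\VL s_i / g$ where $g\UniExtCtrRel{\UniActiveCompFixed}{\UniEC} h$. By the IH, every $\omega$-extension of $h \VL s_i$ is provable with height $\leq\alpha_i$.

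If $s$ is not principal, we apply the IH to get $g'\VL s'\VL s_i$. The corresponding base instance, with $s$ replaced by $s'$ in the hypersequent-variable, then yields $g' \VL s'$.

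If $s$ is principal (possibly several copies of $s$ in $g$ contracted to it), we replace every copy of $s$ in $g$ by $s'$. We then check that the resulting conclusion matches an instance of a rule of $\UniHFLewRinv_\omega$ whose premises' active components $s_i'$ are $\omega$-extensions of the $s_i$ (or equal to them). The premises $h'\VL s_i'$ are then $\omega$-extensions of $h\VL s_i$ (replacing $s$ and $s_i$), so the IH applies, and the contraction relation to $g'\VL s'$ is preserved.

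This rule-by-rule check is the main obstacle. The cases are as follows.
\begin{itemize}
\item \textbf{Principal formula not in $\AntOmSetA'\setminus\AntOmSetA$, finite-multiplicity rules.} The $\omega$-set grows uniformly; the added subformulas are absorbed via Def~\ref{def:adding-formulas-omega-ant}, and the resulting active component is exactly the $\omega$-extension of $s_i$.
\item \textbf{Principal formula $\varphi$ newly in $\AntOmSetA'$.} We replace the rule $(\cdot)_2$ by its $\omega$-coordinate counterpart, e.g.\ $(\imp\text{L})_2$ by $(\imp\text{L})$. Here $\varphi$ persists in the premises, so they become $\omega$-extensions of the original premises by adding $\varphi$ to the $\omega$-set.
\item \textbf{Rules with a context split} ($\fus$R, $\imp$L). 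The $\omega$-set is shared by both premises. A coordinate now set to $0$ was split between the two sides; both premises still get it as an $\omega$-coordinate, and each is an $\omega$-extension of its original.
\item \textbf{Weakening w.} It still applies, or becomes trivial when the weakened formula is an $\omega$-formula: then the premise is itself an $\omega$-extension of the conclusion, and we use the IH.
\item \textbf{$\omega$-structural rules $\UniRuleSchemaA_\omega$.} The enlarged $\AntOmSetA_v$ propagates into $\AntOmSetA[i,k]$. The finite multiset-variables are instantiated with the zeroed vectors, so each premise component is an $\omega$-extension, using linearity of the conclusion.
\end{itemize}

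In every case the height does not increase.
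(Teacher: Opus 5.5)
Your proposal is correct and takes essentially the same route as the paper's proof: reduce to changing a single component, induct on the height, and in the inductive step separate hypersequent-variable components from principal ones, swapping a $(\cdot)_2$ rule for its $\omega$-coordinate counterpart when its principal formula becomes an $\omega$-formula, and re-instantiating the multiset variables of $\omega$-structural rules via linear conclusion. One step is missing in your weakening case: the IH applied to the premise gives $g' \mid s' \mid s'$ rather than $g' \mid s'$, so you still need the height-preserving admissibility of (EC) (Lemma~\ref{lem:hp-admiss-ec-weakening}) to finish.
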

\begin{proof}
    Let $h$ be an $\omega$-hypersequent having a proof
    $\pi$ of height $\alpha$ in $\UniHCalcAbsorb{\UniHFLewRinv_\omega}$.
    It is enough to prove
    for the case
    when $h'$
    is an $\omega$-extension
    of $h$ differing from $h$ at a single
    component $\UniSequent{(\AntOmSetA',\AntVecA')}{b}$, and, within that component, at a single formula $a$.
    That is, $h = g \VL \UniSequent{(\AntOmSetA,\AntVecA)}{b}$ with $a\not\in \AntOmSetA$
    and $h' = g \VL \UniSequent{(\AntOmSetA\cup\{a\},\AntVecA[a \mapsto 0])}{b}$.
    The general result follows by iterating this case.
    We will now show 
    by structural induction on $\pi$
    that $h'$
    has a proof $\pi'$ of height $\leq \alpha$.

    \textit{(Base case)}
    Here, $h$ is an initial $\omega$-hypersequent.
    If the component that changed in $h'$ is part of the instantiation of
    the hypersequent variable $H$, simply change the instantiation of $H$ accordingly.
    Otherwise, 
    it must be of
    one of the forms
    $\UniSequent{(\AntOmSetA;p)}{p}$
    or
    $\UniSequent{(\AntOmSetA;0)}{}$.
    In that case, use instead
    the initial hypersequents with
    principal components
    $\UniSequent{(\AntOmSetA \cup \{p\};\mathbf 0)}{p}$
    or
    $\UniSequent{(\AntOmSetA\cup\{0\};\mathbf 0)}{}$, respectively.
    
    \emph{(Inductive step)}
    Assume that
    the last rule instance applied in
    $\pi$
    has the form 
    $( h \VL t_i )_i / h$,
    where
    \[(g|s_1|\cdots|s_m|t_1),\ldots,(g|s_1|\cdots|s_m|t_k)/(g|s_1|\cdots|s_m)\]
    is its base rule instace in
    $\UniHFLewRinv_\omega$
    and $(g|s_1|\cdots|s_m) \UniExtCtrRel{\UniActiveCompFixed}{\UniEC} h$.
    Here, $g$ is the instantiation of the hypersequent variable $H$.
    
    By IH,
    we have $h' | t_i$ provable with height $\leq \alpha_i < \alpha$
    for each $i$.
    We show that the instance 
    $(h' | t_i)_i / h'$
    is in the calculus.
    By cases:
    \begin{enumerate}
        \item  The component
    that changed from $h$ to $h'$ is in $g$:
    use the IH and instantiate the same rule schema of $\UniHFLewRinv_\omega$
    but changing $g$
    to $g'$ in the instantiation
    of the hypersequent variable $H$
    to obtain $h'$.
    That is,
     \[
     \qquad\quad(g'|s_1|\cdots|s_m|t_1),\ldots,(g'|s_1|\cdots|s_m|t_k)/(g'|s_1|\cdots|s_m)\]
    is a rule instance of
    $\UniHFLewRinv_\omega$.
    So, 
    $g'|s_1|\cdots|s_m
\UniExtCtrRel{\UniActiveCompFixed}{\UniEC} h'$.
    \item 
    Assume that 
    the component that
    changed is $s_i$
    for some $1 \leq i \leq m$, and $s_i$ is not present in $g$ (otherwise the previous item would have applied).
    If the last rule applied 
    operates on the
    first component
    or on succedents,
    note that $i=1$ necessarily (there is only one principal component), so
    simply consider
    the component $s_1'$
    that now has the formula $a$ in the first component, and adapt the rule instance accordingly.
    In the case of rules operating on the first component, we might get in the new instantiation  premises that are $\omega$-extensions of the previous premises, so just apply the IH to obtain proofs of height $\leq \alpha_i < \alpha$ for them.

    If it is a logical rule operating on the second component or $(\UniWProp)$, again we have $i=1$,
    and now we need to
    consider two cases.
    First, the formula
    that changed in the component is not principal.
    In this case, simply use the IH and adapt the instantiation.
    If it is the principal formula
    of the rule,
    use the IH
    and then appropriately instantiate the corresponding logical rule 
    that operates on the first component.

    If it is a structural rule from $\UniAnaRuleSet_\omega$,
    the formula $a$
    must be in an instantiation of a multiset meta-variable.
    By the property of linear conclusion of these rules, this variable only occurs in the schema of $s_i$.
    By definition of $\UniRuleSchemaA_\omega$,
    all premises sharing same variable get affected, and their corresponding 
    $\omega$-sets are changed accordingly.
    Thus, it is enough to change the instantiation of that variable to not contain $a$, and place $a$ in $\AntOmSetA_i$.
    Thus use the IH and this new instantiation.
    \end{enumerate}

    The lemma then follows by iterating the above argument.
\end{proof}

\section{Proof search algorithm for extensions of $ \UniFLeExtLogic{\UniWProp}$}
\label{sec:ub-flew-proof-search}

Now we build on the refined $\omega$-calculus developed in the
previous section to describe a proof search procedure.

\begin{definition}\label{def:order-ant}
Define $\prec$
on $(\omega,d)$-sequents as
$s \prec t$
iff $\AntOmSetA \subseteq \AntOmSetA'$,
     $\AntVecA < \AntVecB$
     and $b_1=b_2$,
for any
 $s = \UniSequent{(\AntOmSetA,\AntVecA)}{b_1}$ and $t = \UniSequent{(\AntOmSetA',\AntVecB)}{b_2}$.
\end{definition}

The following function identifies strict increments in the finite multiplicities, and includes those formulas as $\omega$-coordinates.

\begin{definition}[$\omega$-function]
    Let $s = \UniSequent{(\AntOmSetA,\AntVecA)}{b}$ and $t = \UniSequent{(\AntOmSetA',\AntVecB)}{b}$
    be $\omega$-sequents
    such that 
    $s \prec t$.
    By $\omega(s, t)$
    we denote the $\omega$-sequent
    $\UniSequent{(\AntOmSetA'',\AntVecB')}{b}$, where
    $\AntOmSetA'' = \AntOmSetA' \cup \kappa$
    and $\AntVecB' = \AntVecB[\kappa \mapsto 0]$, with
    $\kappa = \{ l \in \mathbb{N}_{\leq d}^+{\setminus}\AntOmSetA' \mid \AntVecA[l] < \AntVecB[l] \}$.
\end{definition}

\begin{example}\label{ex:omega-function}
Let
$s_1 = \UniSequent{(1,3;(0,2,0,4))}{b}$,
$s_2 = (1,3,4;(0,3,0,0)) \allowbreak \Ra b$.
Then $\omega(s_1,s_2) = \UniSequent{(1,2,3,4;(0,0,0,0))}{b}$.
\end{example}

Proof search will need to track more than just the hypersequents that appear. So we first enrich each component $s$ of the hypersequent as $(i,s)$ where $i\in\mathbb{N}$ serves as a unique index recording the order in which the components were created. 

Let $d \geq 1$.
An \emph{indexed $d$-component}
is a pair $\MkIdx{s}=(i,s)$,
where $i \in \mathbb{N}$ and $s$ is an $(\omega,d)$-sequent. We denote by $\Cmp{\MkIdx{s}}$
the $(\omega,d)$-sequent of
$\MkIdx{s}$.
An \emph{indexed $d$-hypersequent} is a 
multiset of indexed $d$-components, such that any two distinct indexed components 
have distinct indices (we could use a set here, but multisets are familiar for hypersequents).
We omit $d$ from the notation when clear from context.
The \emph{support} of an indexed hypersequent is the hypersequent obtained by deleting the index of every component. 
If the support has no repeated components, $\MkIdx{h}$ is said to be \emph{irredundant}.

\begin{definition}[Relational indexed hypersequent]\label{def-rin}
For $d \geq 1$,
a
$d$-\emph{\rinhypersequent} is
a tuple $\mHyperA= \UniEnrichHyper{\IdxHypA}{\UniRelDep{\mHyperA}}$, where
$\IdxHypA$ is an
indexed
$d$-hypersequent,
and
$\UniRelDep{\mHyperA}$ is a binary relation
on the indexed components in $\IdxHypA$ such that $\UniRelDep{\mHyperA}^*$ is a rooted tree.
Let $\max \IdxHypA$ denote the largest index in $\IdxHypA$.
We say that $\mHyperA$
is \emph{irredundant} when
$\MkIdx{h}$ is so.
\end{definition}

Let
$\UniRuleSchemaA = (h \VL s_i)_i/h$ be a rule of $\UniHCalcAbsorb{\UniHFLewRinv_\omega}$ and
$\mHyperA= \UniEnrichHyper{\IdxHypA}{\UniRelDep{\mHyperA}}$
be an irredundant \rinhypersequent{}
with support $h$.
We refer to the structure $\mathfrak{I} = \langle \UniRuleSchemaA, \mHyperA \rangle$ as an \emph{instantiation context}.
Let $C$ be the multiset of
principal components of $h$
in $\UniRuleSchemaA$.
We say that an indexed component in $\MkIdx{h}$ is principal
in $\UniRuleSchemaA$ if its $\omega$-sequent is in $C$. The \emph{schema} of such component in $\UniRuleSchemaA$ is defined as the schema of the corresponding principal component in $C$.
Define the \emph{key ancestor 
of $s_i$ in $\mathfrak{I}$} as the 
principal component of
$\MkIdx{h}$ in
$\UniRuleSchemaA$
of largest index among those 
whose schema shares a variable with the schema of $s_i$.

If $\MkIdx{t}$ is the key ancestor
of $s_i$ in $\mathfrak{I}$,
define $\MkIdx{s}_i = (\max \IdxHypA + 1, s_i)$
and $R_i = \UniRelDep{\mHyperA} \cup \{ (\MkIdx{t}, \MkIdx{s}_i) \}$.
The \emph{$\omega$-partner
of $\MkIdx{s}_i$ in $\mathfrak{I}$}
(if it exists)
is the indexed component $\MkIdx{t}'$
in $\IdxHypA$
of largest index such that 
$(\MkIdx{t}', \MkIdx{s}_i) \in R_i^*$
and $\Cmp{\MkIdx{t}'} \prec s_i$.
Then, for $s'_i = \omega(t', s_i)$,
the indexed $\omega$-sequent $\MkIdx{s}'_i = (\max \IdxHypA + 1,s'_i)$ is called the \emph{$\omega$-refinement of $\MkIdx{s}_i$
in $\mathfrak{I}$}, and
$\MkIdx{s}_i$ is called the \emph{pre-component of $\MkIdx{s}'_i$ in $\mathfrak{I}$}.


\begin{definition}\label{def:lrin-hyper}
A \emph{legal \rinhypersequent{}}
(or \lrinhypersequent{}, for short)
is an irredundant \rinhypersequent{} 
$\mHyperA= \UniEnrichHyper{\IdxHypA}{\UniRelDep{\mHyperA}}$ such that
if $\MkIdx{s} \UniRelDep{\mHyperA} \MkIdx{t}$, then there is a rule $\UniRuleSchemaA$ such that, for $\mathfrak{I} = \langle \UniRuleSchemaA, \mHyperA \rangle$,
(a):
$\Cmp{\MkIdx{t}}$ is one of the active components of $\UniRuleSchemaA$ and 
$\MkIdx{s}$ is the key ancestor of $\Cmp{\MkIdx{t}}$ in $\mathfrak{I}$, and 
(b): $\MkIdx{t}$ either has no $\omega$-partner in $\mathfrak{I}$,
or is an $\omega$-refinement
in $\mathfrak{I}$.
\end{definition}



\begin{definition}[$\omega$-proof-search tree]\label{def:omega-proof-search}
For 
an \lrinhypersequent{}
$\mHyperA_0$,
the $\omega$-proof-search tree
$T_{\omega}(\mHyperA_0)$ is defined
as the limit of the trees recursively constructed in the following way.

\begin{enumerate}[leftmargin=*, label={}]
\item $T_0(\mHyperA_0)$: A single node labelled with 
$\mHyperA_0$.
\item $T_{n+1}(\mHyperA_0)$: Pick a leaf~$\ell$ of $T_n(\mHyperA_0)$
and assume it is labeled by $\mHyperA= \UniEnrichHyper{\IdxHypA}{\UniRelDep{\mHyperA}}$.
For any rule $\UniRuleSchemaA = (h|s_i)_{i \in I} / h$
of $\UniHCalcAbsorb{\UniHFLewRinv_\omega}$ resulting from an instantiation with subformulas of the support of $\mHyperA_0$, where $h$ is the support of $\IdxHypA$, we do the following.

\begin{enumerate}[leftmargin=*]
    \item (pre-refinement redundancy) If
    there is $i \in I$
    and $\MkIdx{t}$ in $\IdxHypA$
    s.t.
    $s_i = \Cmp{\MkIdx{t}}$, 
    then go to the next iteration,
    else proceed:
\end{enumerate}

Let $\mathfrak{I} \UniSymbDef \langle \UniRuleSchemaA, \mHyperA \rangle$,
and, for each $i \in I$, set $m \UniSymbDef \max \IdxHypA+1$,
$\MkIdx{s}_i \UniSymbDef (m,s_i)$ and $R_i \UniSymbDef \UniRelDep{\mHyperA} \cup \{ (\MkIdx{t}_i,\MkIdx{s}_i) \}$,
where $\MkIdx{t}_i$ is the key ancestor of $s_i$
in $\mathfrak{I}$.
Then perform the following.
\begin{enumerate}[label=(\alph*),start=2]
    \setcounter{enumi}{1}
    \item (Leaf expansion)
    For each $i \in I$,
    add
    $\UniEnrichHyper{\IdxHypA|\MkIdx{s}_i}{R_i}$ as a child of~$\ell$.
    \item ($\omega$-refinement)
    For each new
    leaf
    $\UniEnrichHyper{\IdxHypA|(m,s_i)}{R_i}$,
    let $\MkIdx{s}'_i$ be the \emph{$\omega$-refinement
of $\MkIdx{s}_i$ in $\mathfrak{I}$}
(if it exists), 
    and replace 
    the label of this leaf
    by $\UniEnrichHyper{\IdxHypA|\MkIdx{s}'_i}{\UniRelDep{\mHyperA} \cup \{ (\MkIdx{t}_i,\MkIdx{s}'_i) \}}$. 

    If this refinement took place for at least one premise, we call the result an
    \emph{$\omega$-introduction rule}, with $\MkIdx{s}'_i$ called a \emph{productive premise} and $r$ 
    its \emph{pre-rule}.
    \item (post-refinement redundancy)
    If the \rinhypersequent{} of any new leaf 
    is redundant, delete all the new leaves introduced in this iteration.
\end{enumerate}
\end{enumerate}
\end{definition}

Observe that the tree resulting
from the above construction
is guaranteed to have legal \rinhypersequent{}s as labels. 
Consider an \lrinhypersequent~$\UniEnrichHyper{\IdxHypA}{R}$ in $T_{\omega}(\mHyperA_0)$.
Viewing the indexed hypersequent $\IdxHypA$ as a sequence of indexed components, ordered by increasing index, we have access to the order in which the components were obtained. Also, $((i,t),(j,s))\in R$ tells us $t$ was a principal component of the rule instance that had $s$ as an active component, and 
the instantiation that led to $s$ is independent of components with index $>i$, in view of the key ancestor condition. 


\begin{definition}\label{def:tree-R}
Let
$\mHyperA \UniSymbDef \UniEnrichHyper{\IdxHypA}{\UniRelDep{\mHyperA}}$
be an \lrinhypersequent.
We define the \emph{tree of $\mHyperA$}, denoted by
$\mTree{\mHyperA}$,
by first making the
minimal elements of $\UniRelDepZero{\mHyperA}$
the children of a special
node labelled with $\bullet$,
and then extending it 
such that
$\MkIdx{s}'$ is a child of $\MkIdx{s}$ iff
$(\MkIdx{s},\MkIdx{s'}) \in \UniRelDep{\mHyperA}$.
\end{definition}

\begin{remark}
    Let
    $\mHyperA \UniSymbDef \UniEnrichHyper{\IdxHypA}{\UniRelDep{\mHyperA}}$
be an {\lrinhypersequent}
occurring in $T_\omega(\mHyperA_0)$. 
First, note
that $T_\omega(\mHyperA_0)$ is a tree labeled by {\lrinhypersequent}s, while
$\mTree{\mHyperA}$ is a 
tree labeled by components, plus a special symbol $\bullet$, which labels the root.
Also,
the children of $\bullet$
are elements of $\IdxHypA_0$, so the tree is finitely branching at the root (since $\IdxHypA_0$ is finite).
\end{remark}

We now define a more restrictive wqo than $\prec$ (Def~\ref{def:order-ant})
where indexed $(\omega,d)$-sequents with distinct $\omega$-sets are incomparable.

\begin{definition}
Given $\MkIdx{s} = (i, \UniSequent{(\AntOmSetA;\AntVecA)}{b_1})$
and 
$\MkIdx{s'} = (j, \UniSequent{(\AntOmSetA';\AntVecB)}{b_2})$,
let $\MkIdx{s} \preceq_{\mathbf{W}(\omega,d)} \MkIdx{s'}$
iff $\AntOmSetA = \AntOmSetA'$,
$b_1 = b_2$
and
$\AntVecA \leq_{\mathbb{N}^d} \AntVecB$.
Define the structure $\mathbf{W}(\omega,d)$
on indexed $(\omega,d)$-sequents
having
$\preceq_{\mathbf{W}(\omega,d)}$ as relation and
$\UniNorm{(l,\MkIdx{s})}{} \UniSymbDef \UniNorm{\Cmp{\MkIdx{s}}}{}$
as norm, where the norm of an $\omega$-sequent
$\UniSequent{(\AntOmSetA;\AntVecA)}{b}$ is defined
as $\UniNorm{\AntVecA}{\mathbb{N}^d}$.
\end{definition}

\begin{lemma}\label{lem:nwqo-weakening}
    $\mathbf{W}(\omega,d)$
    is an nwqo
    and, for any primitive recursive control function $\UniControlFunctionA$,
    its length function 
    $L_{\mathbf{W}(\omega,d),\UniControlFunctionA}$
    is upper bounded in
    $\UniFGHOneAppLevel{\omega}$, uniformly on $d$.
\end{lemma}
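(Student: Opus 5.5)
The plan is to reduce $\mathbf{W}(\omega,d)$ to a disjoint sum $k\cdot\UniWqoModNatural^{d}$ and then invoke Thm~\ref{the:length-theorem-dickson}. The order $\preceq_{\mathbf{W}(\omega,d)}$ ignores indices and only compares two indexed sequents when they have the same $\omega$-set $\AntOmSetA\subseteq\mathbb{N}^+_{\leq d}$ and the same stoup $b\in\mathbb{N}_{\leq d}$. So I classify each indexed $(\omega,d)$-sequent by the pair $(\AntOmSetA,b)$. There are at most $k_d \UniSymbDef 2^d(d+1)$ such classes. Fix an enumeration of the classes and define
\[
\iota(\MkIdx{s}) \UniSymbDef (c(\AntOmSetA,b),\AntVecA) \in k_d\cdot\mathbb{N}^d
\]
for $\MkIdx{s}=(l,\UniSequent{(\AntOmSetA;\AntVecA)}{b})$, where $c(\AntOmSetA,b)$ is the class index. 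By construction, $\MkIdx{s}\preceq_{\mathbf{W}(\omega,d)}\MkIdx{s}'$ iff $\iota(\MkIdx{s})\leq_{k_d\cdot\mathbb{N}^d}\iota(\MkIdx{s}')$, and $\UniNorm{\MkIdx{s}}{}=\UniNorm{\iota(\MkIdx{s})}{k_d\cdot\mathbb{N}^d}$. Hence $\iota$ is an order- and norm-reflecting map.

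The wqo property transfers through $\iota$. Take any infinite sequence in $\mathbf{W}(\omega,d)$. Its image under $\iota$ is an infinite sequence in a wqo, since a finite disjoint sum of Dickson orders is a wqo by pigeonhole plus Dickson's lemma. That image therefore has a good pair $i<j$, which pulls back to a good pair in the original sequence. Reflexivity and transitivity are immediate.

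The properness of the norm is the one genuine subtlety. Indices range over all of $\mathbb{N}$, so $\{\MkIdx{s}: \UniNorm{\MkIdx{s}}{}\leq n\}$ is infinite, and strictly speaking the norm is not proper on indexed sequents. I would handle this in one of two ways. The first option is to work modulo indices, i.e. define the structure on the supports $\Cmp{\MkIdx{s}}$, where the sets of bounded norm are finite. The second option is to observe that what is actually needed is only the length bound. The length bound still holds, because the norm is only ever used to bound lengths and $\iota$ reflects both the order and the norm. The second option is what makes the lemma usable later, so I would phrase the argument that way.

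For the length function, take any $(\UniControlFunctionA,n)$-controlled bad sequence in $\mathbf{W}(\omega,d)$. Its image under $\iota$ is again bad and again $(\UniControlFunctionA,n)$-controlled, since $\iota$ reflects the order and preserves the norm. Therefore
\[
L_{\mathbf{W}(\omega,d),\UniControlFunctionA}(n) \leq L_{k_d\cdot\mathbb{N}^d,\UniControlFunctionA}(n) \leq U(k_d,d,n)
\]
by Thm~\ref{the:length-theorem-dickson}. Uniformity in $d$ follows because $n\mapsto U(2^d(d+1),d,n)$ is obtained from the single function $U\in\UniFGHOneAppLevel{\omega}$ by composing with primitive recursive maps. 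Thm~\ref{thm:composition-fomega} then keeps the result in $\UniFGHOneAppLevel{\omega}$, viewing $d$ as part of the input (e.g. $d\leq n'$ for the combined size). The main obstacle is this last uniformity bookkeeping, which amounts to checking that the exponential dependence $k_d=2^d(d+1)$ is absorbed by the primitive recursive pre-composition. I would verify this by bounding $U(k_d,d,n)\leq U'(\max(d,n))$ for some $U'\in\UniFGHOneAppLevel{\omega}$, using monotonicity of $U$.
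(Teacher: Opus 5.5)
Your proposal is correct and is essentially the paper's proof. You forget the index, encode each indexed $(\omega,d)$-sequent by the class of its pair $(\Omega,b)$ together with its finite-multiplicity vector in the disjoint sum $(2^d(d+1))\cdot\mathbb{N}^d$, note that this encoding reflects the order and preserves the norm, and push controlled bad sequences through it before applying Thm~\ref{the:length-theorem-dickson}. Your additional point that the norm is not literally proper on indexed sequents (the indices are unbounded) is something the paper glosses over. Your remedy is sound: either work modulo indices, or use only that the encoding bounds the length of every controlled bad sequence, which also shows the length function is well defined.
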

\begin{proof}
In order to see that
$\mathbf{W}(\omega,d)$ is a wqo,
let $\UniWqoA = (2^d \cdot (d+1)) \cdot \UniWqoModNatural^{d}$.
Now, uniquely assign a number $I(\AntOmSetA,b)$ in 
$\mathbb{N}_{\leq 2^d \cdot (d+1)}$
to the pairs $(\AntOmSetA,b)$, where
$\AntOmSetA$ is a subset
of $\mathbb{N}_{\leq d}^+$
and $b \in \mathbb{N}_{\leq d}$.
We then code the indexed $(\omega,d)$-sequent
    $\MkIdx{s} = (i,\UniSequent{(\AntOmSetA,\AntVecA)}{b})$
as the element 
$s^\# \UniSymbDef (I(\AntOmSetA,b),\AntVecA)$
of $\UniWqoA$ (note that the index is ignored).
Then clearly
$\MkIdx{t} \preceq_{\mathbf{W}(\omega,d)} \MkIdx{s}$
iff $\MkIdx{t}^\# \leq_{\UniWqoA} \MkIdx{s}^\#$. 
Moreover,
$\UniNorm{\MkIdx{t}^\#}{}
= \UniNorm{\MkIdx{t}}{}$.
It then follows that any $(\UniControlFunctionA,n)$-controlled bad sequence $\MkIdx{t_0},\MkIdx{t_1},\ldots,$ over sequents
can be translated to a $(\UniControlFunctionA,n)$-controlled bad sequence $\MkIdx{t}_0^\#,\MkIdx{t}_1^\#,\ldots$
Hence, the 
 length functions of
 $\mathbf{W}(\omega,d)$
  are upper bounded by the length functions of 
  $\UniWqoA$. We conclude by applying Thm~\ref{the:length-theorem-dickson}.
\end{proof}



\begin{lemma}\label{lem:omega-coordinate-pres}
Let
$\mHyperA_0 \UniSymbDef \UniEnrichHyper{\IdxHypA_0}{\UniRelDep{\mHyperA_0}}$ and
$\mHyperA \UniSymbDef \UniEnrichHyper{\IdxHypA}{\UniRelDep{\mHyperA}}$
be \lrinhypersequent{}s
such that 
$\mHyperA_0$ is finite
and
$\mHyperA$ occurs in $T_\omega(\mHyperA_0)$.
Then:
\begin{enumerate}
    \item
    Along a branch of 
    $\mTree{\mHyperA}$,
    indices of components strictly increase and
$\omega$-sets are 
    increasing.
        \item 
    Every branch of $\mTree{\mHyperA}$
    is an $(f, \UniNorm{h_0}{})$-controlled bad sequence over 
    $\mathbf{W}(\omega,d)$,
    where $f$ is the function from Lem~\ref{lem:control-hyper-calc}.
    \item $\mTree{\mHyperA}$ is finitely branching.
\end{enumerate}
Therefore,
$\mTree{\mHyperA}$ (and hence $\mHyperA$) is finite.
There is a function in $\UniFGHOneAppLevel{\omega}$ that uniformly upper bounds the size of $\mTree{\mHyperA}$
as
    a function of $\UniNorm{\Cmp{\MkIdx{h}_0}}{}$.
\end{lemma}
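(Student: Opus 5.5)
The plan is to prove the three items by induction on the construction of $T_\omega(\mHyperA_0)$, checking that each step of Def~\ref{def:omega-proof-search} preserves them, and then to combine them via K\"onig's lemma and Lem~\ref{lem:nwqo-weakening}.

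\emph{Item (1).} New components always receive index $\max\IdxHypA+1$. A new edge $(\MkIdx{t}_i,\MkIdx{s}_i)$ goes from an existing component to the new one, so indices strictly increase along every branch of $\mTree{\mHyperA}$. For the $\omega$-sets, I will inspect the rules of $\UniHCalcAbsorb{\UniHFLewRinv_\omega}$ in Fig.~\ref{fig-omega-rules} and Def~\ref{def:omega-struct-rules}. In every logical rule, the active component keeps the $\omega$-set of its principal component. In an $\omega$-structural rule $\UniRuleSchemaA_\omega$, the active component $(i,k)$ receives $\AntOmSetA[i,k]=\bigcup_{v\in C_{i,k}}\AntOmSetA_v$. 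The key ancestor is chosen among principal components whose schema shares a variable with the active schema, so it lies in $C_{i,k}$ and its $\omega$-set is contained in $\AntOmSetA[i,k]$. An $\omega$-refinement only enlarges the $\omega$-set. Hence $\omega$-sets are monotone along $\UniRelDep{\mHyperA}$-edges.

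\emph{Item (2).} Control is handled first. Adapting Lem~\ref{lem:control-hyper-calc}, a new component has norm at most $f$ of the norm of the hypersequent, and hence, by linear conclusion, at most $f$ of its parent's norm plus the finitely many formulas it passes along. Refinement only zeroes coordinates, so it does not increase the norm. By induction along a branch, the $j$-th element therefore has norm $\le f^j(\UniNorm{h_0}{})$.

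Badness is the main obstacle. Suppose $\MkIdx{t}'$ precedes $\MkIdx{s}$ on a branch and $\MkIdx{t}'\preceq_{\mathbf{W}(\omega,d)}\MkIdx{s}$, i.e.\ they have equal $\omega$-sets and stoups and $\Cmp{\MkIdx t'}\le\Cmp{\MkIdx s}$ on finite coordinates. Consider the step at which $\MkIdx s$ was added. Either $\Cmp{\MkIdx t'}=\Cmp{\MkIdx s}$, which contradicts pre-refinement redundancy (or post-refinement redundancy, since $\mHyperA$ is irredundant). Or the inequality is strict, so $\Cmp{\MkIdx t'}\prec$ the pre-component; then, because $\MkIdx t'$ is an $R^*$-ancestor, an $\omega$-partner exists. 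By legality (b), $\MkIdx s$ must then be an $\omega$-refinement. The refinement with respect to the partner $\MkIdx t''$ adds every strictly larger coordinate to the $\omega$-set. The partner is the largest-index $R^*$-ancestor below the pre-component, so $\MkIdx t'$ has index at most that of $\MkIdx t''$. By item (1), $\AntOmSetA_{t'}\subseteq\AntOmSetA_{t''}\subseteq\AntOmSetA_s$. Since the refinement makes at least one coordinate $\omega$ that was finite in $t''$, and $\AntOmSetA_{t'}\subseteq \AntOmSetA_{t''}$, the $\omega$-set of $s$ strictly exceeds that of $t'$. This contradicts equal $\omega$-sets. I would have to verify carefully that the set $\kappa$ is nonempty, i.e.\ that $\prec$ demands a strict increase outside the existing $\omega$-set.

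\emph{Item (3).} Finite branching at $\bullet$ was noted in the Remark. At an internal node $\MkIdx t$, every child is an active component of some rule whose key ancestor is $\MkIdx t$. Being the key ancestor means that all principal components sharing variables with the active schema have index $\le$ that of $\MkIdx t$. So the child's antecedent is built only from the finite set of components of index $\le\operatorname{index}(\MkIdx t)$ (linear conclusion plus the strong subformula property), together with finitely many subformulas. The $\omega$-refinement is determined by these finitely many earlier components. Hence only finitely many distinct $\omega$-sequents can be children, and irredundancy makes children distinct.

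Finiteness of $\mTree{\mHyperA}$ then follows by K\"onig's lemma, since an infinite finitely branching tree would have an infinite bad branch. For the uniform bound, Lem~\ref{lem:nwqo-weakening} gives depth at most $L_{\mathbf{W}(\omega,d),f}(\UniNorm{h_0}{})\in\UniFGHOneAppLevel{\omega}$. The branching degree is bounded primitive-recursively in $d$ and the current norm bound: the number of sequents with norm $\le f^j(n)$ times $2^d(d+1)$. The tree size is then at most the degree raised to the power of the depth, which stays in $\UniFGHOneAppLevel{\omega}$ by Thm~\ref{thm:composition-fomega}.
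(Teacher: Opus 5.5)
Your plan follows the paper's proof step by step:
\begin{itemize}
\item item 1: $\omega$-sets pass from the key ancestor, which lies in $C_{i,k}$, to the active component, and refinement only enlarges them;
\item badness in item 2: a dominating ancestor forces a refinement, and the refinement strictly enlarges the $\omega$-set;
\item item 3: a child of $t$ is determined by the components of index at most that of $t$;
\item the bound: K\"onig's lemma and (degree)$^{\text{depth}}$.
\end{itemize}
Your badness argument is sound. The point you left open is immediate. Let $s'=(\Omega_{s'};y)\Rightarrow b$ be the pre-component. Since $x_{t''}<y$, some $l$ has $x_{t''}[l]<y[l]$, and $y[l]>0$ forces $l\notin\Omega_{s'}$, so $\kappa\neq\varnothing$. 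As in the paper, you can also avoid comparing with the partner $t''$. Item 1 gives $\Omega_{t'}\subseteq\Omega_{s'}$, and refinement gives $\Omega_{s'}\subset\Omega_s$ strictly, which contradicts $\Omega_{t'}=\Omega_s$.

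The genuine gap is the control half of item 2. Lem~\ref{lem:control-hyper-calc} bounds a new component by $f$ of the norm of the \emph{whole} conclusion. Linear conclusion does not turn this into a bound in terms of the parent. In (com), each active component takes a multiset variable from each of the two principal components. Only the principal component of larger index is its parent in $\mTree{\mHyperA}$; the other may lie deep on a different branch and have large norm.

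Here is a concrete instance in the calculus for $\mathbf{MTL}$. Let $h_0=(r\fus u\Rightarrow q_1)\mid(F_k\Rightarrow q_2)$ with $F_0=p$ and $F_{i+1}=F_i\fus F_i$, so $\|h_0\|=1$.
\begin{itemize}
\item Repeated ($\fus$L)$_2$ on the second component reaches the sequent with $2^k$ copies of $p$ in the antecedent and $q_2$ as succedent. No refinement ever fires: with $x_i$ the multiplicity of $F_i$, the weighted sum $\sum_i 2^i x_i$ is invariant along the chain, so no strict domination occurs.
\item Then ($\fus$L)$_2$ on the first component gives $s'_a=(r,u\Rightarrow q_1)$, which has the largest index.
\item Apply (com) so that $s'_a$ contributes $u$ and the deep component contributes all $2^k$ copies of $p$ to the same active component. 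This gives $c=(p^{2^k},u\Rightarrow q_1)$.
\end{itemize}
The key ancestor of $c$ is $s'_a$, and $c$ has no $\omega$-partner. So the branch $r\fus u\Rightarrow q_1,\ s'_a,\ c$ has norms $1,1,2^k$. This violates $\|a_2\|\le f^2(1)$ for large $k$, and for any fixed $f$.

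The paper's proof has the same hole: it asserts $\|t_{i+1}\|\le f(\|t_i\|)$ from the same lemma. Finiteness of $\mTree{\mHyperA}$ survives, because it uses only badness and finite branching. The $\mathbf{F}_\omega$ size bound does not survive, and that includes your degree estimate via sequents of norm $\le f^j(n)$; it needs a different control argument.
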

\begin{proof}
Recall that
a branch of $\mTree{\mHyperA}$ is---after $\bullet$---a chain of 
indexed components $\MkIdx{t}_1
\,\UniRelDepZero{\mHyperA}\,
\MkIdx{t}_2
\,\UniRelDepZero{\mHyperA} \ldots$
such that $\MkIdx{t}_{i+1}$
is the active component of a premise of a rule
of the calculus
having $\MkIdx{t}_i$
as a principal component
(in fact, as its key ancestor).

\underline{Item 1.} Indices of components strictly increase because, by construction, $\MkIdx{t}_i \UniRelDepZero{\mHyperA}
\MkIdx{t}_{i+1}$ implies that the index of the component $\MkIdx{t}_{i+1}$ is strictly larger than the index of any of its ancestors.
For the next part, by the way the rules are defined, an active component has a larger $\omega$-set than its key ancestor for any non $\omega$-introduction rule. If an active component was created by an $\omega$-introduction rule, then it has an $\omega$-set that is strictly larger than its parent (and in fact, all its ancestors, by an implicit induction).

\underline{Item 2.} 
Let
$\mathsf{B} \UniSymbDef \MkIdx{t}_1
\,\UniRelDepZero{\mHyperA}\,
\MkIdx{t}_2
\,\UniRelDepZero{\mHyperA} \ldots$
be a branch of
of $\mTree{\mHyperA}$.
If it is not a bad sequence,
there is a chain 
$\UniSequent{(\AntOmSetA_i,\AntVecA_i)}{b_i} =\MkIdx{t}_i \allowbreak \,\UniRelDepZero{\mHyperA}\,
\ldots 
\,\UniRelDepZero{\mHyperA}\,
\MkIdx{t}_j=
\UniSequent{(\AntOmSetA_j,\AntVecA_j)}{b_j}$ in $\mathsf{B}$ for
some $i<j$ such that 
$\MkIdx{t}_i\preceq_{\mathbf{W}(\omega,d)}\MkIdx{t}_j$ i.e., $\AntOmSetA_i= \AntOmSetA_j$, $b_i=b_j$ and $\AntVecA_i < \AntVecA_j$ (strict since $\mHyperA$ is irredundant).  
Notice that $\MkIdx{t}_j$ must be the result of $\omega$-refinement since otherwise it would mean that the algorithm missed a suitable $\omega$-partner (i.e., $\MkIdx{t}_i$). 
Denote the pre-component of $\MkIdx{t}_j$
by $\MkIdx{t}'_j = \UniSequent{(\AntOmSetA_j',\AntVecB_j')}{b}$.
Then $\AntOmSetA_j'\subset \AntOmSetA_j$ since, as observed above, the pre-component has a strictly smaller $\omega$-set than the productive premise that replaces it. Since $j > i$, and since $\omega$-sets are increasing (by $\omega$-increasing property of related components from conclusion to premise), it follows that $\AntOmSetA_i\subseteq\AntOmSetA_j'$. 
Since $\AntOmSetA_i= \AntOmSetA_j$, we get the contradiction $\AntOmSetA_i\subseteq\AntOmSetA_j'\subset \AntOmSetA_i$.

    Moreover, 
    since $(\MkIdx{t}_i,\MkIdx{t}_{i+1}) \in \UniRelDepZero{\mHyperA}$ means that 
    $\MkIdx{t}_{i+1}$
    is either a premise or an $\omega$-extension of a premise
    of a rule with 
    $\MkIdx{t}_i$
    among its principal components, 
    we have that $\UniNorm{\MkIdx{t}_{i+1}}{} \leq f(\UniNorm{\MkIdx{t}_{i}}{})$
    by Lem~\ref{lem:control-hyper-calc},
    and thus
    $\UniNorm{\MkIdx{t}_{j}}{}
    \leq f^{j}(\UniNorm{\MkIdx{t}_{1}}{})$, meaning that the sequence is ($f,\UniNorm{h_0}{}$)-controlled, since
    $\UniNorm{\MkIdx{t}_1}{} 
    \leq \UniNorm{h_0}{}$
    because $\MkIdx{t}_1$ is in $\MkIdx{h}_0$.
    \underline{Item 3.} 
    Towards a contradiction, 
    pick a node
    labelled by 
    $\MkIdx{t}$
    having infinitely-many children, i.e.,
    $\MkIdx{t} \,\UniRelDepZero{\mHyperA} \, \MkIdx{s}_i$
    for all $i \geq 0$.
So $\MkIdx{t}$
acted as key ancestor
for each $\MkIdx{s}_i$, 
meaning that it was the most recent
principal component in the 
indexed \rinhypersequent{}
$\mHyperA_i \UniSymbDef \langle \IdxHypA_i,R_i \rangle$
being expanded 
sharing a multiset meta-variable with the schema of $\MkIdx{s}_i$.

Observe that the set~$A$ of
components with smaller
index than $\MkIdx{t}$
in the whole $\IdxHypA$
is the same for 
all such $\mHyperA_{i}$.
This means that every $\MkIdx{s}_i$ is determined by $A$. It follows that $\MkIdx{s}_i=\MkIdx{s}_j$ for some $i,j$, which contradicts the redundancy check.
Observe also that there are only finitely-many 
children of the first special node $\bullet$, since $\MkIdx{h}_0$ is 
finite. This finishes the proof of Item (3).


So $\mTree{\mHyperA}$ is finitely branching and all of its branches are finite; thus, by K\"onig's Lemma, 
$\mTree{\mHyperA}$ is finite. Now let $m$ be the length of the longest branch in this tree.
By the argument above, at any node labelled by $\MkIdx{t}$,
there can only be $p(\UniNorm{\MkIdx{t}}{}) \le p(f^m(\UniNorm{h_0}{}))$ many children of it for some primitive recursive function $p$ that depends only on the rules of the underlying calculus. 
Hence, the size of the total tree (which is at most the maximum number of children a node can have exponentiated to the length of the longest branch) is at most
$(p(f^m(\UniNorm{h_0}{}))^m$. This is a primitive recursive function in $m$ (\cite[Definition 9]{Clote94},\cite[Section 5.3.1]{schmitz2016hierar}).
Since $m$ is upper bounded by a function in $\UniFGHOneAppLevel{\omega}$
by Item (2) and Lem~\ref{lem:nwqo-weakening},
then by Thm~\ref{thm:composition-fomega} it follows that the size of the whole tree is bounded by a function in $\UniFGHOneAppLevel{\omega}$.
\qedhere

\end{proof}


\begin{theorem}\label{the:termination-proof-search}
Let $\mHyperA_0$ be a finite \lrinhypersequent{}.
Then
$T_\omega(\mHyperA_0)$
is finite, and thus the construction of $T_\omega(\mHyperA_0)$ terminates.
Moreover,
$T_{\omega}(\mHyperA_0)$
is built in time $\UniFGHOneAppLevel{\omega}$.
\end{theorem}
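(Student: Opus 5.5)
The plan is to reduce finiteness of $T_\omega(\mHyperA_0)$ to Lem~\ref{lem:omega-coordinate-pres}, combined with the fact that the proof-search tree is finitely branching. Along any branch of $T_\omega(\mHyperA_0)$, each step from a node labelled $\mHyperA$ to a child labelled $\mHyperA'$ adds exactly one new indexed component, with index $\max\IdxHypA+1$, together with exactly one new edge of the relation. The pre-refinement and post-refinement redundancy checks ensure that this component is genuinely new, so the result is again an \lrinhypersequent{}. Consequently, if $\mHyperA$ sits at depth $n$, then $\mTree{\mHyperA}$ has exactly $|\IdxHypA_0|+n$ non-root nodes. Lem~\ref{lem:omega-coordinate-pres} bounds $|\mTree{\mHyperA}|$ uniformly by some $B(\UniNorm{\Cmp{\MkIdx{h}_0}}{})$ with $B\in\UniFGHOneAppLevel{\omega}$, independently of $\mHyperA$. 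Hence every branch of $T_\omega(\mHyperA_0)$ has length at most $B(\UniNorm{\Cmp{\MkIdx{h}_0}}{})$.

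Next I show finite branching. At a leaf labelled $\mHyperA$, the only rules considered are instances of $\UniHCalcAbsorb{\UniHFLewRinv_\omega}$ whose conclusion is the support $h$ of $\IdxHypA$ and whose formulas are subformulas of the support of $\mHyperA_0$. There are only finitely many $(\omega,d)$-sequents over $d$ formulas whose finite coordinates are bounded by the norms in play. There are also only finitely many ways to instantiate each schema of $\UniHCalcAbsorb{\UniHFLewRinv_\omega}$ with conclusion $h$, including the bounded ($\le\UniActiveCompFixed$) choices of backward (EC). So each node has finitely many children. König's lemma then gives finiteness, and hence termination.

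For the complexity bound, I quantify the branching. By Lem~\ref{lem:control-hyper-calc} and item 2 of Lem~\ref{lem:omega-coordinate-pres}, every component appearing at depth $\le m$ has norm at most $f^{m'}(\UniNorm{h_0}{})$, where $m'\le B(\cdot)$. The number of rule instances with a given conclusion is therefore bounded by $p(f^{m'}(\UniNorm{h_0}{}), |h|, d)$ for some primitive recursive $p$ depending only on the calculus. Here $|h|\le B$ and $d\le\UniNorm{\cdot}{}$-polynomial in the input size. The total tree size is then at most $p(\cdots)^{B}$, and each node label has size polynomial in $B$ and $f^{B}(\UniNorm{h_0}{})$. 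Computing key ancestors, $\omega$-partners (a search along $R^*$), $\omega(\cdot,\cdot)$ and the redundancy checks is primitive recursive in the label size. So the total running time is $g_1(B(g_2(n)))$ for primitive recursive $g_1,g_2$, and Thm~\ref{thm:composition-fomega} places it in $\UniFGHOneAppLevel{\omega}$.

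The main obstacle is the first step: checking rigorously that every child's label is an \lrinhypersequent{} whose relation tree contains the parent's tree plus exactly one new node. Only then does Lem~\ref{lem:omega-coordinate-pres} apply at every node and the depth bound follow. Two points need care here. First, the $\omega$-refined premise keeps the same key ancestor edge. Second, the post-refinement redundancy check ensures irredundancy, so that legality condition (b) holds by construction.
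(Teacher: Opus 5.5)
Your proposal is correct and follows essentially the same route as the paper: Lem~\ref{lem:omega-coordinate-pres} uniformly bounds the size of every \lrinhypersequent{} labelling $T_\omega(\mHyperA_0)$. Since each search step adds exactly one component, this bounds branch length. Local finiteness of the calculus gives finite branching, and Thm~\ref{thm:composition-fomega} converts the resulting size and work bounds into a $\UniFGHOneAppLevel{\omega}$ running time.

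Your direct depth bound is a slightly cleaner phrasing of the paper's contradiction, in which an infinite branch induces an infinite \lrinhypersequent{}. One small slip: legality condition (b) comes from step (c) always refining when an $\omega$-partner exists, not from the redundancy check, which only ensures irredundancy.
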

\begin{proof}
Assume that $T_\omega(\mHyperA_0)$ is
not finite.
As branching factor of proof search tree is finite (the calculus is locally finitary), $T_{\omega}(\mHyperA_0)$ must have an infinite branch
by K\"onig's Lemma, 
which induces an infinite \lrinhypersequent{} $\mHyperA \UniSymbDef \UniEnrichHyper{\IdxHypA}{\UniRelDepZero{\mHyperA}}$, contradicting
Lem~\ref{lem:omega-coordinate-pres}, which says that there cannot be infinite \lrinhypersequent{}
in $T_\omega(\mHyperA_0)$.

A uniform $\UniFGHOneAppLevel{\omega}$ upper bound for the length
    of any branch in $T_{\omega}(\mHyperA_0)$
    is given by Lem~\ref{lem:omega-coordinate-pres}.
    The branching factor is also  $\UniFGHOneAppLevel{\omega}$ upper bounded.
    So the size of this tree is upper bounded in $\UniFGHOneAppLevel{\omega}$.
    Because performing instantiation, expanding nodes, 
    and doing all the required checks in the construction of $T_\omega(\mHyperA_0)$
    are all primitive recursive operations
    over an object of size $\UniFGHOneAppLevel{\omega}$,
    the whole tree is built in this time.\qedhere
\end{proof}


\begin{definition}[$\omega$-eager proofs]
A subtree of $T_{\omega}(\mHyperA_0)$ 
rooted at $\mHyperA_0$ 
whose leaves were obtained from 
instances of axiomatic rules 
is called an \emph{$\omega$-eager proof of $h_0$}, where $h_0$
is the support of $\IdxHypA_0$.
\end{definition}

\newcommand{\UniPSWknAlg}[1]{\mathtt{Wkn}(#1)}
\newcommand{\UniPSWknAlgnoinput}{\mathtt{Wkn}}

To any irredundant $\omega$-hypersequent $h_0$
we canonically associate
the \lrinhypersequent{}
$\mHyperA_0^{\circ} = \UniEnrichHyper{\IdxHypA_0}{\varnothing}$,
where $\IdxHypA_0$ is obtained
by choosing an order for the
elements of $h_0$ and assigning
indices $\{0,\ldots,|h_0|-1\}$
respecting this order.

\begin{definition}[$\omega$-proof search algorithm]
Given a finite irredundant $\omega$-hypersequent $h_0$ 
as input, 
the algorithm $\UniPSWknAlg{h_0}$
builds the tree
$T_\omega(\mHyperA_0^{\circ})$,
then looks for a subtree
that is an $\omega$-eager proof
of $h_0$.
If found, it outputs {\normalfont\texttt{provable}},
otherwise it outputs {\normalfont\texttt{unprovable}}.
\end{definition}



By 
Lem~\ref{lem:omega-coordinate-pres} and
Lem~\ref{the:termination-proof-search} we obtain termination and complexity.

\begin{theoremrep}[complexity]\label{the:complexity-wkn}
    $\UniPSWknAlgnoinput$
    terminates in $\UniFGHOneAppLevel{\omega}$ time.
\end{theoremrep}
\begin{proof}
    The algorithm obviously terminates in view of Lem~\ref{the:termination-proof-search}.
    Recall that $\UniPSWknAlg{h_0}$
    builds $T_\omega(\mHyperA_0^{\circ})$ in its first step, and 
    by Lem.~\ref{lem:omega-coordinate-pres},
    the size of this tree is upper bounded by a $\UniFGHOneAppLevel{\omega}$ function.
    Performing instantiation, expanding nodes, 
    and doing all the required checks in the construction of $T_\omega(\mHyperA_0^{\circ})$,
and checking the subtrees of $\omega$-eager proofs are operations that are primitive recursive in the size of the object. Hence, by Theorem~\ref{thm:composition-fomega}, the algorithm
    runs in time that is primitive recursive in a $\UniFGHOneAppLevel{\omega}$ function i.e., in $\UniFGHOneAppLevel{\omega}$.
\end{proof}

\section{Correctness of the $\omega$-proof search algorithm for extensions of $ \UniFLeExtLogic{\UniWProp}$}
\label{sec:ub-flew-correctness}
Given a finite irredundant hypersequent $h_0$,
we say that \emph{$h_0$ has an $\omega$-eager proof} if
$\UniPSWknAlg{h_0}$ returns \texttt{provable}.

We begin by establishing completeness.

\begin{lemma}[Completeness]\label{lem:completeness-wkn-alg}
Let $h$ be an irredundant $\omega$-hypersequent.
If $h$ has a proof in 
$\UniHCalcAbsorb{\UniHFLewRinv_\omega}$, then it has an $\omega$-eager proof.
\end{lemma}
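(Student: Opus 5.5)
We prove a slightly stronger statement by induction on proof height, since the search tree's nodes carry more than the root. Let $\mHyperA=\UniEnrichHyper{\IdxHypA}{R}$ be any \lrinhypersequent{} occurring as a node of $T_\omega(\mHyperA_0^{\circ})$ with support $h$. The claim is: if $h$ has a proof of height $\alpha$ in $\UniHCalcAbsorb{\UniHFLewRinv_\omega}$ (without $\UniEW$, $\UniEC$, by the preceding lemmas), then the subtree of $T_\omega(\mHyperA_0^{\circ})$ rooted at that node contains an $\omega$-eager proof. Applying this to the root $\mHyperA_0^{\circ}$ gives the lemma.

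\emph{Base case.} If the last rule is axiomatic, then the same axiomatic instance has conclusion $h$. The node is therefore a leaf obtained from an axiom.

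\emph{Inductive step.} Let the last rule instance be $r=(h\VL s_i)_{i\in I}/h$, with subproofs of heights $\alpha_i<\alpha$. We split into two cases.
\begin{enumerate}
\item Pre-refinement redundancy applies, i.e.\ some $s_i$ already occurs as a component of $h$. Then $h\VL s_i$ is obtained from $h$ by $\UniEW$. By hp-admissibility of $\UniEC$ (Lem~\ref{lem:hp-admiss-ec-weakening}), $h$ itself has a proof of height $\le\alpha_i<\alpha$, and the induction hypothesis applies to the same node.
\item Otherwise, the construction applies $r$ at this node, creating children labelled $\UniEnrichHyper{\IdxHypA\VL\MkIdx{s}_i}{R_i}$, where each $\MkIdx{s}_i$ may have been replaced by its $\omega$-refinement $\MkIdx{s}'_i$. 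The support of each child is either $h\VL s_i$ or $h\VL\omega(t',s_i)$. The latter is an $\omega$-extension of $h\VL s_i$, because $\omega(t',s_i)$ only moves coordinates into the $\omega$-set and zeroes them. By Lem~\ref{lem:hp-admin-omega-function}, each child's support therefore has a proof of height $\le\alpha_i<\alpha$. Legality of each child is guaranteed by the construction.
\end{enumerate}

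The delicate point is post-refinement redundancy: some $\MkIdx{s}'_i$ may coincide with a component of $\IdxHypA$, in which case the whole expansion is deleted. Then $h\VL s'_i$ has support equal to $h$ up to one duplicate. By hp-admissibility of $\UniEC$, $h$ has a proof of height $\le\alpha_i<\alpha$, and the induction hypothesis on the same node settles it. The pre-refinement redundancy case is handled the same way. In both cases the induction is on the height of \emph{some} proof of the node's support, not on a fixed node, so it is well-founded.

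In the non-redundant case, the induction hypothesis gives an $\omega$-eager proof below each child. Combining them under the current node yields the required subtree.

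The main obstacle I expect is confirming that every rule instance applicable to $h$ in the calculus is actually tried at the node. The construction considers all rules instantiated with subformulas of the support of $\mHyperA_0$. I would use the subformula property of the calculus (noting that $\omega$-versions do not create new formulas) to restrict the given proof to such instances. Once that is in place, the case analysis above, driven by Lem~\ref{lem:hp-admin-omega-function} and hp-admissibility of $\UniEC$, finishes the argument.
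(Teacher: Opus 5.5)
Your proposal is correct and follows essentially the paper's own proof. Both strengthen the claim to arbitrary legal nodes and induct on proof height. Both use hp-admissibility of (EC) to handle the pre- and post-refinement redundancy cases (re-applying the induction hypothesis at the same node), and Lem~\ref{lem:hp-admin-omega-function} to transfer provability to the $\omega$-refined premises, which are $\omega$-extensions of the original ones.

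Your explicit check that every relevant rule instance is actually tried at a node is a point the paper leaves implicit. It is immediate, since each premise of the invertible calculus contains the conclusion plus only subformulas of it.
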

\begin{proof}
We prove the stronger statement:
if $h$ is provable
in 
$\UniHCalcAbsorb{\UniHFLewRinv_\omega}$, then 
$T_\omega(\mHyperA)$ has an $\omega$-eager proof of $h$
for any \lrinhypersequent{}
$\mHyperA \UniSymbDef \UniEnrichHyper{\IdxHypA}{\UniRelDep{\mHyperA}}$, where $\MkIdx{h}$ has support $h$.

    By induction on the height $\alpha$ of a proof
    $\pi$ of $h$ in $\UniHCalcAbsorb{\UniHFLewRinv_\omega}$.
    Let $\mHyperA \UniSymbDef \UniEnrichHyper{\IdxHypA}{\UniRelDep{\mHyperA}}$
    be an arbitrary
    \lrinhypersequent{}.

\textit{(Base case)} This is obvious,
as the root of $T_\omega(\mHyperA)$ itself is an $\omega$-eager proof of $h$, since $h$ is an instance of an axiom of 
$\UniHCalcAbsorb{\UniHFLewRinv_\omega}$.

\textit{(Induction step)} 
Assume that
    the last rule
    applied in
    $\pi$
    was 
    $\UniRuleSchemaA = (h|t_i)_i / h$.
Thus, each $h_i=(h|t_i)$
has a proof $\pi_i$ in
$\UniHCalcAbsorb{\UniHFLewRinv_\omega}$
of height $\alpha_i < \alpha$.
By cases:
\begin{enumerate}
    \item $t_i$ occurs in $h$ for some $1 \leq i \leq k$:
    Say $h = h'|t_i$,
    and thus $h_i = h'|t_i|t_i$.
    Since $h_i$ has 
    a proof of height $< \alpha$, by
    hp-admissibility of $\UniEC$
    we know that $h$ itself
    has a proof of height $< \alpha$. 
    Because $h$ is irredundant by assumption, by IH $T_\omega(\mHyperA)$ has an $\omega$-eager proof
    of $h$.
\end{enumerate}

We know that each $(h\VL t_i)$ is irredundant. Thus, by IH,
any \lrinhypersequent{} 
$\mHyperA_i$ based on $\IdxHypA_i$
has an $\omega$-eager proof
of $h_i$.
Define the instantiation context $\mathfrak{I} \UniSymbDef \langle \UniRuleSchemaA,\mHyperA \rangle$.
We now consider the other cases.

\begin{enumerate}[leftmargin=*]
    \setcounter{enumi}{1}
    \item 
    For each $1 \leq i \leq k$, 
    it is not the case that
    $\MkIdx{t}_i$ has an $\omega$-partner in $\mathfrak{I}$:
    Then, since $t_i$
    does not occur in $h$,
    in the first step of the  construction of $T_\omega(\MkIdx{h})$,
    an \lrinhypersequent{} $\mHyperA_i$ will be added as a child of the root of $T_\omega(\mHyperA)$, and no $\omega$-refinement takes place.
    Thus, by construction, $T_\omega(\mHyperA_i)$
    is a subtree of $T_\omega(h)$ for each $1 \leq i \leq k$.
    Hence, picking the $\omega$-eager proofs $\pi'_i$ of $h_i$
    obtained by IH,
    obtain an $\omega$-eager proof of $h$ as a rooted subtree of $T_\omega(\MkIdx{h})$.
    \item 
    Now, assume wlog
    that $\MkIdx{t}_1,\ldots,\MkIdx{t}_m$
    have $\omega$-partners
    in $\mathfrak{I}$.
    This means that they have
    gone through the 
    $\omega$-refinement
    process when building
    $T_\omega(\mHyperA)$,
    and
    \lrinhypersequent{}s
    $\mHyperA'_i$
    with support
    $h'_i = (h|t'_i)$ were added as children of the root of $T_\omega(\mHyperA)$ for all $1 \leq i \leq m$, where $h'_i$ is clearly an $\omega$-extension of $h_i$.
    Since $h_i$ is provable with height $\alpha_i < \alpha$,
    due to Lemma~\ref{lem:hp-admin-omega-function}
    we have that $h|t'_i$
    has a proof with height $\leq \alpha_i < \alpha$.
    We now have two cases.

    If some $t'_i$ appears in $h$,
    then $h = h'|t'_i$
    and $h|t'_i = h'|t'_i|t'_i$. By hp-admissibility of $\UniEC$,
    $h$ has a proof of height $< \alpha$
    and thus by IH it has an $\omega$-eager proof
    in $T_\omega(\mHyperA)$.

    If no $t'_i$ appears in $h$, we know that
     $h'_i$ has an 
    $\omega$-eager proof
    in $T_\omega(\mHyperA_i')$
    by IH.
    Since 
    $T_\omega(\mHyperA_i')$
    is an immediate subtree of $T_{\omega}(\mHyperA)$,
    and since the IH applies to the other premises
    $(h|t_j)$, $j \geq m$,
    and $T_\omega(\mHyperA_j)$
    are also immediate subtrees of $T_\omega(\mHyperA)$ since they have not gone through $\omega$-refinement, we are done.\qedhere
\end{enumerate}
\end{proof}

Given an $\omega$-hypersequent~$h=(\AntOmSetA_0,\AntVecA_0)\allowbreak\Ra b_0 \VL \ldots \VL (\AntOmSetA_l,\AntVecA_l)\Ra b_l$ and $K\in\mathbb{N}$, let $h[K]$ denote the $\omega$-free $\omega$-hypersequent
$(\varnothing; \AntVecA_0[\AntOmSetA_0\allowbreak\mapsto K])\allowbreak\Ra b_0\VL\ldots\VL\allowbreak (\varnothing;\AntVecA_l[\AntOmSetA_l\mapsto K])\Ra b_l$. Informally speaking, every formula in the $\omega$-set has been assigned the multiplicity~$K$. 

\begin{figure*}
\centering
\begin{scriptsize}
\AxiomC{$h_0 \VL (\varnothing;\AntVecA_0) \Ra b \VL h_1 \VL (\varnothing;\AntVecB[\kappa\mapsto K_1]) \Ra b$}
\RightLabel{$\UniEW$,($\UniWProp$)}
\UnaryInfC{$h_0 \VL (\varnothing;\AntVecA_0) \Ra b \VL \ldots \VL h_1 \VL (\varnothing;\AntVecA_{K_1}) \Ra b$}
\AxiomC{$\{ h_0 \VL (\varnothing;\AntVecA_0) \Ra b \VL h_1\VL t_i\}$}
\RightLabel{$\UniEW$,($\UniWProp$)}
\UnaryInfC{$\{ h_0 \VL (\varnothing;\AntVecA_0) \Ra b \VL \ldots \VL h_1 \VL t_i'\}$}
\BinaryInfC{$h_0 \VL (\varnothing;\AntVecA_0) \Ra b \VL \ldots \VL h_1 \VL (\varnothing;\AntVecA_{k-1}) \Ra b \VL h_1$}
\noLine
\UnaryInfC{$\cdots$}
\noLine
\UnaryInfC{$h_0 \VL (\varnothing;\AntVecA_0) \Ra b \VL h_1 \VL (\varnothing;\AntVecA_1) \Ra b \VL h_1 \VL (\varnothing;\AntVecA_2)\Ra b$}
\AxiomC{$\{ h_0 \VL (\varnothing;\AntVecA_0) \Ra b \VL h_1\VL t_i\}$}
\RightLabel{$\UniEW$, $(\UniWProp)$}
\UnaryInfC{$\{ h_0 \VL (\varnothing;\AntVecA_0) \Ra b \VL h_1 \VL (\varnothing;\AntVecA_1) \Ra b \VL h_1 \VL t_i'\}$}
\BinaryInfC{$h_0 \VL (\varnothing;\AntVecA_0) \Ra b \VL h_1 \VL (\varnothing;\AntVecA_1) \Ra b \VL h_1$}
\noLine
\UnaryInfC{$\Pi_1$}
\noLine
\UnaryInfC{$h_0 \VL (\varnothing;\AntVecA_0) \Ra b \VL h_1 \VL (\varnothing;\AntVecA_1) \Ra b$}    
\AxiomC{\hspace{-4cm}$\{ h_0 \VL (\varnothing;\AntVecA_0) \Ra b \VL h_1 
\VL t_i\}$}
\BinaryInfC{$h_0 \VL (\varnothing;\AntVecA_0) \Ra b \VL h_1$}
\noLine
\UnaryInfC{$\Pi_0$}
\noLine
\UnaryInfC{$h_0 \VL (\varnothing;\AntVecA_0) \Ra b$}
\DisplayProof
\end{scriptsize}
\caption{
$K_1$ iterations of the derivation in (\ref{gadget-derivation}). The set of open leaves (non-axiomatic leaves) is unchanged.}
\label{gadget-iteration}
\end{figure*}

\begin{lemma}[Soundness]\label{lem:sound-eager}
If an $\omega$-free irredundant $\omega$-hypersequent has an $\omega$-eager proof,
then $h$ is provable in 
$\UniHCalcAbsorb{\UniHFLewRinv_\omega}$.
\end{lemma}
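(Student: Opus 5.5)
We plan to prove a stronger, quantitative statement by induction on the structure of the finite $\omega$-eager proof, processing it from the leaves towards the root. For every node labelled by an \lrinhypersequent{} with support $h$ (now possibly containing $\omega$-formulas), we show there is a threshold $K_h$ such that $h[K]$ is provable in $\UniHCalcAbsorb{\UniHFLewRinv_\omega}$ for every $K\geq K_h$. Since the root $h_0$ is $\omega$-free, $h_0[K]=h_0$, which gives the lemma. The thresholds are computed leaf-to-root, exactly as in the informal description of Remark~\ref{rem-high-level}; finiteness of the $\omega$-eager proof makes this well defined.

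\emph{Leaves.} An axiom stays an axiom after replacing $\omega$-coordinates by $K\geq 1$. Here $\UniSequent{(\AntOmSetA\cup\{p\};\varnothing)}{p}$ becomes $\UniSequent{p^K}{p}$, which is derivable from the ordinary axiom by $(\UniWProp)$; since $(\UniWProp)$ is present in the weakening calculus, this case is immediate.

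\emph{Non-$\omega$-introduction steps.} A rule instance of the $\omega$-calculus becomes a rule instance of the finite calculus after substituting $K$, possibly followed by weakenings. A rule acting on an $\omega$-coordinate uses a copy of $\psi$ from the $K$ copies, so the premise has $K-1$ copies where $K$ is required; we handle this by taking the threshold at the premise larger by one, and by $(\UniWProp)$, i.e.\ monotonicity in $K$. The same bookkeeping applies to $\AntOmSetA[i,k]$ in $\UniRuleSchemaA_\omega$. Since every variable in $\AntOmSetA[i,k]$ comes from a principal component sharing a multiset variable with the active one, the instantiation of that shared variable can carry the $K$ copies.

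\emph{$\omega$-introduction steps.} This is the main obstacle. Suppose a productive premise $s'_i=\omega(t',s_i)$ was created, with $\omega$-partner $t'$ linked to $s_i$ via $R^*$ and $t'\prec s_i$. We plan to build the gadget of Figure~\ref{gadget-iteration}. The path of rule instances in the proof from the component $t'$ up to the pre-component $s_i$ is a derivation fragment, call it (\ref{gadget-derivation}), whose other open leaves are the sibling premises $t_j$. The key-ancestor chain guarantees that each step passes the relevant multiset variable along, so extra copies of the formulas in $\kappa$ placed in $t'$ reappear in $s_i$. Replaying the fragment starting from $s_i$ instead of $t'$ therefore yields a component whose $\kappa$-coordinates have grown by at least one more. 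Iterating $K_1$ times produces a component dominating $(\varnothing;\AntVecB[\kappa\mapsto K_1])\Ra b$, reached after $(\UniWProp)$ and $\UniEW$. The side premises at each iteration are weakenings (by $\UniEW$) of the original side premises, whose provability for large $K$ holds by the induction hypothesis together with hp-admissibility of $\UniEW$ (Lemma~\ref{lem:hp-admiss-ec-weakening}). Leftover duplicated components are removed with $\UniEC$.

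The delicate points are twofold. First, we must verify that the intermediate components on the $R^*$-chain really transmit the surplus through shared multiset variables; this is where the key-ancestor and linear-conclusion properties are needed. Second, we must choose $K_h$ large enough to cover the number of iterations plus the copies consumed along the fragment. We would establish the first point by induction along the $R$-chain, using Definition~\ref{def:lrin-hyper}(a). The thresholds then compose, since each step requires only a bounded increase.
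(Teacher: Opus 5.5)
Your handling of axioms and of ordinary rule instances matches the paper's Stage one: substitute $K$, distribute the copies of $\omega$-formulas over shared multiset variables, weaken upwards, and raise the threshold by a bounded amount. Your pumping idea is also the paper's gadget. The gap is in \emph{where} you perform the pumping.

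In a leaf-to-root induction, the hypothesis at an $\omega$-introduction node $N$ only covers the children of $N$, i.e.\ the productive premise and the siblings $t_j$. The segment you want to replay, however, lies \emph{below} $N$. The $\omega$-partner $t'$ and every component on the $R$-chain from $t'$ to the pre-component $s_i$ already belong to the hypersequent at $N$. So they were created by rule instances on the path from the first node containing $t'$ up to $N$. Replaying that segment starting from $s_i$ needs two things your induction does not supply.
\begin{enumerate}
\item You need derivations, with $\omega$-sets replaced by finite multiplicities, of the other premises of every branching rule on the segment (e.g.\ ($\land$R), ($\lor$L), or multi-premise analytic structural rules). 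These subproofs are not above $N$, so your induction hypothesis says nothing about them. In particular, your claim that the only other open leaves of the fragment are the $t_j$ is false in general.
\item You need the segment to consist of genuine rule instances. It may, however, contain further $\omega$-introduction steps (for partners lying even lower). These are not yet resolved when $N$ is processed and cannot be replayed as real rules.
\end{enumerate}
So the threshold statement cannot be pushed through an $\omega$-introduction using the children of $N$ alone.

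The paper avoids this by decoupling the two tasks. Stage one is a leaf-to-root pass like yours, except that at an $\omega$-introduction it does not pump. It only records a placeholder \emph{pseudo-$\omega$ rule} $p\omega(r)$ with associated values $K_i$, which is not a rule instance of the calculus. Stage two then removes these placeholders top-down, always choosing a \emph{lowermost} one. For that choice, the path from the root to it is $\omega$-free and consists of genuine rule instances. Moreover, the whole derivation displayed in (\ref{gadget-derivation}), starting from the first occurrence of the partner and including its closed side subproofs, is present in the tree. It can therefore be copied into $\Pi_0,\Pi_1,\ldots$ of Figure~\ref{gadget-iteration}, with surplus copies in side branches removed by weakening upwards.

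Termination of this elimination needs a separate well-founded measure; the paper uses the multiset of per-branch counts of pseudo-$\omega$ rules. If you repair your argument along these lines, note that each copy $\Pi_i$ also duplicates the side subproofs of the segment, together with any pseudo-$\omega$ rules they contain. Your measure must be chosen to survive that duplication.
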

\begin{proof}
Let $h = \UniSequent{(\varnothing;\AntVecA)}{b}$
and let $\MkIdx{\pi}$ be an $\omega$-eager proof of $h$
of height $\alpha$, i.e.,
a subtree of $T_\omega(\mHyperA^{\circ})$
rooted at $\mHyperA^{\circ}$.
Obtain the tree $\pi$
by deleting all the indices of all components appearing
in $\MkIdx{\pi}$ (i.e., $\pi$ is a tree of $\omega$-hypersequents).
We must transform
$\pi$ into a proof in 
$\UniHCalcAbsorb{\UniHFLewRinv_\omega}$ (free of $\omega$-coordinates).
For that, it suffices to give a proof of $h$
containing no $\omega$-introduction rules; recall that the latter are those introduced as the productive output of $\omega$-refinement in the construction of $T_\omega(\mHyperA^{\circ})$.
Our transformation has two stages.

(\underline{Stage one}) We transform~$\pi$ by replacing, in every component (in every $\omega$-hypersequent occurring in the proof), each element of its $\omega$-set by a specified finite multiplicity. This multiplicity is obtained by a leaf-to-root induction on $\pi$ (equivalently, by an induction on the distance of a node in the tree from the leaves).

\textit{(Base case)} 
The node must be one of the following, where
we denote the logical constants by $\fff$ and $\ttt$ (since $0$ and $1$ are overloaded):
\begin{center}
\begin{small}
\begin{tabular}{ccc}
$g \VL \UniSequent{(\AntOmSetA;\mathbf{e}_b)}{b}$
&$g \VL \UniSequent{(\AntOmSetA\cup\{b\};\mathbf{0})}{b}$
&$g \VL \UniSequent{(\AntOmSetA;\mathbf{e}_{\fff})}{}$
\\[0.1em]
\multicolumn{3}{c}{
\begin{tabular}{cc}
$g \VL \UniSequent{(\AntOmSetA \cup \{\fff\}; \mathbf{0})}{}$
&$g \VL \UniSequent{(\AntOmSetA;\mathbf{0})}{\ttt}$
\end{tabular}
}
\end{tabular}
\end{small}
\end{center}

Map every $\omega$-set element to multiplicity~$1$ and prove using $(\UniWProp)$:
\begin{center}
\begin{small}
\begin{tabular}{ccc}
$g[1] \VL \UniSequent{(\varnothing;\mathbf e_b[\AntOmSetA\mapsto 1])}{b}$
&$g[1] \VL \UniSequent{(\varnothing;\mathbf{0}[\AntOmSetA\cup\{b\}\mapsto 1])}{b}$
\\[0.1em]
$g[1] \VL \UniSequent{(\varnothing;\mathbf{e}_{\fff}[\Omega \mapsto 1])}{}$
&$g[1] \VL \UniSequent{(\varnothing;\mathbf 0[\AntOmSetA \cup \{\fff\}\mapsto 1])}{}$
\\[0.1em]
\multicolumn{2}{c}{
$g[1] \VL \UniSequent{(\varnothing;\mathbf{0}[\AntOmSetA\mapsto 1])}{\ttt}$
}
\end{tabular}
\end{small}
\end{center}


\textit{(Inductive step)} 
Consider an internal node of the tree with label $g$, and assume that $\UniRuleSchemaA$
was applied backwards in the construction of $T_\omega(\MkIdx{h})$. 
Let~$\pi_i$ denote the subtree at the $i$-th premise, say with root labelled by $h_i$.
By inductive construction, each $\pi_i$ is transformed to $\pi_i'$, where the latter
has root $h'_i = h_i[K_i]$
for some $K_i \in \mathbb{N}$
(i.e., every $\omega$-set in $h_i$ is replaced by a finite multiplicity $K_i$). By cases:

\begin{enumerate}[leftmargin=*]
    \item $\UniRuleSchemaA$ is an instance of $\UniHCalcAbsorb{\UniHFLewRinv_\omega}$ with conclusion~$g$: 
    we identify a derivation
    of
    $\UniHCalcAbsorb{\UniHFLewRinv_\omega}$
    whose every non-axiom leaf is a root of some $\pi'_i$, and whose conclusion is $g[K_0]$, for a suitable
    $K_0\geq \max_i K_i$. Specifically, we show that there is an instance $\UniRuleSchemaA'$ of the same rule schema with conclusion $g[K_0]$ and premises that can be weakened upward to reach the roots of $\pi_i'$. I.e.,
\begin{center}
\begin{small}
    \AxiomC{$h_1[K_1]$}
\RightLabel{$(\UniWProp)^*$}
    \UnaryInfC{$g_1[K_0]$}
    \AxiomC{$\cdots$}
    \AxiomC{$h_m[K_m]$}
\RightLabel{$(\UniWProp)^*$}
    \UnaryInfC{$g_m[K_0]$}
    \RightLabel{$r'$}
    \TrinaryInfC{$g[K_0]$}
    \DisplayProof
\end{small}    
\end{center}
At the root of $\pi_i'$, for each component, we need $K_i$ copies of every formula that was formerly in its $\omega$-set. For components instantiating the hypersequent-variable in the rule schema of $r$, this is easy to fulfill; use the same instantiation. Meanwhile, every active component $c_a$ in the rule schema has multiset metavariables $M_1,\ldots,M_{l(a)}$, and the latter occur in some subset $\{p_1,\ldots,p_{n(a)}\}$ of principal components. The $\omega$-set of $c_a$ was defined (Def.~\ref{def:omega-struct-rules}) as the union of the $\omega$-sets of $\{p_1,\ldots,p_{n(a)}\}$. Hence, in the instantiation of each $M_i$, if we include $K$ (for some $K \in \mathbb{N}$) copies of each formula from the $\omega$-set of the principal component in which it occurs, then every formula that was formerly in the $\omega$-set of $c_a$ now has multiplicity $\geq K$ (excess copies may occur when an active component contains multiple multiset metavariables; these can be removed by weakening upwards to reach $K_i$). If $\UniRuleSchemaA$ is a logical rule, then we need an additional copy in the principal component, to serve (potentially) as principal formula. Recalling $\UniFmlaMultFixed$ (Def.~\ref{def:acn-fm}), replace every element in the $\omega$-set of the conclusion of $\UniRuleSchemaA$ by $K_0 = (\UniFmlaMultFixed \cdot \max_i K_i)+1$.

    \item $r$ is an $\omega$-introduction rule $\omega(r)$ that has a single productive premise (the argument extends by induction to the general case of multiple productive premises).  For spacing reasons, here and elsewhere we may present the premises stacked vertically.
\begin{center}
\AxiomC{$h_0 \;\VL\; (\AntOmSetA;\AntVecA) \Ra b \;\VL\; h_1 \;\VL\; (\AntOmSetA\cup\kappa; \AntVecB[\kappa\mapsto 0]) \Ra b$}
\noLine
\UnaryInfC{$\{ h_0 \VL (\AntOmSetA;\AntVecA) \Ra b \VL h_1\VL t_i\}_i$}
\RightLabel{$\omega(r)$}
\UnaryInfC{$h_0 \;\VL\; (\AntOmSetA;\AntVecA) \Ra b \;\VL\; h_1$}
\DisplayProof
\end{center}
Setting $K_0=\max_i K_i$, and replacing each~$\pi_i$ with $\pi_i'$ obtained from the inductive construction, write the following.
\begin{center}
\footnotesize
\AxiomC{$h_0[K_1] \;\VL\; (\varnothing;\AntVecA[\AntOmSetA\mapsto K_1]) \Ra b \;\VL\; h_1[K_1] \;\VL\; (\varnothing; \AntVecB[\AntOmSetA\cup\kappa\mapsto K_1]) \Ra b$}
\noLine
\UnaryInfC{$\{ (h_0 \VL (\AntOmSetA;\AntVecA) \Ra b \VL h_1\VL t_i)[K_i]\}_i$}
\RightLabel{$p\omega(r)$}
\UnaryInfC{$h_0[K_0] \;\VL\; (\varnothing;\AntVecA[\AntOmSetA\mapsto K_0]) \Ra b \;\VL\; h_1[K_0]$}
\DisplayProof
\end{center}
Unlike before, we do not obtain a rule instance of $\UniHCalcAbsorb{\UniHFLewRinv_\omega}$: 
the coordinates~$\kappa$ have some finite multiplicities in $(\varnothing;\AntVecA[\AntOmSetA\mapsto K])$ (conclusion), but they have multiplicity~$K_1$ in $\AntVecB[\AntOmSetA\cup\kappa\mapsto K_1]$ (premise).
Call the above a \emph{pseudo-$\omega$ rule instance} $p\omega(r)$, and call $\{K_i\}_{i\in I}$ its \emph{associated values}.
We will replace these with a derivation in the subsequent stage. For now, proceeding in this way, we ultimately reach a proof $\pi'$ of $(\varnothing, \AntVecA)\Ra b$ using rules of $\UniHCalcAbsorb{\UniHFLewRinv_\omega}$ + pseudo-$\omega$-rule instances. In particular, $\pi'$ does not contain $\omega$-introduction rules.
\end{enumerate}

(\underline{Stage two})
Now, let us take $\pi'$
obtained above
and
eliminate pseudo-$\omega$-rule instances, obtaining a proof of $h$ in $\UniHCalcAbsorb{\UniHFLewRinv_\omega}$.

We proceed by induction considering a slightly more general family of proofs.
Let~$P$ be a proof of an $\omega$-free hypersequent in $\UniHCalcAbsorb{\UniHFLewRinv_\omega}$ + pseudo-$\omega$-rule instances 
that satisfy the $\omega$-refinement condition, where each $\omega$-hypersequent in $P$ implicitly holds a (key ancestor, active component) relation on its components that is defined as in the $\omega$-proof search tree in Def~\ref{def:omega-proof-search}.

Associate a branch in $P$ with the number of pseudo-$\omega$-rule instance conclusions on it.
Define \emph{maximum nesting~$\rho(P)$} as the multiset of such numbers from all branches (i.e., every  branch of~$P$ contributes a single number). 
Induction on the standard multiset ordering on~$\rho$ (replacing a number in a multiset by finitely many strictly smaller numbers is taken to yield a strictly smaller multiset).

\textit{(Base case)}. If 
$\rho$ is empty, then $P$ is already a proof in $\UniHCalcAbsorb{\UniHFLewRinv_\omega}$.

\textit{(Inductive step)} Suppose that $\rho(P)$ is non-empty
and consider a lowermost instance of a pseudo-$\omega$-rule instance in $P$.
Such a rule instance appears as in the picture below. Here, $K_i$ is the $i$-th associated value of the pseudo-$\omega$-rule instance. Note that the $\omega$-set of every component in every $\omega$-hypersequent along the path from the root of $P$ to the conclusion of the chosen $p\omega(r)$ rule is empty, by choice of lowermost $p\omega(r)$ and assumption that the is $\omega$-free.
\begin{equation}
\label{gadget-derivation}
\text{
\AxiomC{$h_0 \;\VL\; (\varnothing;\AntVecA) \Ra b \;\VL\; h_1 \;\VL\; (\varnothing; \AntVecB[\kappa\mapsto K_1]) \Ra b$}
\noLine
\UnaryInfC{$\{ h_0 \VL (\varnothing;\AntVecA) \Ra b \VL h_1\VL t_i[\kappa\mapsto K_i]\}_i$}
\RightLabel{$p\omega(r)$}
\UnaryInfC{$h_0 \;\VL\; (\varnothing;\AntVecA) \Ra b \;\VL\; h_1$}
\noLine
\UnaryInfC{$\cdots$}
\noLine
\UnaryInfC{$h_0 \;\VL\; (\varnothing;\AntVecA) \Ra b$}
\DisplayProof
}
\end{equation}
Reading the above as a derivation where the only open leaves are those shown in the picture, replace its occurrence in $P$ by the derivation in Fig~\ref{gadget-iteration}. Here, $\AntVecA_0,\ldots,\AntVecA_{K_1}$ is the sequence defined by $\AntVecA_0:=\AntVecA$ and $\AntVecA_{i+1}=\AntVecA+(i+1)(\AntVecB-\AntVecA)$ (recall that $\AntVecA < \AntVecB$ by the condition in $\omega$-refinement). Call this new tree $P^*$.

First, let us explain why this derivation has the same set of open leaves as before.
Consider the first premise of $p\omega(r)$. In the (key ancestor, active component) dependency graph on components, we have that the featured $(\varnothing;\AntVecA_0) \Ra b$
is an ancestor of the active component $(\varnothing;\AntVecA_1) \Ra b$, by the $\omega$-refinement condition. Moreover, looking at the rule schemas underlying the rule instances from the introduction of $(\varnothing;\AntVecA_0) \Ra b$ to $(\varnothing;\AntVecA_1) \Ra b$, there is a common multiset metavariable between each (key ancestor, active component) pair---ensuring this property was a major motivation in the definition of key ancestor. Now, $\Pi_{i}$ ($i>1$) is like $\Pi_0$ except that the excess $\AntVecA_{i}- \AntVecA_0$ is included as part of the instantiation of these multiset metavariables. Although this excess will now appear in every active component of a rule instance within $\Pi$ that contains such a multiset metavariable---i.e., the unproductive premises of the $\omega$-introduction rule, or the premise of a branching rule inside a $\Pi_i$ that leads towards an axiom---all unwanted copies can be deleted by weakening upwards. Hence  $\Pi_i$ is indeed a derivation.

Next, we claim that the nesting depth~$\rho^*$ of $P^*$ is strictly less than $\rho$ in the multiset ordering. 
Any branch~$B$ in $P$ either passes our chosen $p\omega(r)$, or it does not.
Let $\rho_1$ be the multiset of nesting depths from branches of the former type, and $\rho_2$ from the latter.
Thus, $\rho=\rho_1\uplus \rho_2$, where $\uplus$ is multiset union.
Also, any branch in $P^*$ either has the form $\sigma B$ where $\sigma$ is the path from the root to an open leaf wrt the deduction ($B$ is the remainder of the path), or it does not pass any open leaf wrt the deduction (this means that it branches inside some $\Pi_i$ and heads towards an axiom).
Then, $\rho^* = \rho_1^* \uplus \rho_2$, where $\rho_1^*$ is the multiset of nesting depths from branches of the former type, and $\rho_2$ is from the latter. Since $\sigma$ does not contain any pseudo-$\omega$-rule instance, $\rho_1 = 1 +\rho_1^*$ (where $+$ is the obvious pointwise addition operation). Hence, the claim is established.

To establish the inductive case, it suffices now to apply IH to $P^*$.
Finally, applying this result to $\pi'$ yields the desired proof.
\end{proof}

\begin{theorem}[Correctness of $\UniPSWknAlgnoinput$]\label{the:wkn-correctness}
For any irredundant hypersequent
$h_0$, 
$\UniPSWknAlg{h_0}$
outputs {\normalfont\texttt{yes}}
iff
$h_0$ is provable in
$\UniHCalcAbsorb{\UniHFLewRinv_\omega}$.
\end{theorem}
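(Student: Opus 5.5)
This is essentially an assembly of the two preceding lemmas, so the plan is to prove the two directions separately and reduce each to one of them. Throughout, $h_0$ is an irredundant hypersequent, viewed via the obvious bijection as an $\omega$-free irredundant $\omega$-hypersequent $(\varnothing;\AntVecA_0)\Ra b_0 \VL \ldots$. By definition, $\UniPSWknAlg{h_0}$ outputs \texttt{yes} exactly when $T_\omega(\mHyperA_0^{\circ})$ contains an $\omega$-eager proof of $h_0$, i.e.\ exactly when $h_0$ has an $\omega$-eager proof. Theorem~\ref{the:complexity-wkn} guarantees that this output is well defined, since the algorithm terminates.

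For the ``if'' direction, assume $h_0$ is provable in $\UniHCalcAbsorb{\UniHFLewRinv_\omega}$. Since $h_0$ is irredundant, Lemma~\ref{lem:completeness-wkn-alg} applies directly and yields an $\omega$-eager proof of $h_0$. More precisely, it gives one inside $T_\omega(\mHyperA)$ for every \lrinhypersequent{} $\mHyperA$ with support $h_0$, and in particular for $\mHyperA_0^{\circ}$. Hence the algorithm finds such a subtree and outputs \texttt{yes}.

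For the ``only if'' direction, assume the algorithm outputs \texttt{yes}, so $T_\omega(\mHyperA_0^{\circ})$ contains an $\omega$-eager proof of $h_0$. The root $h_0$ is $\omega$-free and irredundant, so Lemma~\ref{lem:sound-eager} gives provability of $h_0$ in $\UniHCalcAbsorb{\UniHFLewRinv_\omega}$.

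The only point needing care is that the hypotheses of the two lemmas match those of the theorem. Soundness requires the input to be $\omega$-free, which holds because $h_0$ is an ordinary hypersequent. Both lemmas require irredundancy, which is assumed. The canonical $\mHyperA_0^{\circ}=\UniEnrichHyper{\IdxHypA_0}{\varnothing}$ is an \lrinhypersequent{}, because legality conditions constrain only pairs in the relation, and here the relation is empty. I do not expect a real obstacle here; the substantive work was already done in Lemmas~\ref{lem:completeness-wkn-alg} and~\ref{lem:sound-eager}. Finally, I would add the remark that, combined with Lemmas~\ref{lem:original-calc-omega-calc}, \ref{lem:omega-calc-omega-refined} and~\ref{lem:original-to-invert}, this gives correctness with respect to $\UniHFLewR$ as well.
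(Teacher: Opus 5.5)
Your proposal is correct and matches how the paper obtains this theorem. The paper gives no separate proof: by the definition at the start of Section~\ref{sec:ub-flew-correctness}, ``$h_0$ has an $\omega$-eager proof'' means that $\UniPSWknAlg{h_0}$ returns \texttt{provable}, so the two directions are exactly Lemma~\ref{lem:completeness-wkn-alg} and Lemma~\ref{lem:sound-eager}. Your hypothesis checks (irredundancy, $\omega$-freeness of an ordinary hypersequent, and legality of the canonical root with empty relation) are the right ones to make.
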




From 
Lem~\ref{lem:original-to-invert},
Lem~\ref{lem:original-calc-omega-calc}, Lem~\ref{lem:omega-calc-omega-refined}, 
Thm~\ref{the:wkn-correctness}
and Thm~\ref{the:complexity-wkn}:

\begin{theorem}[main theorem II]
Provability in any extension of
$\UniFLeExtLogic{\UniWProp}$ 
by finitely many $\mathcal{P}_3$-axioms
(i.e, provability in any analytic structural extension of $\UniFLeExtHCalc{\UniWProp}$)
has Ackermannian complexity.
\end{theorem}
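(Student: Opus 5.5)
The statement to prove is Main Theorem II. The plan is to chain the already established equivalences down to the $\omega$-proof search algorithm and then read off the complexity.

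Let $\mathbf{L}$ be an extension of $\UniFLeExtLogic{\UniWProp}$ by finitely many $\mathcal{P}_3$-axioms. By Ciabattoni et al.~\cite{CiaGalTer08}, $\mathbf{L}$ is determined by a cut-free calculus $\UniHFLewR$ for some finite set $\UniAnaRuleSet$ of analytic structural rules. Moreover, computing $\UniAnaRuleSet$ from the axioms is a fixed procedure independent of the input formula, so $\UniAnaRuleSet$ can be treated as a constant.

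Fix an input formula $\UniFmA$. Provability of $\UniFmA$ in $\mathbf{L}$ is then equivalent to provability of the single-component hypersequent $h_0 = (\UniSequent{}{\UniFmA})$ in $\UniHFLewR$. We then pass through the following chain.

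\begin{enumerate}
\item By Lem~\ref{lem:original-to-invert}, provability in $\UniHFLewR$ coincides with provability in $\UniHFLewRinv$.
\item Read $h_0$ as the $\omega$-free $\omega$-hypersequent $(\varnothing;\mathbf{0})\Ra \UniFmA$, with $\UniFmA$ indexed in the enumeration of its subformulas. Lem~\ref{lem:original-calc-omega-calc} transfers provability to $\UniHFLewRinv_\omega$.
\item Lem~\ref{lem:omega-calc-omega-refined} transfers provability further to $\UniHCalcAbsorb{\UniHFLewRinv_\omega}$.
\end{enumerate}

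Since $h_0$ has a single component, it is trivially irredundant. Thm~\ref{the:wkn-correctness} therefore applies: $\UniPSWknAlg{h_0}$ answers positively iff $h_0$ is provable in $\UniHCalcAbsorb{\UniHFLewRinv_\omega}$. Composing these equivalences, $\UniPSWknAlgnoinput$ is a decision procedure for provability in $\mathbf{L}$.

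For the complexity, Thm~\ref{the:complexity-wkn} says that $\UniPSWknAlgnoinput$ runs in time bounded by a function in $\UniFGHOneAppLevel{\omega}$ in $\UniNorm{h_0}{}$, uniformly in the dimension $d$. This uniformity comes through Lem~\ref{lem:nwqo-weakening} and Lem~\ref{lem:omega-coordinate-pres}, which must hold for all $d$ because $d$, the number of subformulas, grows with the input.

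The reduction from $\UniFmA$ to $h_0$ is polynomial: it computes the subformula set and sets up the indexed \lrinhypersequent{} $\mHyperA_0^{\circ}$. Both $d$ and the norm are bounded by the input size. By Thm~\ref{thm:composition-fomega}, composing the $\UniFGHOneAppLevel{\omega}$ bound with these primitive recursive pre- and post-processing steps stays in $\UniFGHOneAppLevel{\omega}$. Hence the problem lies in $\UniFGHProbOneAppLevel{\omega}$, i.e. it has Ackermannian complexity.

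The main obstacle is making sure the bound from Lem~\ref{lem:omega-coordinate-pres} is genuinely uniform in $d$ and in the input size, not merely in $\UniNorm{h_0}{}$. This matters because $\UniNorm{h_0}{}$ is tiny for a single-formula input, while $d$ and the size of the rule instantiations grow with it. I would handle this by viewing the length bound $U(k,d,x)$ of Thm~\ref{the:length-theorem-dickson} as a function of the three arguments $k=2^d(d+1)$, $d$, and $x=\UniNorm{h_0}{}$, each primitive recursive in the input size $|\UniFmA|$. The branching factor $p$ of Lem~\ref{lem:omega-coordinate-pres} should likewise be taken primitive recursive in $d$ as well as in the norm. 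Thm~\ref{thm:composition-fomega} then absorbs all of these into a single $\UniFGHOneAppLevel{\omega}$ function of $|\UniFmA|$.
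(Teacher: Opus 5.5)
Your proposal is correct and follows the paper's argument exactly. The paper also obtains Main Theorem II by chaining Lem~\ref{lem:original-to-invert}, Lem~\ref{lem:original-calc-omega-calc}, Lem~\ref{lem:omega-calc-omega-refined}, Thm~\ref{the:wkn-correctness} and Thm~\ref{the:complexity-wkn}. Your extra care in making the $\UniFGHOneAppLevel{\omega}$ bound explicitly uniform in $d$, via $U(2^d(d+1),d,x)$ and Thm~\ref{thm:composition-fomega}, spells out a point the paper leaves implicit, and that is worthwhile since $\UniNorm{h_0}{}$ alone is $0$ for an input $\UniSequent{}{\UniFmA}$.
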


\begin{corollary}
    $\mathbf{MTL}$
    has Ackermannian complexity.
\end{corollary}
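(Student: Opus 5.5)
The corollary will follow by instantiating main theorem II with the specific calculus for $\mathbf{MTL}$. As recalled in the Communication example, $\mathbf{MTL}$ is determined by $\UniFLeExtHCalc{\UniWProp}(\{com\})$. It is axiomatized over $\UniFLeExtLogic{\UniWProp}$ by prelinearity $(p\imp q)\lor(q\imp p)$, and this is a $\mathcal{P}_3$-axiom. So the first step is simply to check that $(com)$ is an analytic structural rule schema in the sense of Def~\ref{def:ana-struct-rule}, so that main theorem II applies with $\UniAnaRuleSet=\{com\}$.

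For this check I will match $(com)$ against the schema of Def~\ref{def:ana-struct-rule}:
\begin{itemize}
\item Take $I=\{1,2\}$ and $J=L=\varnothing$.
\item Set $Y_1=\UniMSetFmA_1$ and $Y_2=\UniMSetFmA_2$.
\item Set $\alpha_1=\alpha_2=1$, with $X_{11}=\UniMSetFmB_2$ and $X_{21}=\UniMSetFmB_1$.
\item Take $K_i$ a singleton, with $\vec{\Gamma}_{11}=\UniMSetFmB_1$ and $\vec{\Gamma}_{21}=\UniMSetFmB_2$.
\end{itemize}
Two properties then need verifying. The conclusion is linear, since each of $\UniMSetFmA_1,\UniMSetFmA_2,\UniMSetFmB_1,\UniMSetFmB_2$ occurs exactly once there. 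The strong subformula property holds, since every premise variable occurs in the conclusion. The remaining conditions cited from~\cite{revantha2020} are standard for $(com)$, which is the paradigmatic analytic rule of~\cite{CiaGalTer08}, so I will cite rather than reverify them.

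With this in hand, the chain is as follows. Main theorem II says that provability in $\UniFLeExtHCalc{\UniWProp}(\{com\})$ is in $\UniFGHProbOneAppLevel{\omega}$. Provability of a formula $\UniFmA$ in $\mathbf{MTL}$ is, by definition of a calculus for a logic, provability of $\UniSequent{}{\UniFmA}$ in that calculus. This is a single-component, hence irredundant, hypersequent, and it is computable from $\UniFmA$ in polynomial time. Because the $\UniFGHProbOneAppLevel{\omega}$ upper bound is preserved under such a trivial reduction, $\mathbf{MTL}$ provability is Ackermannian.

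I do not expect a real obstacle here. The only point needing care is terminology: ``has Ackermannian complexity'' should be read as membership in $\UniFGHProbOneAppLevel{\omega}$, which is the upper bound the main theorem delivers. No matching lower bound for $\mathbf{MTL}$ is claimed, and I would state that explicitly.
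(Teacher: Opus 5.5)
Your proposal is correct and matches the paper's argument. The paper gives no separate proof: the corollary is an immediate instance of Main Theorem~II with the rule set $\{com\}$, since $\mathbf{MTL}$ is determined by the hypersequent calculus for $\mathbf{FL_{ew}}$ extended with the analytic structural rule $(com)$. Your explicit matching of $(com)$ against Definition~\ref{def:ana-struct-rule}, and your observation that the input $\Rightarrow\varphi$ is an irredundant, $\omega$-free hypersequent, just spell out details the paper leaves implicit.
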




\printbibliography

\end{document}